\renewcommand{\L}{\mathsf{L}}
\newcommand{\N}{\mathbb{N}}
\newcommand{\R}{\mathcal{R}}
\newcommand{\T}{\mathsf{T}}
\newcommand{\eps}{\varepsilon}
\newcommand{\wnd}{\mathrm{wnd}}
\newcommand{\rev}{\mathsf{R}}
\newcommand{\apx}{\mathsf{apx}}
\newcommand{\Conf}{\mathrm{Conf}}
\newcommand{\rConf}{\mathrm{rConf}}
\newcommand{\Rep}{\mathrm{Rep}}
\newcommand{\rep}{\mathsf{rep}}
\newcommand{\Suf}{\mathsf{Suf}}
\newcommand{\AllFlat}{\mathrm{AllFlat}}
\newcommand{\Flat}{\mathrm{Flat}}
\newcommand{\RegFlat}{\mathrm{RegFlat}}
\newcommand{\iml}{\mathsf{iml}}
\newcommand{\out}{\mathsf{out}}
\newcommand{\dom}{\mathrm{dom}}
\DeclareRobustCommand{\cev}[1]{%
  \mathpalette\do@cev{#1}%
}
\newcommand{\do@cev}[2]{%
  \fix@cev{#1}{+}%
  \reflectbox{$\m@th#1\vec{\reflectbox{$\fix@cev{#1}{-}\m@th#1#2\fix@cev{#1}{+}$}}$}%
  \fix@cev{#1}{-}%
}
\newcommand{\fix@cev}[2]{%
  \ifx#1\displaystyle
    \mkern#23mu
  \else
    \ifx#1\textstyle
      \mkern#23mu
    \else
      \ifx#1\scriptstyle
        \mkern#22mu
      \else
        \mkern#22mu
      \fi
    \fi
  \fi
}
\title{Visibly Pushdown Languages over Sliding Windows}
\author{Moses Ganardi}{Universit\"at Siegen, Germany}{ganardi@eti.uni-siegen.de}{}{}
\authorrunning{Moses Ganardi}
\keywords{visibly pushdown languages, sliding windows, rational transductions}
\begin{document}

\maketitle

\begin{abstract}
	We investigate the class of visibly pushdown languages in the sliding window model.
	A sliding window algorithm for a language $L$ receives a stream of symbols and
	has to decide at each time step whether the suffix of length $n$ belongs to $L$ or not.
	The window size $n$ is either a fixed number (in the fixed-size model)
	or can be controlled by an adversary in a limited way (in the variable-size model).
	The main result of this paper states that for every visibly pushdown language
	the space complexity in the variable-size sliding window model
	is either constant, logarithmic or linear in the window size.
	This extends previous results for regular languages.
\end{abstract}

\section{Introduction}

\subparagraph{The sliding window model.}
A {\em sliding window algorithm (SWA)} is an algorithm which processes a stream of data elements $a_1 a_2 a_3 \cdots$
and computes at each time instant $t$ a certain value that depends on the suffix $a_{t-n+1} \cdots a_t$ of length $n$
where $n$ is a parameter called the {\em window size}.
This streaming model is motivated by the fact that in many applications data elements are outdated or become irrelevant after a certain time.
A general goal in the area of sliding window algorithms is to avoid storing the window content explicitly (which requires $\Omega(n)$ bits)
and to design space efficient algorithms, say using polylogarithmic many bits in the window size $n$.

A prototypical example of a problem considered in the sliding window model is the \textsc{Basic Counting} problem.
Here the input is a stream of bits and the task is to approximate the number of 1's in the last $n$ bits (the {\em active window}).
In \cite{DatarGIM02}, Datar, Gionis, Indyk and Motwani present an approximation algorithm using $O(\frac{1}{\epsilon} \log^2 n)$ bits of space
with an approximation ratio of $\epsilon$.
They also prove a matching lower bound of $\Omega(\frac{1}{\epsilon} \log^2 n)$ bits for any deterministic (and even randomized) algorithm
for \textsc{Basic Counting}.
Other works in the sliding window model include computing statistics \cite{ArasuM04,BabcockDMO03,BravermanO07},
optimal sampling \cite{BravermanOZ12}
and various pattern matching problems 
\cite{BreslauerG14,CliffordFPSS15,CliffordFPSS16,CliffordS16}.

There are two variants of the sliding window model, cf. \cite{ArasuM04}.
One can think of an adversary who can either insert a new element into the window
or remove the oldest element from the window.
In the {\em fixed-size} sliding window model the adversary determines the window size $n$ in the beginning
and the initial window is set to $a^n$ for some default known element $a$.
At every time step the adversary inserts a new symbol and then immediately removes the oldest element from the window.
In the {\em variable-size} sliding window model the window size is initially set to $n = 0$.
Then the adversary is allowed to perform an arbitrary sequence of insert- and remove-operations.
A remove-operation on an empty window leaves the window empty.
We also mention the timestamp-based model where every element carries a timestamp
(many elements may have the same timestamp)
and the active window at time $t$ contains only those elements whose timestamp
is at least $t-t_0$ for some parameter $t_0$ \cite{BravermanOZ12}.
Both the fixed-size and the timestamp-based model can be simulated in the variable-size model.

\subparagraph{Regular languages.}
In a recent series of works we studied the membership problem to a fixed regular language in the sliding window model.
It was shown in \cite{GanardiHL16} that in both the fixed-size and the variable-size sliding window model
the space complexity of any regular language is either constant, logarithmic or linear (a {\em space trichotomy}).
In a subsequent paper \cite{GHKLM18} a characterization of the space classes was given:
A regular language has a fixed/variable-size SWA with $O(\log n)$ bits if and only if
it is a finite Boolean combination of regular left ideals and regular length languages.
A regular language has a fixed-size SWA with $O(1)$ bits if and only if
it is a finite Boolean combination of suffix testable languages and regular length languages.
A regular language has a variable-size SWA with $O(1)$ bits if and only if
it is empty or universal.

\subparagraph{Context-free languages.}
A natural question is whether the results above can be extended to larger language classes,
say subclasses of the context-free languages.
More precisely, we pose the questions:
(i) Which language classes have a ``simple'' hierarchy of space complexity classes
(like the space trichotomy for the regular languages),
and (ii) are there natural descriptions of the space classes?
A positive answer to question (i) seems to be necessary to answer question (ii) positively.
In \cite{GJL18} we presented a family of context-free languages $(L_k)_{k \ge 1}$
which have space complexity $\Theta(n^{1/k})$ in the variable-size model
and $O(n^{1/k}) \setminus o(n^{1/k})$ in the fixed-size model,
showing that there exists an infinite hierarchy of space complexity classes
inside the class of context-free languages.
Intuitively, this result can be explained with the fact that a language and its complement have the same
sliding window space complexity;
however, the class of context-free languages is not closed under complementation (in contrast to the regular languages)
and the analysis of co-context-free languages in this setting seems to be very difficult.
Even in the class of deterministic context-free languages, which is closed under complementation,
there are example languages which have sliding window space complexity $\Theta((\log n)^2)$ \cite{GJL18}.

\subparagraph{Visibly pushdown languages.}
Motivated by these observations in this paper we will study the class of {\em visibly pushdown languages},
introduced by Alur and Madhusudan \cite{AlurM04}.
They are recognized by {\em visibly pushdown automata} where the alphabet is partitioned into {\em call letters}, {\em return letters}
and {\em internal letters}, which determine the behavior of the stack height.
Since visibly pushdown automata can be determinized,
the class of visibly pushdown languages turns out to be very robust
(it is closed under Boolean operations and other language operations)
and to be more tractable in many algorithmic questions than the class of context-free languages \cite{AlurM04}.
In this paper we prove a space trichotomy for the class of visibly pushdown languages
in the variable-size sliding window model, stating that the space complexity of every visibly pushdown language
is either $O(1)$, $\Theta(\log n)$ or $O(n) \setminus o(n)$.
The main technical result is a growth theorem (Theorem~\ref{thm:rational-dichotomy}) for rational transductions.
A natural characterization of the $O(\log n)$-class
as well as a study of the fixed-size model are left as open problems.

Let us mention some related work in the context of streaming algorithms for context-free languages.
Randomized streaming algorithms were studied for subclasses of context-free languages (DLIN and LL$(k)$) \cite{BabuLRV13}
and for Dyck languages \cite{MagniezMN14}.
A streaming property tester for visibly pushdown languages was presented by Fran{\c c}ois et al. \cite{FrancoisMRS16}.

\section{Preliminaries}

We define $\log n = \lfloor \log_2 n \rfloor$ for all $n \ge 1$,
which is the minimum number $k$ of bits
required to encode $n$ elements using bit strings of length {\em at most} $k$.
If $w = a_1 \cdots a_n$ is a word then any word of the form $a_i \cdots a_n$ ($a_1 \cdots a_i$)
is called {\em suffix (prefix)} of $w$.
A prefix (suffix) $v$ of $w$ is {\em proper} if $v \neq w$.
A {\em factor} of $w$ is any word of the form $a_i \cdots a_j$.
A {\em factorization} of $w$ is formally a sequence of possibly empty factors $(w_0, \dots, w_m)$
with $w = w_0 \cdots w_m$.
We call $w_0$ the {\em initial} factor and $w_1, \dots, w_m$ the {\em internal} factors.
The {\em reversal} of $w$ is $w^\rev = a_n a_{n-1} \cdots a_1$.
For a language $L \subseteq \Sigma^*$ we denote by $\Suf(L)$ the set of suffixes of words in $L$.
If $L = \Suf(L)$ then $L$ is {\em suffix-closed}.

\subparagraph{Automata.}
An {\em automaton} over a monoid $M$ is a tuple $A = (Q,M,I,\Delta,F)$
where $Q$ is a finite set of {\em states}, $I \subseteq Q$ is a set of {\em initial states},
$\Delta \subseteq Q \times M \times Q$ is the {\em transition relation} and
$F \subseteq Q$ is the set of {\em final states}.
A {\em run} on $m \in M$ from $q_0$ to $q_n$ is a sequence of transitions of the form
$\pi = (q_0,m_1,q_1)(q_1,m_2,q_2)\cdots (q_{n-1},m_n,q_n) \in \Delta^*$
such that $m = m_1 \cdots m_n$.
We usually depict $\pi$ as
$q_0 \xrightarrow{m_1} q_1 \xrightarrow{m_2} q_2 \, \cdots \, q_{n-1} \xrightarrow{m_n} q_n$,
or simply $q_0 \xrightarrow{m} q_n$.
It is {\em initial} if $q_0 \in I$ and {\em accepting} if $q_n \in F$.
The {\em language} defined by $A$ is the set $\L(A)$ of all elements $m \in M$
such that there exists an initial accepting run on $m$.
A subset $L \subseteq M$ is {\em rational} if $L = \L(A)$ for some automaton $A$.
We only need the case where $M$ is the free monoid $\Sigma^*$ over an alphabet $\Sigma$
or where $M$ is the product $\Sigma^* \times \Omega^*$ of two free monoids.
In these cases we change the format and write $(Q,\Sigma,I,\Delta,F)$
and $(Q,\Sigma,\Omega,I,\Delta,F)$, respectively.
Subsets of $\Sigma^*$ are called {\em languages} and subsets of $\Sigma^* \times \Omega^*$
are called {\em transductions}.
Rational languages are usually called {\em regular languages}.

In this paper we will also use {\em right automata}, which read the input from right to left.
Formally, a right automaton $A = (Q,M,F,\Delta,I)$ has the same format
as a (left) automaton where the sets of initial and final states are swapped.
Runs in right automata are defined from right to left,
i.e. a run on $m \in M$ from $q_n$ to $q_0$ is a sequence of transitions of the form
$(q_0,m_1,q_1)(q_1,m_2,q_2)\cdots (q_{n-1},m_n,q_n) \in \Delta^*$
such that $m = m_1 \cdots m_n$.
In the graphic notation we write the arrows from right to left.
It is initial (accepting) if $q_n \in I$ ($q_0 \in F$).

\subparagraph{Right congruences.}
For any equivalence relation $\sim$ on a set $X$ we write $[x]_\sim$ for the $\sim$-class containing $x \in X$
and $X/{\sim} = \{[x]_\sim \mid x \in X\}$ for the set of all $\sim$-classes.
The {\em index} of $\sim$ is the cardinality of $X/{\sim}$.
We denote by $\nu_\sim \colon X \to X/{\sim}$ the function with $\nu_\sim(x) = [x]_\sim$.
A subset $L \subseteq X$ is {\em saturated} by $\sim$ if $L$ is a union of $\sim$-classes.
An equivalence relation $\sim$ on the free monoid $\Sigma^*$ over some alphabet $\Sigma$
is a {\em right congruence}
if $x \sim y$ implies $xz \sim yz$ for all $x,y,z \in \Sigma^*$.
The {\em Myhill-Nerode right congruence} $\sim_L$ of a language $L \subseteq \Sigma^*$ is
the equivalence relation on $\Sigma^*$ defined by $x \sim_L y$ if and only if
$x^{-1} L = y^{-1} L$ where $x^{-1}L = \{ z \mid xz \in L \}$.
It is indeed the coarsest right congruence on $\Sigma^*$ which saturates $L$.
We usually write $\nu_L$ instead of $\nu_{\sim_L}$.
A language $L \subseteq \Sigma^*$ is regular iff $\sim_L$ has finite index.

\subparagraph{Rational transductions.}
Rational transductions are accepted by automata over $\Sigma^* \times \Omega^*$,
which are called finite state transducers.
In this paper, we will use a slightly extended but equivalent definition.
A {\em transducer} is a tuple $A = (Q,\Sigma,\Omega,I,\Delta,F,o)$
such that $(Q,\Sigma^* \times \Omega^*,I,\Delta,F)$ is an automaton over $\Sigma^* \times \Omega^*$
and a {\em terminal output function} $o \colon F \to \Omega^*$.
To omit parentheses we write runs $p \xrightarrow{(x,y)} q$ in the form $p \xrightarrow{x \mid y} q$
and depict $o(q) = y$ by a transition $q \xrightarrow{\mid y}$ without input word and target state.
If $\pi$ is a run $p \xrightarrow{x \mid y} q$ we define $\out(\pi) = y$
and $\out_F(\pi) = y \, o(q)$.
The transduction defined by $A$ is the set $\T(A)$ of all pairs $(x,\out_F(\pi))$
such that $\pi$ is an initial accepting run $p \xrightarrow{x \mid y} q$.
Since the terminal output function can be eliminated by $\eps$-transitions,
a transduction is rational if and only if it is of the form $\T(A)$ for some transducer $A$.
In this paper we will mainly use {\em rational functions}, which are
partial functions $t \colon \Sigma^* \to \Omega^*$ whose graph $\{(x,t(x)) \mid x \in \dom(t) \}$
is a rational transduction.

A transducer $A$ is {\em trim} if every state occurs on some accepting run.
If every word $x \in \Sigma^*$ has at most one initial accepting run
$p \xrightarrow{x \mid y} q$ for some $y \in \Omega^*$
then $A$ is {\em unambiguous}.
If $\Delta \subseteq Q \times \Sigma \times \Omega^* \times Q$
then $A$ is {\em real-time}.
It is known that every rational function is defined by a trim unambiguous real-time transducer \cite[Corollary~4.3]{Berstel79}.
If $A$ is unambiguous and trim then for every word $x \in \Sigma^*$ and every pair of states $(p,q) \in Q^2$ 
there exists at most one run from $p$ to $q$ with input word $x$.
Therefore, the state pair $(p,q)$ and the input word $x$ uniquely determine the run (if it exists)
and we can simply write $p \xrightarrow{x} q$.
	Similarly to \cite{WeberK95}, we define for a real-time transducer $A$ the parameter
	$\iml(A) = \max \left(\{ |y| \mid (q,a,y,p) \in \Delta \} \cup \{ |o(q)| \mid q \in Q \} \right)$.
	For every run $\pi$ on a word $x \in \Sigma^*$
	we have $|\out(\pi)| \le \iml(A) \cdot |x|$
	and $|\out_F(\pi)| \le \iml(A) \cdot (|x|+1)$.

The following closure properties for rational transductions are known \cite{Berstel79}:
The class of rational transductions is closed under inverse, reversal and composition
where the {\em inverse} of $T$ is $T^{-1} = \{ (y,x) \mid (x,y) \in T \}$,
the {\em reversal} of $T$ is $T^{\rev} = \{ (x^\rev,y^\rev) \mid (x,y) \in T \}$,
and the composition of two transductions $T_1,T_2$ is
$T_1 \circ T_2 = \{ (x,z) \mid \exists y: (x,y) \in T_1 \text{ and } (y,z) \in T_2 \}$.
If $T \subseteq \Sigma^* \times \Omega^*$ is rational and $L \subseteq \Sigma^*$ is regular
then the restriction $\{ (x,y) \in T \mid x \in L \}$ is also rational.
If $K \subseteq \Sigma^*$ is regular (context-free) and $T \subseteq \Sigma^* \times \Omega^*$ is rational
then $TK = \{ y \in \Omega^* \mid (x,y) \in T \text{ for some } x \in K \}$ 
is also regular (context-free).

A {\em right transducer} is a tuple $A = (Q,\Sigma,\Omega,F,\Delta,I,o)$
such that $(Q,\Sigma^* \times \Omega^*,F,\Delta,I)$ is a right automaton over $\Sigma^* \times \Omega^*$
and a {\em terminal output function} $o \colon F \to \Omega^*$.
We depict $o(q) = y$ by a transition $\xleftarrow{\mid y} q$.
If $\pi$ is a run $q \xleftarrow{x \mid y} p$ we define $\out(\pi) = y$
and $\out_F(\pi) = o(q) \, y$.
All other notions on transducers are defined for right transducers in a dual way.

\subparagraph{Growth functions.}

A function $\gamma \colon \N \to \N$ grows {\em polynomially}
if $\gamma(n) \in O(n^k)$ for some $k \in \N$;
we say that $\gamma$ grows {\em exponentially}
if there exists a number $c > 1$ such that $\gamma(n) \ge c^n$ for infinitely many $n \in \N$.
A function $\gamma(n)$ grows exponentially if and only if $\log \gamma(n) \notin o(n)$.

We will define a generalized notion of growth.
Let $t \colon \Sigma^* \to Y$ be a partial function
and let $X \subseteq \dom(t)$ be a language.
The {\em $t$-growth} of $X$
is the function $\gamma(n) = |t(X \cap \Sigma^{\le n})|$,
i.e. it counts the number of output elements on input words from $X$ of length at most $n$.
The {\em growth} of $X$ is simply the $\mathrm{id}_X$-growth of $X$,
i.e. $\gamma(n) = |X \cap \Sigma^{\le n}|$.
It is known that every context-free language has either polynomial or exponential growth \cite{Ginsburg1966}.
Furthermore, a context-free language $L$ has polynomial growth if and only if
it is {\em bounded}, i.e. $L \subseteq w_1^* \cdots w_k^*$ for some words $w_1, \dots, w_k$ \cite{Ginsburg1966}.
We need the fact that if $L$ is a bounded language
and $K$ is a set of factors of words in $L$ then $K$ is bounded \cite[Lemma~1.1(c)]{GS1964}.

\section{Visibly pushdown languages}

A {\em pushdown alphabet} is a triple $\tilde \Sigma = (\Sigma_c,\Sigma_r,\Sigma_{\mathit{int}})$
consisting of three pairwise disjoint alphabets:
a set of {\em call letters} $\Sigma_c$, a set of {\em return letters} $\Sigma_r$
and a set of {\em internal letters} $\Sigma_{\mathit{int}}$.
We identify $\tilde \Sigma$ with the union $\Sigma = \Sigma_c \cup \Sigma_r \cup \Sigma_{\mathit{int}}$.
The set of {\em well-matched} words $W$ over $\Sigma$ is defined as the smallest set which contains $\{\eps\} \cup \Sigma_{\mathit{int}}$, is closed under concatenation, and if $w$ is well-matched,
$a \in \Sigma_c$, $b \in \Sigma_r$ then also $awb$ is well-matched.
A word is called {\em descending (ascending)}
if it can be factorized into well-matched factors and return (call) letters.
The set of descending words is denoted by $D$.
A {\em visibly pushdown automaton (VPA)} has the form $A = (Q,\tilde \Sigma,\Gamma,\bot,q_0,\delta, F)$
where $Q$ is a finite state set, $\tilde \Sigma$ is a pushdown alphabet, $\Gamma$ is the finite stack alphabet
containing a special symbol $\bot$ (representing the empty stack), $q_0 \in Q$ is the initial state, $F \subseteq Q$ is the set of final states and $\delta = \delta_c \cup \delta_r \cup \delta_{\mathit{int}}$ is the transition function
where $\delta_c \colon Q \times \Sigma_c \to (\Gamma \setminus \{\bot\}) \times Q$,
$\delta_r \colon Q \times \Sigma_r \times \Gamma \to Q$
and $\delta_{\mathit{int}} \colon Q \times \Sigma_{\mathit{int}} \to Q$.
The set of {\em configurations} $\Conf$ is the set of all words $\alpha q$
where $q \in Q$ is a state and $\alpha \in  \bot (\Gamma \setminus \{\bot\})^*$
is the {\em stack content}.
We define $\delta \colon \Conf \times \Sigma \to \Conf$
for each $p \in Q$ and $a \in \Sigma$ as follows:
\begin{itemize}
	\item If $a \in \Sigma_c$ and $\delta(p,a) = (\gamma,q)$
	then $\delta(\alpha p,a) = \alpha \gamma q$.
	\item If $a \in \Sigma_{\mathit{int}}$ and $\delta(p,a) = q$
	then $\delta(\alpha p, a) = \alpha q$.
	\item If $a \in \Sigma_r$, $\delta(p,a,\gamma) = q$ and $\gamma \in \Gamma \setminus \{\bot\}$
	then $\delta(\alpha \gamma p, a) = \alpha q$.
	\item If $a \in \Sigma_r$ and $\delta(p,a,\bot) = q$
	then $\delta(\bot p) = \bot q$.
\end{itemize}
As usual we inductively extend $\delta$ to a function $\delta \colon \Conf \times \Sigma^* \to \Conf$
where $\delta(c,\eps) = c$ and $\delta(c,wa) = \delta(\delta(c,w),a)$ for all $w \in \Sigma^*$ and $a \in \Sigma$.
The {\em initial} configuration is $\bot q_0$ and a configuration $c$ is {\em final} if $c \in \Gamma^* F$.
A word $w \in \Sigma^*$ is {\em accepted} from a configuration $c$
if $\delta(c,w)$ is final.
The VPA $A$ {\em accepts} $w$ if $w$ is accepted from the initial configuration.
The set of all words accepted by $A$ is denoted by $\L(A)$;
the set of all words accepted from $c$ is denoted by $\L(c)$.
A language $L$ is a {\em visibly pushdown language (VPL)} if $L = \L(A)$ for some VPA $A$.
To exclude some pathological cases we assume that $\Sigma_c \neq \emptyset$ and $\Sigma_r \neq \emptyset$.
In fact, if $\Sigma_c = \emptyset$ or $\Sigma_r = \emptyset$ then any VPL over that pushdown alphabet would be regular.

One can also define nondeterministic visibly pushdown automata in the usual way,
which can always be converted into deterministic ones \cite{AlurM04}.
This leads to good closure properties of the class of all VPLs,
as closure under Boolean operations, concatenation and Kleene star.

The set $W$ of well-matched words forms a submonoid of $\Sigma^*$.
Notice that a VPA can only see the top of the stack when reading return symbols.
Therefore, the behavior of a VPA on a well-matched word
is determined only by the current state and independent of the current stack content.
More precisely, there exists a monoid homomorphism $\varphi \colon W \to Q^Q$ into the finite monoid
of all state transformations $Q \to Q$
such that $\delta(\alpha p, w) = \alpha \varphi(w)(p)$ for all $w \in W$ and $\alpha p \in \Conf$.

\section{Sliding window algorithms and main results}

In our context a {\em streaming algorithm} is a deterministic algorithm $A$
which reads an input word $a_1 \cdots a_m \in \Sigma^*$ symbol by symbol from left to right
and outputs after every prefix either $1$ or $0$.
We view $A$ as a deterministic (possibly infinite) automaton
whose states are encoded by bit strings and thus abstract away from the actual computation,
see \cite{GHKLM18} for a formal definition.
A {\em variable-size sliding window algorithm} for a language $L \subseteq \Sigma^*$ is a streaming algorithm $A$
which reads an input word $a_1 \cdots a_m$ over the extended alphabet $\overline{\Sigma} = \Sigma \cup \{ \downarrow \}$.
The symbol $\downarrow$ is the operation which removes the oldest symbol from the window.
At time $0 \le t \le m$ the algorithm has to decide whether the {\em active window} $\wnd(a_1 \cdots a_t)$ belongs to $L$
which is defined by
\begin{alignat*}{2}
\wnd(\varepsilon) & = \varepsilon  & \qquad     \wnd(u \! \downarrow) & = \varepsilon  \text{ if } \wnd(u) = \varepsilon \\
\wnd(ua) & = \wnd(u) a         &   \wnd(u \! \downarrow) & = v \text{ if } \wnd(u) = av
\end{alignat*}
for $u \in \Sigma^*$, $a \in \Sigma$.
For example, a variable-size sliding window algorithm $A$ for the language
$L_a = \{ w \in \{a,b\}^* \mid w \text{ contains } a \}$
maintains the window length $n$ and the position $i$ (from the right)
of the most recent $a$-symbol in the window (if it exists):
We initialize $n := 0$ and $i := \infty$.
On input $a$ we increment $n$ and set $i := 1$,
on input $b$ we increment both $n$ and $i$.
On input $\downarrow$ we decrement $n$, unless $n = 0$,
and then set $i := \infty$ if $i > n$.

The {\em space complexity} of $A$ is the function which maps $n$ to the maximum number of bits used
when reading an input $a_1 \cdots a_m$ where the window size never exceeds $n$, i.e. $|\wnd(a_1 \cdots a_t)| \le n$
for all $0 \le t \le n$.
Notice that this function is monotonic.
For every language $L$ there exists a space optimal variable-size sliding window algorithm \cite[Lemma~3.1]{GHKLM18a}
and we write $V_L(n)$ for its space complexity.
Clearly we have $V_L(n) \in O(n)$.
For example the example language $L_a$ above satisfies $V_{L_a}(n) \in O(\log n)$
because the algorithm above only maintains two numbers using $O(\log n)$ bits.
The main result of this paper states:

\begin{theorem}[Trichotomy for VPL]
	\label{thm:vpl-trichotomy}
	If $L$ is a visibly pushdown language then $V_L(n)$ is either $O(1)$, $\Theta(\log n)$ or
	$O(n) \setminus o(n)$.
\end{theorem}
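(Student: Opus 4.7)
I would prove Theorem~\ref{thm:vpl-trichotomy} by reducing it to the rational transduction growth dichotomy (Theorem~\ref{thm:rational-dichotomy}) via a rational function whose image size controls $V_L(n)$. Following the template developed for regular languages, I first introduce the sliding-window equivalence $\equiv_L$ on $\Sigma^*$ defined by $u \equiv_L v$ iff $\wnd(u\sigma) \in L \Leftrightarrow \wnd(v\sigma) \in L$ holds for every operation sequence $\sigma \in (\Sigma \cup \{\downarrow\})^*$. A standard information-theoretic argument shows that a space-optimal SWA must distinguish inequivalent windows and can be arranged to store only the class, so $V_L(n) = \Theta(\log |\Sigma^{\le n}/{\equiv_L}|)$ up to additive constants. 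The theorem reduces to bounding this class count.

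Next, I would encode $\equiv_L$ as the kernel of a rational function $t \colon \Sigma^* \to Y$. Fix a deterministic VPA $A = (Q, \tilde\Sigma, \Gamma, \bot, q_0, \delta, F)$ for $L$. For a window $w$ with $\delta(\bot q_0, w) = \alpha_w p_w$, future insertions depend only on $(\alpha_w, p_w)$, while future deletions additionally depend on how peeling off leftmost symbols reshapes the configuration. The essential compression is that every well-matched infix of $w$ acts on $A$ through the finite monoid $\varphi \colon W \to Q^Q$ of state transformations. Decomposing $w$ along its minimum-height stack profile into well-matched blocks separated by unmatched call and return letters, I would build a right transducer that emits, right-to-left, the sequence of per-block $\varphi$-summaries together with the unmatched-letter skeleton; this output $t(w)$ determines the $\equiv_L$-class of $w$.

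Applying Theorem~\ref{thm:rational-dichotomy} to $t$ yields that $|t(\Sigma^{\le n})|$ is either polynomially bounded or grows exponentially in $n$. In the polynomial case $V_L(n) \in O(\log n)$, and a finite-index-versus-infinite-index argument on $\equiv_L$ (arguing, as in the regular case, that $\equiv_L$ cannot have index strictly between constant and unbounded polynomial) splits this further into $O(1)$ or $\Theta(\log n)$. In the exponential case the class-count lower bound gives $V_L(n) \notin o(n)$, while the trivial "store the window verbatim" algorithm yields $V_L(n) \in O(n)$, so $V_L(n) \in O(n) \setminus o(n)$.

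The main obstacle is Step~2: constructing $t$ that is simultaneously rational-transducer-computable, invariant under $\equiv_L$, and injective modulo $\equiv_L$. The VPA stack $\alpha_w$ can be linear in $|w|$, so storing it verbatim is not rational; conversely, collapsing too aggressively loses deletion-sensitive information, because a future deletion of the leftmost symbol of $w$ can expose unmatched letters that were previously hidden inside a well-matched block and so reshape the minimum-height decomposition in a nonlocal way. Navigating this tension using the well-matched/unmatched decomposition and verifying that the deletion operator acts on $t(w)$ in a rationally-trackable fashion is where the real technical work lies; Theorem~\ref{thm:rational-dichotomy} then closes the argument.
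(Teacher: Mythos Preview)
Your plan has two genuine gaps, both in Step~2, and a misreading of Theorem~\ref{thm:rational-dichotomy}.

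First, the transducer you sketch is not rational. Emitting ``per-block $\varphi$-summaries'' means computing $\varphi(v)\in Q^Q$ for each maximal well-matched factor $v$; but $\varphi$ is exactly the VPA's state transformation on well-matched words, and computing it requires a stack (consider $v=a^n c\, b^n$: a finite-state device reading right-to-left cannot recover $\varphi(v)$ without counting). This is precisely why the paper abandons $\Sigma^*$ as the input domain and passes to \emph{flattenings}, where the $\varphi$-summaries are already part of the input alphabet $\Sigma_f$; only then is the Myhill--Nerode representative $\nu_f$ computable by a finite transducer (Lemma~\ref{lem:t-f-rat}). Second, even granting rationality, your $t(w)$ does not determine the $\equiv_L$-class: for $w=avb$ well-matched you would output the single element $\varphi(avb)$, but deletion of the leftmost $a$ exposes the suffix $vb$, whose behaviour depends on $\varphi(v)$, which is not recoverable from $\varphi(avb)$. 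So $t$ is strictly finer than you need in some places and strictly coarser in others, and neither direction of the dichotomy transfers.

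Third, Theorem~\ref{thm:rational-dichotomy} is not a dichotomy for $|t(\Sigma^{\le n})|$; it is a dichotomy for the \emph{suffix-expansion} growth $|\cev{t}(X\cap\Sigma^{\le n})|$. The paper exploits this structure: it does \emph{not} build a single function whose kernel is $\approx_L$. Instead it builds a rational $\nu_f$ whose kernel is (the pullback of) the ordinary Myhill--Nerode congruence $\sim_L$, and lets the suffix-expansion operator $\cev{~}$ supplied by Theorem~\ref{thm:rational-dichotomy} produce the ``all suffixes'' information. The price of changing the domain is that $\Flat$ is only context-free, so Theorem~\ref{thm:rational-dichotomy} cannot be applied directly; the paper handles this by first disposing of the case where $S_0$ is unbounded (Lemma~\ref{lem:S-unbounded}), then building a bounded regular overapproximation $\RegFlat\supseteq\Flat$ and proving transfer lemmas (Propositions~\ref{prop:flat-reg-eq} and~\ref{prop:transf-growth}) that carry both the polynomial upper bound and the exponential lower bound back to $\Flat$. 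That change of domain plus the overapproximation-and-transfer step is the missing idea in your outline.
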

In the rest of this section we will give an overview of the proof of Theorem~\ref{thm:vpl-trichotomy}.

\subparagraph{Suffix expansions.}

Let $\sim$ be an equivalence relation on $\Sigma^*$.
The {\em suffix expansion} of $\sim$ is the equivalence relation $\approx$ on $\Sigma^*$
defined by $a_1 \cdots a_n \approx b_1 \cdots b_m$ if and only if $n = m$ and
$a_i \cdots a_n \sim b_i \cdots b_n$ for all $1 \le i \le n$.
Notice that $\approx$ saturates each subset $\Sigma^{\le n}$.
Furthermore, if $\sim$ is a right congruence then so is $\approx$
since $|u| = |v|$ implies $|ua| = |va|$
and $a_i \cdots a_n \sim b_i \cdots b_n$ implies $a_i \cdots a_na \sim b_i \cdots b_na$.
We also define suffix expansions for partial functions $t \colon \Sigma^* \to Y$
with suffix-closed domain $\dom(t)$.
The {\em suffix expansion} of $t$
is the total function $\cev{t} \colon \dom(t) \to Y^*$ defined by
$\cev{t}(a_1 \cdots a_n) = t(a_1 \cdots a_n) \, t(a_2 \cdots a_n) \, \cdots \, t(a_{n-1} a_n) \, t(a_n)$
for all $a_1 \cdots a_n \in \Sigma^*$.
Here the range of $\cev{t}$ is the free monoid (alternatively, the set of all sequences) over $Y$.
If $\sim$ is an equivalence relation on $\Sigma^*$ then its suffix expansion $\approx$ is the {\em kernel}
of $\cev{\nu}_\sim$, i.e. $x \approx y$ if and only if $\cev{\nu}_\sim(x) = \cev{\nu}_\sim(y)$.
The space complexity in the variable-size model is captured by
the suffix expansion $\approx_L$ of the Myhill-Nerode right congruence $\sim_L$ or
alternatively by the suffix expansion $\cev{\nu}_L$ of $\nu_L$.

\begin{theorem}[{\cite[Theorem~4.1]{GHKLM18}}]
	\label{thm:psiL}
	For all $\emptyset \subsetneq L \subsetneq \Sigma^*$ we have
	$V_L(n) = \log |\Sigma^{\le n}/{\approx_L}| = \log |\cev{\nu}_L(\Sigma^{\le n})|$.
	In particular, $V_L(n) = \Omega(\log n)$ for every non-trivial language.
\end{theorem}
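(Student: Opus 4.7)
The second equality $\log |\Sigma^{\le n}/{\approx_L}| = \log |\cev{\nu}_L(\Sigma^{\le n})|$ is immediate, since the paragraph preceding the theorem identifies $\approx_L$ as the kernel of $\cev{\nu}_L$, so the two sets are in natural bijection. It thus suffices to prove $V_L(n) = \log |\cev{\nu}_L(\Sigma^{\le n})|$ together with the $\Omega(\log n)$ bound.

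For the upper bound $V_L(n) \le \log |\cev{\nu}_L(\Sigma^{\le n})|$, I will exhibit a canonical variable-size SWA whose state after processing a stream $u \in \overline{\Sigma}^*$ is exactly $\cev{\nu}_L(\wnd(u))$. Writing the state as a sequence $(c_1, \dots, c_k)$ with $c_i = [w_i \cdots w_k]_L$ for $\wnd(u) = w_1 \cdots w_k$, the transitions are well-defined: on input $a \in \Sigma$ the new state is $(c_1 \cdot [a]_L, \dots, c_k \cdot [a]_L, [a]_L)$, using that $\sim_L$ is a right congruence; on input $\downarrow$ the new state is $(c_2, \dots, c_k)$, or remains empty when $k = 0$. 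Membership of $\wnd(u)$ in $L$ is decided by checking whether $c_1 \in \nu_L(L)$ (or whether $[\eps]_L \in \nu_L(L)$ when the state is empty). Since the window length never exceeds $n$, the state always lies in $\cev{\nu}_L(\Sigma^{\le n})$, yielding the upper bound.

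For the matching lower bound, the plan is to show that any two windows $u, v \in \Sigma^{\le n}$ with $\cev{\nu}_L(u) \neq \cev{\nu}_L(v)$ are separated by some adversarial continuation in $\overline{\Sigma}^*$, forcing any SWA for $L$ to maintain at least $|\cev{\nu}_L(\Sigma^{\le n})|$ distinct internal states. When $|u| = |v|$ and the two sequences disagree at position $i$, I apply $i-1$ deletions to expose suffixes $u_i \cdots u_{|u|} \not\sim_L v_i \cdots v_{|v|}$ and then append a word $z$ witnessing the $\sim_L$-distinction. When $|u| < |v|$, applying $|u|$ deletions leaves the first window empty and the second equal to some $v' \neq \eps$; non-triviality of $L$ supplies a letter $b$ with $b \not\sim_L \eps$ (otherwise every letter would be $\sim_L \eps$, which by the right congruence property forces $\sim_L$ to be trivial and $L \in \{\emptyset, \Sigma^*\}$), and the adversary uses $b$ together with further deletions to propagate the asymmetry into the output. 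The main obstacle is the sub-case where $v' \sim_L \eps$ and all suffixes of $v'$ are also $\sim_L$-equivalent to $\eps$: here the current output does not distinguish the two configurations, so one must first append $b$, then issue deletions that shift $b$ rightwards through the window of $v'$ while immediately emptying the window of $\eps$, and finally append a word separating $\eps$ from $b$ under $\sim_L$.

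Finally, since $\approx_L$ preserves length, words of distinct lengths in $\Sigma^{\le n}$ are never $\approx_L$-equivalent; as $\Sigma^{\le n}$ contains at least one word of each length $0, 1, \dots, n$, we conclude $|\Sigma^{\le n}/{\approx_L}| \ge n+1$, whence $V_L(n) \ge \log(n+1) = \Omega(\log n)$.
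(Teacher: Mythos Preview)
The paper does not supply its own proof of this theorem; it is quoted verbatim from \cite[Theorem~4.1]{GHKLM18} and used as a black box throughout. There is therefore nothing in the present paper to compare your argument against.

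That said, your reconstruction is sound and follows the expected line. Two minor remarks. First, for the upper bound you exhibit an automaton whose reachable states on windows of length at most $n$ form $\cev{\nu}_L(\Sigma^{\le n})$, but to conclude the exact inequality $V_L(n) \le \log |\cev{\nu}_L(\Sigma^{\le n})|$ you also need a \emph{single} bit-string encoding of the state space that uses at most $\log |\cev{\nu}_L(\Sigma^{\le n})|$ bits on $\cev{\nu}_L(\Sigma^{\le n})$ for \emph{every} $n$ simultaneously; this follows from the nesting $\cev{\nu}_L(\Sigma^{\le n}) \subseteq \cev{\nu}_L(\Sigma^{\le n+1})$ by enumerating states in order of the smallest $n$ at which they appear, and is essentially what the paper delegates to the existence of a space-optimal SWA \cite[Lemma~3.1]{GHKLM18a}. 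Second, in the unequal-length case of your lower bound the ``main obstacle'' solution (append $b$, perform $|v'|$ deletions to reach windows $\eps$ and $b$, then append a separating word) already handles every sub-case uniformly, so the preliminary distinction on whether suffixes of $v'$ are $\sim_L$-equivalent to $\eps$ can be dropped.
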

If $L$ is empty or universal, then $V_L(n) \in O(1)$
and otherwise $V_L(n) = \Omega(\log n)$.
Hence to prove Theorem~\ref{thm:vpl-trichotomy} it suffices to show that
either $V_L(n) \in O(\log n)$ or $V_L(n) \notin o(n)$ holds for every VPL $L$.
If $L$ is a regular language and $A$ is the minimal DFA of $L$ with state set $Q$,
one can identify $\nu_L(x)$ with the state $q \in Q$ reached on input $x$.
Hence, $\cev{\nu}_L(x)$ is represented by a word over $Q$.
Using the transition monoid of $A$ one can show that $\cev{\nu}_L \colon \Sigma^* \to Q^*$ is rational
(in fact {\em right-subsequential}, see Section~\ref{sec:rat-dichotomy})
and hence the image $\cev{\nu}_L(\Sigma^*) \subseteq Q^*$ is regular \cite[Lemma~4.2]{GHKLM18a}.
Since the growth of $\cev{\nu}_L(\Sigma^*)$ is either polynomial or exponential this implies
that $V_L(n) \in O(\log n)$ or $V_L(n) \notin o(n)$.

\subparagraph{Restriction to descending words.}
The approach above for regular languages can be extended to visibly pushdown languages $L$
if we restrict ourselves to the set $D$ of descending words.
If a VPA with state set $Q$ reads a descending word $x \in D$ from the initial configuration
it reaches some configuration $\bot q$ with empty stack.
Notice that there may be distinct configurations $\bot p \neq \bot q$
with $\L(\bot p) = \L(\bot q)$,
in which case we need to pick a single representative.
Since every suffix of $x$ is again descending
we can represent $\cev{\nu}_L(x)$ by a word $\sigma_0(x) \in Q^*$
and in fact we will prove that $S_0 = \sigma_0(D)$ is a context-free language (Lemma~\ref{lem:wmf-mon}).
By the growth theorem for context-free languages
the growth of $S_0$ is either polynomial or exponential.
If $S_0$ grows exponentially we obtain an exponential lower bound on
$|\cev{\nu}_L(\Sigma^{\le n})|$ (Lemma~\ref{lem:S-unbounded}).
Hence, the interesting case is that $S_0$ has polynomial growth,
i.e. $S_0$ is bounded.

\subparagraph{Representation by rational functions.}
In order to simulate a VPA by a finite automaton on arbitrary words
we will ``flatten'' the input word in the following way.
The input word $w$ is factorized $w = w_0 w_1 \cdots w_m$
into a descending prefix $w_0$,
and call letters and well-matched factors $w_1, \dots, w_m$.
The descending prefix $w_0$ is replaced by $\sigma_0(w_0)$ and each well-matched factor $w_i$
is replaced by a similar information $\sigma_1(w_i)$
which describes the behavior of the VPA on the factor $w_i$ and on each of its suffixes.
The set $\Flat$ of all flattenings is a context-free language.
Furthermore, there exists a rational function $\nu_f$ such that,
if a flattening $s$ represents a word $w \in \Sigma^*$
then $\nu_f(s)$ is a configuration representing the Myhill-Nerode class $\nu_L(w)$ (Proposition~\ref{prop:nu-f}).
Hence, we can reduce proving the main theorem
to the question whether the $\cev{\nu}_f$-growth of $\Flat$ is always either polynomial or exponential.

This question is resolved positively as follows.
We prove that for every rational function $t$ with suffix-closed domain $X = \dom(t)$
the $\cev{t}$-growth of $X$ is either polynomial or exponential (Theorem~\ref{thm:rational-dichotomy}).
In the case that $S_0$ has polynomial growth we can overapproximate $\Flat$ by a regular superset $\RegFlat$.
If the $\cev{\nu}_f$-growth of $\RegFlat$ is polynomial then the same holds trivially for the subset $\Flat$.
If the $\cev{\nu}_f$-growth of $\RegFlat$ is exponential then the proper choice of $\RegFlat$
ensures that $\Flat$ also has exponential $\cev{\nu}_f$-growth (Proposition~\ref{prop:transf-growth}).

\subparagraph{Dichotomy for rational functions.}
The main technical result of this paper states
that for every rational function $t \colon \Sigma^* \to \Omega^*$ 
with suffix-closed domain $X = \dom(t)$
the $\cev{t}$-growth of $X$ is either polynomial or exponential.
We emphasize that the range of $\cev{t}$ is not $\Omega^*$ but the free monoid over $\Omega^*$
(consisting of all finite sequences of words over $\Omega$).
There are in fact two reasons for exponential $\cev{t}$-growth: 
(i) The image $t(X)$ has exponential growth, and (ii) $X$ contains
a so called linear fooling set.
We need these lower bounds in the more general setting where $X \subseteq \dom(t)$ is a context-free subset,
namely $X = \Flat$.

\begin{proposition}
	\label{prop:exp-image}
	Let $t \colon \Sigma^* \to \Omega^*$ be rational with suffix-closed domain.
	If $X \subseteq \dom(t)$ is context-free
	and $t(X)$ has exponential growth then $X$ has exponential $t$-growth and
	exponential $\cev{t}$-growth.
\end{proposition}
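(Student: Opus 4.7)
\smallskip
\noindent\textbf{Plan.}
First observe that exponential $\cev{t}$-growth will follow immediately from exponential $t$-growth of $X$: since $t(x)$ is the first component of the sequence $\cev{t}(x) = t(x)\,t(a_2\cdots a_n)\cdots t(a_n)$, projecting to the first component sends $\cev{t}(Y)$ surjectively onto $t(Y)$, so $|\cev{t}(Y)|\ge|t(Y)|$ for every $Y\subseteq\dom(t)$. It therefore suffices to prove exponential $t$-growth of $X$.

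Because $t$ is rational and $X$ is context-free, closure of context-free languages under rational transductions yields that $t(X)$ is context-free, and by assumption $|t(X)\cap\Omega^{\le m}|\ge c^m$ for some $c>1$ and infinitely many $m$. The crux of the proof is the following uniform preimage bound: there is a constant $K$ such that for every $y\in t(X)$ there exists $x\in X$ with $t(x)=y$ and $|x|\le K(|y|+1)$. Granting this, every $y\in t(X)\cap\Omega^{\le m}$ lies in $t(X\cap\Sigma^{\le K(m+1)})$, whence $|t(X\cap\Sigma^{\le K(m+1)})|\ge c^m$; writing $n=K(m+1)$ gives $|t(X\cap\Sigma^{\le n})|\ge (c^{1/K})^{n-K}$ for infinitely many $n$, establishing exponential $t$-growth.

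To establish the preimage bound, I would take a pushdown automaton $P$ for $X$ and a trim, unambiguous, real-time transducer $A$ realizing $t$ (guaranteed by \cite[Corollary~4.3]{Berstel79}), and form the product $M=P\times A$, a pushdown transducer whose accepting runs are in bijection with the pairs $(x,t(x))$ for $x\in X$. Since $A$ is real-time, an accepting run on $x$ has length exactly $|x|$ and produces output $t(x)$ of length at most $\iml(A)(|x|+1)$. Applying the standard context-free pumping machinery to $M$ produces a constant $L$ such that any accepting run longer than $L$ contains a \emph{loop}, a factor returning to the same state at a matching stack level. Call such a loop \emph{silent} if it produces no output symbol. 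Excising a silent loop from an accepting run yields another accepting run of $M$ on a shorter $x'\in X$ with $t(x')=t(x)$. Choosing $x^*$ of minimum length with $t(x^*)=y$ therefore forces its accepting run to contain no silent loops; hence every window of length exceeding $L$ in that run must produce at least one symbol of $y$. Summing gives $|x^*|\le L(|y|+1)$, so $K=L$ suffices.

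The main obstacle is this pushdown-pumping argument: ``loops'' must be defined via matching stack heights (as in the classical CFL pumping lemma) so that excision preserves both membership of $x$ in $\L(P)$ and the internal bookkeeping of $A$. Real-timeness of $A$ is essential throughout, since it identifies run length with input length and thereby secures the linear quantitative dependence between $|x^*|$ and $|y|$.
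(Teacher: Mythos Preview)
Your overall strategy is exactly the paper's: reduce exponential $\cev{t}$-growth to exponential $t$-growth via $|\cev{t}(Y)|\ge|t(Y)|$, and obtain the latter from a uniform linear preimage bound, proved by forming the product of a pushdown device for $X$ with a real-time transducer for $t$. All of that is correct.

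The gap is the inference ``no silent loops $\Rightarrow$ every window of length exceeding $L$ in the run produces at least one output symbol.'' This fails. Take $X=\{a^nb^n : n\ge 1\}$ recognised by the one-state PDA that pushes on $a$ and pops on $b$, and let $t$ output one symbol on each $b$ and nothing on $a$ (left-sequential with suffix-closed domain $\{a,b\}^*$). In the product $M$, the accepting run on $a^nb^n$ is the unique, hence length-minimal, run with output $c^n$. Every removable factor of this run (same state and identical stack at its endpoints) is of the form $[i,2n-i]$ and carries output $c^{n-i}$, so no non-trivial silent loop exists; yet the initial block of $n$ pushes is a silent window of unbounded length. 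The structural reason is that a silent window consisting only of pushes contains no removable factor at all, because the matching pops---which may carry all of the output---lie \emph{outside} the window. The pushdown pumping lemma hands you a nested \emph{pair} of segments (a push part and its matching pop part), not a single contiguous factor; ``silent'' would have to mean that both parts are silent, and a silent window only constrains one of them. So ``summing over windows'' does not give the bound.

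The paper closes this gap by moving to the grammar side. From a CNF grammar for $X$ and a real-time transducer for $t$ it builds a grammar over $\Sigma^*\times\Omega^*$ generating $t|_X$, and assigns to each node of a derivation tree for $(x,y)$ the \emph{weight} $|v|$ of the $\Omega$-component it derives. If two nodes on a root-to-leaf path carry the same nonterminal and the same weight, replacing the upper subtree by the lower one shortens $x$ but keeps $y$ intact (a factor of $v$ of length $|v|$ is $v$ itself). Hence in a length-minimal derivation every maximal path segment of constant weight has length at most $|N'|$, and an easy induction on $|y|$ then bounds the tree size, and thus $|x|$, by $O(|y|)$. The point is that this tree contraction removes the push part and the matching pop part \emph{simultaneously}, which is exactly what the linear-run window picture cannot arrange.
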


\begin{example}
	\label{ex:rat-trans}
	Consider the transduction $f \colon \{a,b\}^* \to a^*$ defined by
	\[
		f = \{ (a^n,a^n) \mid n \in \N \} \cup \{ (a^n b w,a^n) \mid n \in \N, w \in \{a,b\}^* \},
	\]
	which projects a word over $\{a,b\}$ to its left-most (maximal) $a$-block and is rational.
	Its image $\cev{f}(\{a,b\}^*)$ can be identified with the set of all sequences
	of natural numbers which are concatenations of monotonically decreasing sequences of the form $(k,k-1, \dots, 0)$.
	There are exactly $2^n$ of such sequences of length $n$
	and hence $\{a,b\}^*$ has exponential $\cev{f}$-growth.
\end{example}

A {\em linear fooling scheme} for a partial function $t \colon \Sigma^* \to Y$
is a tuple $(u_2,v_2,u,v,Z)$ where $u_2,v_2,u,v \in \Sigma^*$ and $Z \subseteq \Sigma^*$ such that
$u_2$ is a suffix of $u$ and $v_2$ is a suffix of $v$, $|u_2| = |v_2|$,
$\{u_2,v_2\}\{u,v\}^*Z \subseteq \dom(t)$ and
for all $n \in \N$ there exists a word $z_n \in Z$ of length $|z_n| \le O(n)$
such that $t(u_2 w z_n) \neq t(v_2 w z_n)$ for all $w \in \{u,v\}^{\le n}$.
The set $\{u_2,v_2\}\{u,v\}^*Z$ is called a {\em linear fooling set} for $t$.
Notice that the definition implies that $u_2 \neq v_2$ and hence $u$ is not a suffix of $v$,
and vice versa, i.e. $\{u,v\}$ is a {\em suffix code}.
Therefore $\{u,v\}^n$ contains $2^n$ words of length $O(n)$
and thus $\{u_2,v_2\}\{u,v\}^*$ has exponential growth.

\begin{proposition}
	\label{prop:fooling}
	Let $t \colon \Sigma^* \to \Omega^*$ be a partial function with suffix-closed domain.
	If $X \subseteq \dom(t)$ contains a linear fooling set for $t$
	then the $\cev{t}$-growth of $X$ is exponential.
\end{proposition}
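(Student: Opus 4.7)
The plan is to construct, for every $n\in\N$, a family of $2^n$ words in $X$ of length $O(n)$ whose $\cev{t}$-images are pairwise distinct. Setting $N=C'n$ for the constant $C'$ that drops out of the length bound below, this immediately yields $|\cev{t}(X\cap\Sigma^{\le N})|\ge 2^n=(2^{1/C'})^N$ for infinitely many $N$, i.e.\ exponential $\cev{t}$-growth.

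Concretely, write $\pi(0)=u$, $\pi(1)=v$, $\pi_2(0)=u_2$, $\pi_2(1)=v_2$, and for each bit string $b=b_1\cdots b_n\in\{0,1\}^n$ define
\[
    x_b \;:=\; \pi_2(b_1)\,\pi(b_2)\,\pi(b_3)\cdots\pi(b_n)\,z_n,
\]
where $z_n\in Z$ is the length-$O(n)$ distinguishing word supplied by the fooling scheme. By construction, $x_b\in\{u_2,v_2\}\{u,v\}^{n-1}Z\subseteq X$ and $|x_b|\le|u_2|+(n-1)\max(|u|,|v|)+|z_n|\le C'n$.

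The crux of the plan is the injectivity of $b\mapsto\cev{t}(x_b)$. Given $b\neq b'$, I would pick the \emph{largest} index $j$ with $b_j\neq b'_j$, so that the common tail $w:=\pi(b_{j+1})\cdots\pi(b_n)=\pi(b'_{j+1})\cdots\pi(b'_n)$ lies in $\{u,v\}^{\le n}$. If $|x_b|\neq|x_{b'}|$ then $\cev{t}(x_b)$ and $\cev{t}(x_{b'})$ already have different lengths as sequences, and we are done. Otherwise, using $|u_2|=|v_2|\le\min(|u|,|v|)$, the suffix of $x_b$ of length $k:=|u_2|+|wz_n|$ is exactly $\pi_2(b_j)\,w\,z_n$, and that of $x_{b'}$ at the same length is $\pi_2(b'_j)\,w\,z_n$. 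The fooling condition then yields $t(u_2wz_n)\neq t(v_2wz_n)$, so the $k$-th entries from the right of $\cev{t}(x_b)$ and $\cev{t}(x_{b'})$ disagree, giving $\cev{t}(x_b)\neq\cev{t}(x_{b'})$.

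The main --- though rather mild --- obstacle is the suffix alignment step: one must verify that cutting $k$ characters off the right end of $x_b$ really lands on the boundary $\pi_2(b_j)$ of the $j$-th block rather than in the interior of a neighbouring block. This is exactly what the defining hypotheses of a linear fooling scheme --- namely $|u_2|=|v_2|$ together with ``$u_2$ is a suffix of $u$'' and ``$v_2$ is a suffix of $v$'' --- are engineered to deliver, together with the fact that $u_2\neq v_2$ (which is forced automatically, since otherwise the fooling inequality $t(u_2wz_n)\neq t(v_2wz_n)$ would collapse). The boundary cases $j=1$ (where the distinguishing suffix is all of $x_b$) and $j=n$ (where $w=\eps$) are covered uniformly by the same calculation.
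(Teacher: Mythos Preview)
Your proposal is correct and follows essentially the same approach as the paper's proof. Both arguments build a family of $2^n$ words in $X$ of length $O(n)$, locate the rightmost block where two such words disagree, use the suffix hypotheses $u_2\trianglelefteq u$, $v_2\trianglelefteq v$ with $|u_2|=|v_2|$ to extract matching-length suffixes $u_2wz_n$ and $v_2wz_n$, and then invoke the fooling inequality to separate the $\cev t$-values. The only cosmetic difference is that the paper indexes the family by $w\in\{u,v\}^n$ with a fixed $u_2$-prefix (concluding $|\cev t(u_2\{u,v\}^n z_n)|\ge 2^n$), whereas you index by bit strings and let the first block vary over $\{u_2,v_2\}$; your added case split on $|x_b|=|x_{b'}|$ just makes explicit a step the paper leaves implicit.
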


\begin{proof}
	Let $(u_2,v_2,u,v,Z)$ be a linear fooling scheme
	with $\{u_2,v_2\}\{u,v\}^*Z \subseteq X$.
	Let $n \in \N$ and let $z_n \in Z$ with the properties from the definition.
	Consider two distinct words $w, w' \in \{u,v\}^n$.
	Without loss of generality the words have the form $w = w_1 u w_2$ and $w' = w_3 v w_2$
	for some $w_1,w_2,w_3 \in \{u,v\}^*$.
	Hence $w$ has the suffix $u_2 w_2$ and $w'$ has the suffix $v_2 w_2$, which are suffixes of the same length.
	By assumption we have $t(u_2 w_2 z_n) \neq t(v_2 w_2 z_n)$ and hence also $\cev{t}(w z_n) \neq \cev{t}(w' z_n)$.
	This implies that $|\cev{t}(u_2 \{u,v\}^n z_n)| \ge 2^n$ for all $n \in \N$.
	Since all words in $u_2 \{u,v\}^n z_n \subseteq X$ have length $O(n)$
	there exists a number $c > 1$ such that
	$|\cev{t}(X \cap \Sigma^{\le cn})| \ge 2^n$ for sufficiently large $n$.
\end{proof}

The following dichotomy theorem will be proved in Section~\ref{sec:rat-dichotomy}.
\begin{theorem}
	\label{thm:rational-dichotomy}
	Let $t \colon \Sigma^* \to \Omega^*$ be rational and with suffix-closed domain $X = \dom(t)$.
	If $X$ contains no linear fooling set for $t$ and $t(X)$ is bounded
	then the $\cev{t}$-growth of $X$ is polynomial.
	Otherwise the $\cev{t}$-growth of $X$ is exponential.
\end{theorem}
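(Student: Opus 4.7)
The exponential direction of the dichotomy is immediate: if $t(X)$ has exponential growth then Proposition~\ref{prop:exp-image} gives exponential $\cev{t}$-growth, and if $X$ contains a linear fooling set for $t$ then Proposition~\ref{prop:fooling} does the same. So the substantive work is the polynomial direction: assuming $t(X) \subseteq w_1^* \cdots w_k^*$ is bounded and $X$ has no linear fooling set, I must show that $|\cev{t}(X \cap \Sigma^{\le n})|$ is polynomial in $n$.

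To this end I would fix a trim unambiguous real-time right transducer $B = (Q,\Sigma,\Omega,F,\Delta,I,o)$ for $t$, obtained for instance from a left transducer for $t^{\rev}$ by reversal. Since $X$ is suffix-closed, every suffix $s_i$ of a word $w \in X$ has a unique initial accepting run in $B$ whose output equals $t(s_i)$, so $\cev{t}(w)$ is exactly the sequence of these outputs. Boundedness of $t(X)$ lets me identify each $t(s_i)$ with a $k$-tuple of non-negative integer exponents in the factorisation $w_1^* \cdots w_k^*$, and each exponent is $O(n)$ because outputs have length at most $\iml(B)\cdot(|w|+1)$. The goal thereby reduces to bounding, by $\mathrm{poly}(n)$, the number of sequences of such tuples that arise as $\cev{t}(w)$ for $w \in X \cap \Sigma^{\le n}$.

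The core step is a pumping analysis of $B$. By pigeonhole, any sufficiently long input contains distinct factors $u \neq v$ of equal length that both label runs between the same state pair $(p,q)$. Under the two hypotheses, I would argue that the outputs produced along such paired loops must be \emph{compatible} -- roughly, they can differ in at most one $w_j$-coordinate and by only a bounded amount per iteration -- so that replacing occurrences of $u$ by $v$ inside a longer run perturbs $\cev{t}(w)$ in a predictable, low-dimensional way. If instead some such pair of loops produced genuinely independent output contributions on their common-length suffixes $u_2, v_2$, I would extract a suffix-closed continuation set $Z \subseteq X$ and a family of witnesses $(z_n)_{n\ge 0}$ of linear length such that $t(u_2 w z_n) \neq t(v_2 w z_n)$ for all $w \in \{u,v\}^{\le n}$; this would constitute a linear fooling scheme, contradicting the hypothesis. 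Combined with the bounded-image encoding, the resulting ``compatibility'' of pumping pairs forces only polynomially many distinct $\cev{t}$-sequences to occur on words of length $\le n$.

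The main obstacle I foresee is precisely the fabrication of a linear-length separator $z_n$ that simultaneously distinguishes an exponentially large family of pump choices. Separating any individual pair by \emph{some} suffix in $X$ is easy, but exhibiting a single $z_n$ of length $O(n)$ working uniformly across the $2^n$ choices of $\{u,v\}^n$ is the delicate combinatorial heart of the argument, and this is where both structural inputs must be exploited in concert: the finite state graph of $B$ provides the cyclic skeleton, while the bounded range $t(X) \subseteq w_1^*\cdots w_k^*$ forces output discrepancies to lie in finitely many coordinate directions. A Lipschitz-style property of rational functions (output length controlled linearly by input length) should then let a single fixed-shape $z_n$ detect discrepancies along one coordinate direction for a polynomially large sub-family of the pump choices, which by a further pigeonhole is enough to build the required fooling set.
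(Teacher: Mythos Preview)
Your handling of the exponential direction is fine and matches the paper. The polynomial direction, however, has a genuine structural gap: you work directly with a trim unambiguous real-time right transducer $B$ and do a pumping analysis on ``loops between the same state pair $(p,q)$''. The problem is that \emph{unambiguous} is not \emph{deterministic}: for a fixed suffix there may be several runs in $B$, and which one is the accepting continuation depends on the still-unread prefix. So replacing a factor $u$ by another factor $v$ that also labels a $q\to p$ run need not preserve the accepting run on the surrounding word at all, let alone perturb its output in a controlled way. The paper resolves exactly this by invoking the Reutenauer--Sch\"utzenberger decomposition: it introduces the canonical finite-index right congruence $\R_t$ so that $t[\R_t]$ becomes right-\emph{subsequential}, and then the ``key'' (the $\R_t$-class of the already-read prefix) uniquely determines the run (Lemma~\ref{lem:key}). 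Without this look-ahead congruence your pumping analysis does not get off the ground.

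Relatedly, the paper identifies \emph{two independent} obstructions to polynomial growth, and you have conflated them into one. The first is that $\R_t$ may have a critical tuple; this is where the delicate uniform separator $z_n$ is actually needed, and the paper builds it from two lemmas about $\R_t$: a ``short distances'' bound $\|t(uw),t(vw)\|\le O(|u|+|v|)$ whenever $u\mathrel{\R_t}v$ (Lemma~\ref{lem:linear-distance}), and a ``short witnesses'' lemma for non-adjacent rational functions (Lemma~\ref{lem:short-witness}). Together with a triangle-inequality argument these give one $z_n$ of length $O(n)$ that works for \emph{all} $w\in\{u,v\}^{\le n}$ simultaneously (Proposition~\ref{prop:crit-tup}); your proposed route via ``a polynomially large sub-family plus a further pigeonhole'' does not deliver this. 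The second obstruction is that the transducer $A$ may fail to be \emph{well-behaved} (equal-length guarded accepting runs from the same state produce different outputs); here a single fixed word $s$ suffices as separator (Proposition~\ref{prop:wb}), so this case is easy once isolated.

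Finally, once both obstructions are ruled out and $t(X)$ is bounded, the paper does not argue by ``compatibility of pumping pairs'' but gives an explicit $O(\log n)$-bit encoding of $\cev t(w)$: a bounded-depth tree $T_{q,w}$ built along the \emph{guarded factorization} of runs, whose node labels store lengths, keys, and (crucially) $\cev\nu_{\R_t}$-values, the latter being polynomially many by Theorem~\ref{thm:log-cong} since $\R_t$ has no critical tuple (Proposition~\ref{prop:poly}). Your sketch has no analogue of this encoding, and the vague ``low-dimensional perturbation'' claim is not a substitute. In short, the missing idea is the look-ahead congruence $\R_t$; once you have it, the two fooling-set constructions and the tree encoding follow the paper's line.
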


\section{Reduction to transducer problem}

Fix a VPA $A = (Q,\tilde \Sigma,\Gamma,\bot,q_0,\delta, F)$
and let $\emptyset \subsetneq L = \L(A) \subsetneq \Sigma^*$ for the rest of this section.

\subparagraph{Monotonic factorization.}
A factorization of $w = w_0 w_1 \cdots w_m \in \Sigma^*$ into factors $w_i \in \Sigma^*$
is {\em monotonic} if
$w_0$ is descending (possibly empty) and for each $1 \le i \le m$ the factor $w_i$ is either
a call letter $w_i \in \Sigma_c$ or a non-empty well-matched factor.
If $w_0 w_1 \cdots w_m$ is a monotonic factorization
then $w_i' w_{i+1} \cdots w_j$ is a monotonic factorization
for any $0 \le i \le j \le m$ and suffix $w_i'$ of $w_i$.
To see that every word $w \in \Sigma^*$ has at least one monotonic factorization
consider the set of non-empty maximal well-matched factors in $w$ (maximal with respect to inclusion).
Observe that two distinct maximal well-matched factors in a word cannot overlap
because the union of two overlapping well-matched factors is again well-matched.
Since every internal letter is well-matched the remaining positions contain only return and call letters.
Furthermore, every remaining call letter must be to the right of every remaining return letter,
which yields a monotonic factorization of $w$.
Figure~\ref{fig:sc} shows a monotonic factorization $w = w_0 w_1 \cdots w_m$
where the descending prefix $w_0$ is colored red and call letters $w_i$ are colored green.
The {\em stack height function} for the word $w$ increases (decreases) by one on call (return) letters and stays constant on internal letters.

\begin{figure}

\centering

\scalebox{0.6}{
\begin{tikzpicture}

\tikzstyle{n} = [circle, fill, inner sep = 1.5pt]
\tikzstyle{r} = [red]

\draw[step=1.0,black!20,thin] (0,0) grid (20,4);
\draw (0,2) node[n] {} -- (1,1) node[n] {} -- (2,1) node[n] {}
-- (3,2) node[n] {} -- (4,1) node[n] {} -- (5,0) node[n] {}
-- (6,0) node[n] {} -- (7,1) node[n] {} -- (8,0) node[n] {}
-- (9,1) node[n] {} -- (10,2) node[n] {} -- (11,1) node[n] {}
-- (12,1) node[n] {} -- (13,2) node[n] {} -- (14,3) node[n] {}
-- (15,4) node[n] {} -- (16,3) node[n] {} -- (17,4) node[n] {}
-- (18,3) node[n] {} -- (19,2) node[n] {} -- (20,3) node[n] {};
\node at (0.5,-.6) {\Huge \textcolor{red}{$b$}};
\node at (1.5,-.7) {\Huge \textcolor{red}{$c$}};
\node at (2.5,-.7) {\Huge \textcolor{red}{$a$}};
\node at (3.5,-.6) {\Huge \textcolor{red}{$b$}};
\node at (4.5,-.6) {\Huge \textcolor{red}{$b$}};
\node at (5.5,-.7) {\Huge $c$};
\node at (6.5,-.7) {\Huge $a$};
\node at (7.5,-.6) {\Huge $b$};
\node at (8.5,-.7) {\Huge \textcolor{Cerulean}{$a$}};
\node at (9.5,-.7) {\Huge $a$};
\node at (10.5,-.6) {\Huge $b$};
\node at (11.5,-.7) {\Huge $c$};
\node at (12.5,-.7) {\Huge \textcolor{Cerulean}{$a$}};
\node at (13.5,-.7) {\Huge $a$};
\node at (14.5,-.7) {\Huge $a$};
\node at (15.5,-.6) {\Huge $b$};
\node at (16.5,-.7) {\Huge $a$};
\node at (17.5,-.6) {\Huge $b$};
\node at (18.5,-.6) {\Huge $b$};
\node at (19.5,-.7) {\Huge \textcolor{Cerulean}{$a$}};

\end{tikzpicture}
}
\caption{The stack height function for a word ($\Sigma_c = \{a\}$, $\Sigma_r = \{b\}$, $\Sigma_{\mathit{int}} = \{c\}$)
and a monotonic factorization $bcabb ~ cab ~ a ~ abc ~ a ~ aababb ~ a$.
}
\label{fig:sc}
\end{figure}
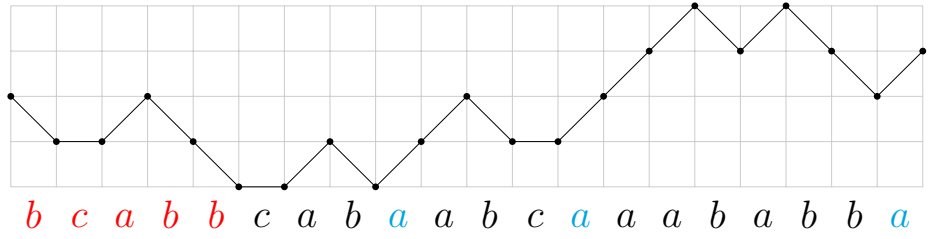

\subparagraph{Representation of Myhill-Nerode classes.}
To apply Theorem~\ref{thm:psiL} we need a suitable description of the $\sim_L$-classes.
We follow the approach in \cite{BaranyLS06} of choosing length-lexicographic minimal
representative configurations.
Since their definition slightly differs from ours
(according to their definition, a VPA may not read a return letter if the stack contains $\bot$ only)
we briefly recall their argument (in the appendix).
Let $\rConf = \{ \delta(\bot q_0,w) \mid w \in \Sigma^* \}$
be the set of all {\em reachable} configurations in $A$,
which is known to be regular \cite{BouajjaniEM97,Buechi64}.
Two configurations $c_1,c_2 \in \rConf$ are {\em equivalent}, denoted by $c_1 \sim c_2$, if
$\L(c_1) = \L(c_2)$.
By fixing arbitrary linear orders on $\Gamma$ and $Q$ we can consider the length-lexicographical order
on $\rConf$ and define the function $\rep \colon \rConf \to \rConf$
which chooses the minimal representative from each $\sim$-class,
i.e. for all $c \in \rConf$ we have $\rep(c) \sim c$
and for any $c' \in \rConf$ with $c \sim c'$ we have $\rep(c) \le_{\mathit{llex}} c'$.
The set of representative configurations is denoted by $\Rep = \rep(\rConf)$.
\begin{lemma}[\cite{BaranyLS06}]
	\label{lem:rep-rat}
	The function $\rep$ is rational.
\end{lemma}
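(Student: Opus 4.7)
The plan is to reduce rationality of $\rep$ to two subgoals: (a) showing that the equivalence $\sim$, viewed as a binary relation on $\rConf$, is rational, and (b) showing that $\Rep \subseteq \rConf$ is regular. Granting these, the graph of $\rep$ equals the rational relation ${\sim} \cap (\rConf \times \Rep)$, which happens to be a total function by the very definition of $\rep$; hence $\rep$ is a rational function in the sense used in this paper.

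For (a) I would use a bisimulation characterization: two reachable configurations $c_1, c_2$ are $\sim$-equivalent iff they agree on finality and, for each $a \in \Sigma$, the one-step successors $\delta(c_1,a)$ and $\delta(c_2,a)$ are again $\sim$-equivalent. Exploiting the visibility condition, I would analyze this bisimulation by reading the two configurations from right to left in parallel: the effect of any well-matched factor factors through the finite monoid $\varphi(W) \subseteq Q^Q$, the effect of returns is local to the current stack top, and unmatched calls must be consumed synchronously on both tapes because they correspond to the same input letters. This lets me design a two-tape right transducer whose state is a pair of elements of $\varphi(W) \cup \{\star\}$ together with a finite ``interaction summary'' of the two stacks; the transducer advances on whichever tape currently sits at the deeper stack (padding the other with idle moves), and accepts precisely when the joint state is a consistent bisimulation witness upon reaching $\bot$ on both tapes. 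Finiteness of the state space follows from the finiteness of $\varphi(W)$ and $Q$.

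For (b), regularity of $\Rep$ reduces to: does there exist a length-lex-smaller $c' \in \rConf$ with $c \sim c'$? Using the rational relation for $\sim$ from (a), the regular language $\rConf$ from \cite{BouajjaniEM97,Buechi64}, and the fact that length-lex order is a regular property over fixed alphabets, one assembles a finite automaton that nondeterministically guesses the putative smaller competitor $c'$ and verifies both $c' \in \rConf$ and $c \sim c'$; the complement of its language intersected with $\rConf$ is $\Rep$.

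The main obstacle is step (a): arguing that the bisimulation between two configurations of potentially very different stack heights is decidable by a finite-state two-tape device at all. The subtlety is that, although $\sim$ may have infinite index on $\rConf$ (consider configurations $\bot \gamma^k q$ distinguished by varying numbers of returns), the visibility alphabet partition forces calls and returns to occur at matching positions on both tapes as soon as the same input is consumed; combined with the finite monoid $\varphi(W)$ summarizing well-matched segments, this visibility-driven synchronization is precisely what keeps the bisimulation state space bounded and prevents the construction from collapsing for non-visibly pushdown automata.
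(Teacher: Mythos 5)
Your decomposition (show $\sim$ rational, show $\Rep$ regular, intersect) is a reasonable plan, but as stated it has a gap that the paper avoids by working with a strictly stronger notion. For step (b) you need to express $\Rep$ by projecting out a quantifier: $c' \notin \Rep$ iff $\exists c'' \in \rConf$ with $c'' \sim c'$ and $c'' <_{\mathit{llex}} c'$. To make that projection yield a regular language you must intersect $\sim$ with $<_{\mathit{llex}}$ and with $\rConf \times \rConf$ first, and general rational transductions are \emph{not} closed under intersection. What is needed is that $\sim$ (or rather $\sim^{\rev}$) is \emph{synchronous} rational, i.e.\ recognized by an automaton reading the convolution $c \otimes c'$, so that first-order closure applies. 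Your description of the two-tape device --- ``advances on whichever tape currently sits at the deeper stack (padding the other with idle moves)'' --- describes an asynchronous transducer, which only gives a rational (not synchronous rational) relation and does not support the Boolean/projection steps you rely on. The paper instead aligns the two configurations at the right end, reads the convolution synchronously from right to left, and in each synchronous step either guesses a return letter (popping both tops) or a state transformation from $\varphi(W)$ (modifying only the state pair); this gives synchronicity for free.

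The second gap is in the feasibility of your bisimulation route for (a). You propose a coinductive characterization of $\sim$ and a finite ``interaction summary'' of the two stacks, but you never isolate why the summary is finite; in particular, two configurations of very different heights could in principle need an unbounded number of bisimulation rounds to be distinguished. The observation that makes the construction go through, and that your sketch is missing, is this: to decide $\L(\alpha p) \neq \L(\beta q)$ it suffices to guess a distinguishing word $w$ and simulate only the \emph{maximal descending prefix} of $w$, because the remaining ascending suffix never inspects the stacks and its effect on acceptance depends only on the current pair of states (hence on membership of that pair in a fixed finite set). This is what collapses the bisimulation into a synchronous finite-state test over the two stack contents, and it is exactly the content of the automaton constructed in \cite{BaranyLS06} and reproduced in the appendix. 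With that observation added and the synchronicity made explicit, your step (a) and hence your whole plan would go through; without it, the state-space-boundedness claim is unjustified and step (b) does not follow.
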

Finally we define $\nu_A \colon \Sigma^* \to \Rep$ by $\nu_A(w) = \rep(\delta(\bot q_0,w))$
for all $w \in \Sigma^*$.
It represents $\sim_L$ in the sense that
$\L(\nu_A(w)) = w^{-1} \L(A)$ for all $w \in \Sigma^*$
and hence $\nu_A(u) =\nu_A(v)$ if and only if $u \sim_L v$.
Therefore we have $V_L(n) = \log |\cev{\nu}_A(\Sigma^{\le n})|$
by Theorem~\ref{thm:psiL}.

\subparagraph{Flattenings.}
Since we cannot compute $\nu_A$ using a finite state transducer
we choose a different representation of the input.
Define the alphabet $\Sigma_f = \Sigma_c \cup Q \cup Q^Q$.
A {\em flattening} is a word $s_0 s_1 \cdots s_m \in \Sigma_f^*$
where $s_0 \in Q^*$ and $s_i \in \Sigma_c \cup Q^Q Q^*$ for all $1 \le i \le m$.
Notice that the factorization $s = s_0 s_1 \cdots s_m$ is unique.
The set of all flattenings is $\AllFlat = Q^* (\Sigma_c \cup Q^Q Q^*)^*$.
We define a function $t_f \colon \AllFlat \to \rConf$ as follows.
Let $s = s_0 s_1 \cdots s_m \in \Sigma_f^*$ be a flattening and
we define $t_f(s)$ by induction on $m$:
\begin{itemize}
\item If $s_0 = \eps$ then $t_f(s_0) = \bot q_0$.
If $s_0 = q_1 \cdots q_n \in Q^+$ then $t_f(s_0) = \bot q_1$.
\item If $s_m \in \Sigma_c$ then $t_f(s_0 \cdots s_m) = \delta(t_f(s_0 \cdots s_{m-1}), s_m)$.
\item If $s_m = \tau q_2 \cdots q_m \in Q^Q Q^*$ and $t_f(s_0 \cdots s_{m-1}) = \alpha q$
then $t_f(s) = \alpha \tau(q)$.
\end{itemize}
Define the function $\nu_f \colon \AllFlat \to \Rep$ by $\nu_f = \rep \circ t_f$.
\begin{lemma}
\label{lem:t-f-rat}
The functions $t_f$ and $\nu_f$ are rational.
\end{lemma}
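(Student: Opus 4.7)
The plan is to build a real-time deterministic transducer $T$ directly from the inductive definition of $t_f$, and then derive the rationality of $\nu_f$ by composing with the rational function $\rep$. The key observation is that while a flattening $s = s_0 s_1 \cdots s_m$ is being read from left to right, the associated configuration $t_f(s_0\cdots s_j)$ grows by at most one stack symbol per step (and only on a call letter) and is otherwise modified only in its top state (by $\tau\in Q^Q$ at the head of a well-matched block) or not at all (on a $Q$-tail inside the initial factor or an ongoing well-matched block). Hence the whole configuration can be emitted online over the alphabet $\Gamma\cup Q$, with only the current top state kept in the finite control and emitted at the end via the terminal output.

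Concretely, $T$ would use states $\{q_{\text{init}}\}\cup Q\times\{I,B\}$ where the first coordinate carries the current top state and the second distinguishes being \emph{inside} a factor that admits a $Q$-tail ($s_0$ or an ongoing well-matched block, mode $I$) from being at a factor \emph{boundary} immediately after a call letter (mode $B$, in which $Q$-symbols are disallowed). From $q_{\text{init}}$, the first symbol emits a leading $\bot$ and initializes the top state: $q_1\in Q$ goes to $(q_1,I)$ outputting $\bot$; $\tau\in Q^Q$ goes to $(\tau(q_0),I)$ outputting $\bot$; and a call letter $a$ with $\delta(q_0,a)=(\gamma,q')$ goes to $(q',B)$ outputting $\bot\gamma$. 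In $(q,I)$, a $Q$-symbol loops silently (discarding the $Q$-tail), $\tau$ moves silently to $(\tau(q),I)$, and a call letter $a$ with $\delta(q,a)=(\gamma,q')$ outputs $\gamma$ and moves to $(q',B)$. In $(q,B)$ there are no $Q$-transitions, while $\tau$- and call-letter transitions behave as in $I$. All states are final, with terminal output $\bot q_0$ at $q_{\text{init}}$ and $q$ at any $(q,\cdot)$.

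A direct induction on $m$ shows that $\out_F(\pi)=t_f(s)$ for the unique initial run $\pi$ of $T$ on $s$: the three inductive cases of $t_f$ correspond exactly to the three non-silent transition families above, while the silent $Q$-loops discard the semantically irrelevant $Q$-tails of the initial factor and of well-matched blocks. Since $T$ is deterministic (hence unambiguous) and real-time, $t_f$ is rational, and then $\nu_f=\rep\circ t_f$ is rational by Lemma~\ref{lem:rep-rat} together with closure of rational transductions under composition. The only delicate piece of bookkeeping is the initial factor $s_0$—the first $Q$-symbol must simultaneously emit the leading $\bot$ and set the top state, further $Q$-symbols inside $s_0$ must be discarded, and an empty $s_0$ must default the top state to $q_0$—all of which is absorbed by the special transitions out of $q_{\text{init}}$ and its terminal output $\bot q_0$.
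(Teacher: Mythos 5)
Your proposal is correct and takes essentially the same approach as the paper: the paper's transducer $A_1$ with state set $Q$ has exactly your push-$\bot$-and-$\gamma$, track-the-top-state-in-control, silently-skip-$Q$-tails, emit-terminal-output structure, wrapped in a meta-transducer that verifies membership in $\AllFlat$ and memorizes the first $Q$-symbol of $s_0$. Your $\{q_{\mathrm{init}}\}\cup Q\times\{I,B\}$ construction is simply the explicit, unfolded, deterministic version of that wrapper plus $A_1$.
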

\begin{proof}
We first define a transducer $A_1$ which handles flattenings where the initial factor is empty.
Let $A_1 = (Q,\Sigma_f,Q \cup \Gamma,\{q_0\},\Delta',Q,o)$
with the following transitions:
\begin{itemize}
	\item $p \xrightarrow{q \mid \eps} p$ for all $p,q \in Q$
	\item $p \xrightarrow{a \mid \gamma} q$ for all $\delta(p,a) = (\gamma,q)$ where $a \in \Sigma_c$
	\item $p \xrightarrow{\tau \mid \eps} \tau(p)$ for all $p \in Q, \tau \in Q^Q$
\end{itemize}
and $o(q) = q$. For each $p \in Q$ let $t_p$ be the rational function defined by
$A_1$ with the only initial state $p$.
One can easily show
that for all $s \in \AllFlat$
we have $t_f(s) = \bot t_{q_0}(s)$
and $t_f(q_1 \cdots q_k s) = \bot t_{q_1}(s)$ for all $q_1 \cdots q_k \in Q^+$.
Hence we can prove that $t_f$ is rational by providing a transducer for $t_f$:
First it verifies whether the input word belongs to the regular language $\AllFlat \subseteq \Sigma_f^*$.
Simultaneously, it verifies whether the input word starts with a state $q \in Q$.
If so, it memorizes $q$ and simulates $A_1$ on $s'$ from $q$,
and otherwise $A_1$ is directly simulated on $s$ from $q_0$.
Since $\rep$ is rational by Lemma~\ref{lem:rep-rat}, $\nu_f$ is also rational.
\end{proof}
If $w = a_1 \cdots a_n \in D$ is a descending word
then $\delta(\bot q_0, w) = \bot p$ for some $p \in Q$.
By definition of $\nu_A$ there exists a state $q \in Q$ with $\nu_A(w) = \bot q$.
Since each suffix of $w$ is also descending
we have $\cev{\nu}_A(w) = \bot q_1 \, \bot q_2 \cdots \bot q_n$ for some $q_1, \dots, q_n \in Q$.
We define $\sigma_0(w) = q_1 \cdots q_n \in Q^*$,
i.e. we remove the redundant $\bot$-symbols from $\cev{\nu}_A(w)$.
If $w$ is non-empty and well-matched we additionally define
$\sigma_1(w) = \tau q_2 \cdots q_n \in Q^Q Q^*$ where $\tau = \varphi(w)$.
We define the sets $S_0 = \sigma_0(D)$
and $S_1 = \sigma_1(W \setminus \{\eps\})$.
Notice that $S_0$ is exactly the set of proper suffixes of words from $S_1$
since descending words are exactly the (proper) suffixes of well-matched words.
We say that $s = s_0 s_1 \cdots s_m \in \AllFlat$ {\em represents} a word $w \in \Sigma^*$
if there exists a monotonic factorization $w = w_0 w_1 \cdots w_m \in \Sigma^*$
such that
$s_0 = \sigma_0(w_0)$,
and for all $1 \le i \le m$ if $w_i$ is well-matched, then $s_i = \sigma_1(w_i)$,
and if $w_i \in \Sigma_c$ then $s_i = w_i$.
Since a word may have different monotonic factorizations,
it may also be represented by many flattenings.
We define the suffix-closed set $\Flat = S_0 (\Sigma_c \cup S_1)^*$,
containing all flattenings which represent some word.

\begin{proposition}
	\label{prop:nu-f}
	If $s \in \AllFlat$ represents $w \in \Sigma^*$ then
	$\nu_f(s) = \nu_A(w)$.
	Therefore, $\nu_f(\Flat) = \Rep$
and $V_L(n) = \log |\cev{\nu}_f(\Flat \cap \Sigma_f^{\le n})|$.
\end{proposition}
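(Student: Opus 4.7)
The plan is to reduce the statement to Theorem~\ref{thm:psiL} by matching $\cev{\nu}_f$ on flattenings with $\cev{\nu}_A$ on words pointwise. A preliminary observation, repeatedly needed below, is that the equivalence $\sim$ on $\rConf$ is a right congruence for $\delta$: since $\L(\delta(c,a)) = a^{-1}\L(c)$ for every letter $a$, one has $c_1 \sim c_2 \Rightarrow \delta(c_1,u) \sim \delta(c_2,u)$ for every $u \in \Sigma^*$, and consequently $\rep(\delta(c_1,u)) = \rep(\delta(c_2,u))$.

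First I would prove the identity $\nu_f(s) = \nu_A(w)$ by induction on the number of factors $m+1$ of a monotonic factorization $w = w_0 w_1 \cdots w_m$ witnessing that $s = s_0 s_1 \cdots s_m$ represents $w$; the induction hypothesis is strengthened to $t_f(s) \sim \delta(\bot q_0, w)$, whence the stated identity follows by applying $\rep$. For $m = 0$ we have $w = w_0 \in D$ and $s_0 = \sigma_0(w_0)$; by construction the first letter of $\sigma_0(w_0)$ is the state $q_1$ with $\bot q_1 = \nu_A(w_0) = \rep(\delta(\bot q_0, w_0))$, so $t_f(s_0) = \bot q_1 \sim \delta(\bot q_0, w_0)$ (the case $w_0 = \eps$ is immediate). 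For the inductive step, if $w_m \in \Sigma_c$ then $s_m = w_m$ and $t_f$ applies $\delta(\cdot, w_m)$ directly; if $w_m$ is well-matched then $s_m = \varphi(w_m)\, q_2 \cdots q_{|w_m|}$ and $t_f$ applies the top-of-stack transformation $\varphi(w_m)$, which by the homomorphism property recalled at the end of Section~3 coincides with $\delta(\cdot, w_m)$ on any configuration. In both cases the right-congruence property transports the induction hypothesis to the required conclusion.

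Since every $w \in \Sigma^*$ admits a monotonic factorization and the construction $w \mapsto s$ is length-preserving ($|\sigma_0(w_0)| = |w_0|$ and $|\sigma_1(w_i)| = |w_i|$), every $w$ is represented by some $s \in \Flat$ with $|s| = |w|$. Combined with the previous step this yields $\Rep = \nu_A(\Sigma^*) \subseteq \nu_f(\Flat)$, and the reverse inclusion is immediate from $\nu_f = \rep \circ t_f$.

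For the space-complexity equation I would strengthen the pointwise identity to $\cev{\nu}_f(s) = \cev{\nu}_A(w)$ whenever $s \in \Flat$ represents $w$. As both sequences have length $|s| = |w|$, it suffices to verify that for each position the corresponding suffix of $s$ is itself a flattening (in $\Flat$) of the corresponding suffix of $w$, so that the identity of the second paragraph applies coordinatewise. Depending on whether the position lies inside the initial factor $s_0$, at the start of a factor $s_j$ with $j \ge 1$, or strictly inside a well-matched factor $s_j \in Q^Q Q^*$, an immediate unwinding of the definitions of $\sigma_0$ and $\sigma_1$ shows that truncating $s_j$ at the given position yields a $Q^*$-block that is precisely $\sigma_0$ of the corresponding descending suffix of $w_j$; gluing this to $s_{j+1} \cdots s_m$ yields a valid flattening of the corresponding suffix of $w$. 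Hence $\cev{\nu}_f(\Flat \cap \Sigma_f^{\le n}) = \cev{\nu}_A(\Sigma^{\le n})$, and combined with Theorem~\ref{thm:psiL} (transferred from $\nu_L$ to $\nu_A$ via the natural bijection between $\sim_L$-classes and $\Rep$) this gives the claimed equation. The main obstacle I expect is the bookkeeping in the case where the position lies strictly inside a well-matched factor: the encoding $\sigma_1(w_j) = \tau q_2 \cdots q_{|w_j|}$ loses its leading top-of-stack transformation $\tau$ after truncation, so one must recognise the remaining pure $Q^*$-block as $\sigma_0$ of a descending (but no longer well-matched) suffix of $w_j$. Everything else reduces to a routine induction driven by the right-congruence of $\sim$.
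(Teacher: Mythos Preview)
Your proposal is correct and follows essentially the same approach as the paper: the induction establishing $t_f(s) \sim \delta(\bot q_0, w)$ is identical in structure to the paper's proof, and your explicit treatment of the remaining two claims (length-preservation of the representation giving $\nu_f(\Flat) = \Rep$, and the suffix-by-suffix argument giving $\cev{\nu}_f(s) = \cev{\nu}_A(w)$) spells out what the paper leaves implicit. The ``main obstacle'' you flag---that truncating inside a well-matched factor $s_j$ drops the leading $\tau$ and leaves a $Q^*$-block that must be recognised as $\sigma_0$ of a descending suffix---is handled exactly as you anticipate, since by construction the trailing states $q_2 \cdots q_{|w_j|}$ in $\sigma_1(w_j)$ coincide with those of $\sigma_0(w_j)$.
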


\begin{lemma}
	\label{lem:wmf-mon}
	The languages $S_0$ and $S_1$
	are context-free.
\end{lemma}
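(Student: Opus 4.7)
The plan is to construct a context-free grammar for $S_1$; the context-freeness of $S_0$ then follows because $S_0$ equals the set of proper suffixes of $S_1$ (as the excerpt remarks), and the proper-suffix operation is a rational transduction, under which context-free languages are closed.

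The key preparation is to extend $\varphi \colon W \to Q^Q$ to a map $\psi \colon D \to Q^Q$ by $\psi(v)(p) = q$ whenever $\delta(\bot p, v) = \bot q$. Since reading a descending word from a configuration $\bot p$ keeps the stack at $\bot$ throughout, $\psi$ is well-defined and satisfies $\psi(v_1 v_2) = \psi(v_2) \circ \psi(v_1)$ for all $v_1, v_2 \in D$. If $u = a_1 \cdots a_m \in W$ occurs as a factor of some larger descending word whose continuation to the right has $\psi$-value $\xi \in Q^Q$, a direct calculation shows that the annotation at the $j$-th letter of $u$ (as in $\sigma_0$ applied to the enclosing word) equals $\rho(\xi(\psi(a_j \cdots a_m)(q_0)))$, where $\rho \colon Q \to Q$ sends $p$ to the state component of $\rep(\bot p)$. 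So this annotation depends on $u$ only through $\psi(a_j \cdots a_m)$ and on the context only through $\xi$.

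This motivates a CFG with nonterminals $V_{\xi}^{\tau}$ indexed by $(\xi, \tau) \in Q^Q \times Q^Q$, where $V_{\xi}^{\tau}$ is intended to derive the context-$\xi$ annotation of every $u \in W$ with $\varphi(u) = \tau$. The productions mirror the inductive definition $W = \{\eps\} \cup \Sigma_{\mathit{int}} W \cup \Sigma_c W \Sigma_r W$: an empty production $V_{\xi}^{\mathrm{id}} \to \eps$; a concatenation rule $V_{\xi}^{\tau} \to V_{\xi \circ \tau_2}^{\tau_1}\, V_{\xi}^{\tau_2}$ for each factorization $\tau = \tau_2 \circ \tau_1$, reflecting that when $u = u_1 u_2$ the right context of $u_1$ absorbs $u_2$; one base rule per internal letter, producing a single terminal in $Q$; and one rule per call-return bracket $a u' b$, in which the right context of $u'$ absorbs $b$ and becomes $\xi \circ \psi(b)$. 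A distinguished start symbol produces the terminal $\varphi(w) \in Q^Q$ followed by the annotation of the suffix $a_2 \cdots a_n$ of $w = a_1 \cdots a_n \in W \setminus \{\eps\}$, by casing on whether $a_1$ is an internal letter or a call matched by some return inside $w$; the two cases expand into appropriate combinations of $V$-nonterminals in context $\mathrm{id}$.

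The main obstacle is verifying correctness: by induction on a parse tree of $w \in W$, one must show that the grammar derives exactly $S_1$, the delicate point being the propagation of the right-context parameter $\xi$ through the concatenation and call-return rules (which is where the homomorphism property of $\psi$ enters). Since $Q^Q \times Q^Q$ is finite, the grammar has finitely many nonterminals and productions, so $S_1$ is context-free; $S_0$ is then context-free by the closure argument above.
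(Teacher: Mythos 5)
Your proposal is correct, and it takes a genuinely different route from the paper's proof. The paper proves that the \emph{convolution language} $\{\,w \otimes \sigma_1(w) \mid w \in W\,\}$ is a visibly pushdown language over the product alphabet $(\Sigma_c \times \Sigma_f, \Sigma_r \times \Sigma_f, \Sigma_{\mathit{int}} \times \Sigma_f)$, and then obtains $S_1$ by projecting to the second component (CFLs being closed under projection). The delicate part is handled by exploiting closure of VPLs under complementation: instead of directly recognizing correct annotations, one builds a \emph{nondeterministic} VPA that guesses a position at which the annotation is wrong (an incorrect state $q_i$ with $\rep(\bot q) \neq \bot q_i$, or a mismatch $\varphi(w)(p) \neq \tau(p)$), verifies the guess by simulating $A$ on a suffix, and then complements. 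This is short precisely because VPAs are determinizable. Your argument instead constructs an explicit CFG generating $S_1$, with nonterminals $V_{\xi}^{\tau}$ that carry the transformation $\tau$ realized by the well-matched subword and the right-context transformation $\xi$; the key observation that the annotation $\rho(\xi(\psi(a_j \cdots a_m)(q_0)))$ at a position depends only on $(\xi, \psi(\cdot))$ is correct, and the productions you sketch (for the unit, concatenation, internal letter, and $a\,u'\,b$ cases, plus the two start-symbol productions that emit $\varphi(w)$ and peel off $a_1$) give a sound and complete grammar once one verifies the $\xi$-bookkeeping by induction on the bracketing structure of $w$. The deduction of $S_0$ from $S_1$ via a rational (proper-suffix) transduction matches the paper's one-line remark.

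Both proofs are valid. The paper's is shorter and leans on VPL robustness (nondeterministic guessing plus complementation), which is exactly what fails for general pushdown automata; your grammar-based proof is more elementary and explicit but needs more bookkeeping to get the context parameter $\xi$ right, which you correctly flag as the main thing to check. One small caution for a write-up: be sure to note that $\psi$ is only an anti-homomorphism on $D$ (i.e.\ $\psi(v_1 v_2) = \psi(v_2) \circ \psi(v_1)$), consistently with $\varphi$, and that in the start-symbol expansion the case where $a_1$ is a call letter produces a descending (not well-matched) suffix $u'\,b\,u''$, so it must be decomposed into two $V$-nonterminals with the intervening return letter $b$ emitted separately, exactly as you indicate.
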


\begin{proof}
Since $S_0$ is the set of all proper suffixes of words from $S_1$ it suffices to consider $S_1$.
	We will prove that $\{ w \otimes \sigma_1(w) \mid w \in W \}$ is a VPL over the pushdown alphabet
	$(\Sigma_c \times \Sigma_f,\Sigma_r \times \Sigma_f,\Sigma_{\mathit{int}} \times \Sigma_f)$.
	Since the class of context-free languages is closed under projections
	it then follows that $S_1$ is context-free.
	A VPA can test whether the first component $w = a_1 \cdots a_n$ is well-matched
	and whether the second component has the form $\tau q_2 \cdots q_n \in Q^Q Q^*$.
	Since VPLs are closed under Boolean operations, it suffices to test whether
	$\tau \neq \varphi(w)$ or there exists a state $q_i$ with $\nu_A(a_i \cdots a_n) \neq \bot q_i$.
	To guess an incorrect state we use a VPA whose stack alphabet
	contains all stack symbols of $A$ and a special symbol $\#$
	representing the stack bottom.
	We guess and read a prefix of the input word
	and push/pop only the special symbol $\#$ on/from the stack.
	Then at some point we store the second component $q_i$ in the next symbol
	and simulate $A$ on the remaining suffix.
	Finally, we accept if and only if the reached state is $q$ and $\rep(\bot q) \neq \bot q_i$.
	Similarly, we can verify $\tau$ by testing whether there exists a state $p \in Q$ with
	$\varphi(w)(p) \neq \tau(p)$.
\end{proof}

\begin{lemma}
	\label{lem:S-unbounded}
	The language $S_0$ is bounded if and only if $S_1$ is bounded.
	If $S_0$ is not bounded then the $\cev{\nu}_A$-growth of $\Sigma^*$ is exponential
	and therefore $V_L(n) \notin o(n)$.
\end{lemma}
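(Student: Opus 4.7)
The plan for Lemma~\ref{lem:S-unbounded} splits into the equivalence $S_0$ bounded iff $S_1$ bounded, and the separate lower bound when $S_0$ is unbounded. Both halves reduce to known facts once the relationship between $S_0$ and $S_1$ is made precise.

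For the equivalence, one direction is immediate from the preliminaries: by construction $S_0$ is the set of proper suffixes of words in $S_1$, and the cited result \cite[Lemma~1.1(c)]{GS1964} says that every set of factors of a bounded language is bounded, so boundedness of $S_1$ transfers to $S_0$. For the converse direction, I would observe that every $s \in S_1$ has the form $\tau q_2 \cdots q_n$ where $w = a_1 \cdots a_n \in W \setminus \{\eps\}$ and $\tau = \varphi(w)$, and moreover $q_2 \cdots q_n$ is literally $\sigma_0(a_2 \cdots a_n)$, using that the proper suffix $a_2 \cdots a_n$ of a well-matched word is descending. Hence $S_1 \subseteq \bigcup_{\tau \in Q^Q} \{\tau\} \, S_0$, a finite union of concatenations of a single letter with a bounded language, which is bounded.

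For the second claim, assume $S_0$ is not bounded. By Lemma~\ref{lem:wmf-mon} the language $S_0$ is context-free, so the Ginsburg dichotomy cited in the preliminaries (polynomial growth iff bounded) forces $S_0$ to grow exponentially. Since $\sigma_0$ preserves length, $\sigma_0(D \cap \Sigma^{\le n}) = S_0 \cap Q^{\le n}$, so $|\sigma_0(D \cap \Sigma^{\le n})|$ grows exponentially in $n$. For each $w = a_1 \cdots a_n \in D$, every suffix $a_i \cdots a_n$ is again descending, so $\cev{\nu}_A(w) = \bot q_1 \cdot \bot q_2 \cdots \bot q_n$ is in bijection with $\sigma_0(w) = q_1 \cdots q_n$ (just strip the constant $\bot$'s). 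Consequently $|\cev{\nu}_A(D \cap \Sigma^{\le n})| = |\sigma_0(D \cap \Sigma^{\le n})|$ grows exponentially, and since $D \subseteq \Sigma^*$ this lower bound transfers to $|\cev{\nu}_A(\Sigma^{\le n})|$. Theorem~\ref{thm:psiL} then gives $V_L(n) = \log |\cev{\nu}_A(\Sigma^{\le n})| \notin o(n)$.

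No step is genuinely hard; the argument is essentially bookkeeping on top of two imported results (Ginsburg's growth dichotomy and the factor-boundedness lemma). The only point that requires a little care is the inclusion $S_1 \subseteq Q^Q \, S_0$, which hinges on the combinatorial fact that a proper suffix of a well-matched word is always descending—a short case analysis on whether the removed prefix starts with an internal letter or with a call letter matched later in $w$.
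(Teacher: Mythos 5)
Your proof is correct and follows essentially the same route as the paper: both directions of the equivalence are argued via the same two observations (factors of bounded languages are bounded, and $S_1 \subseteq \bigcup_{\tau \in Q^Q} \tau S_0$), and the lower bound uses the same bijection between $\cev{\nu}_A(w)$ and $\sigma_0(w)$ on descending words together with the Ginsburg growth dichotomy. Your additional remark that $q_2 \cdots q_n = \sigma_0(a_2 \cdots a_n)$, justified by the fact that a proper suffix of a well-matched word is descending, is a slightly more explicit account of the inclusion $S_1 \subseteq Q^Q S_0$ than the paper gives, but the substance is identical.
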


\begin{proof}
Assume that $S_0 \subseteq s_1^* \cdots s_k^*$ is bounded.
Since $S_1 \subseteq \bigcup \{ \tau S_0 \mid \tau \in Q^Q \}$
we have $S_1 \subseteq \tau_1^* \cdots \tau_m^* s_1^* \cdots s_k^*$ for any enumeration $\tau_1, \dots, \tau_m$
of $Q^Q$.
Conversely, if $S_1$ is bounded then each word in $S_0$ is a factor, namely a proper suffix, of a word from $S_1$.
Therefore $S_0$ must be also bounded.

If the context-free language $S_0 = \sigma_0(D) \subseteq Q^*$ is not bounded then its growth must be exponential.
Recall that $\cev{\nu}_A(w)$ and $\sigma_0(w)$ are equal for all $w \in D$ up to the $\bot$-symbol.
Hence $|\cev{\nu}_A(\Sigma^{\le n})| \ge |\cev{\nu}_A(D \cap \Sigma^{\le n})| = |\sigma_0(D \cap \Sigma^{\le n})| = |S_0 \cap Q^{\le n}|$, which proves the growth bound.
\end{proof}

\subparagraph{Bounded overapproximation.}
By Lemma~\ref{lem:S-unbounded} we can restrict ourselves to the case that $S_0$ and $S_1$ are bounded languages,
which will be assumed in the following.
We define $\Psi(a_1 \cdots a_n) = \{ (a_1,n), (a_2,n-1) \dots, (a_n,1) \}$
and $\Psi(L) = \bigcup_{w \in L} \Psi(w)$.

\begin{lemma}
	\label{lem:apx}
	Let $K$ be a bounded context-free language.
	Then there exists a bounded regular superset $R \supseteq K$
	such that $\{ |w| \mid w \in K \} = \{ |w| \mid w \in R \}$ and $\Psi(K) = \Psi(R)$,
	called a {\em bounded overapproximation} of $K$.
\end{lemma}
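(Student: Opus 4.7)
The plan is to take $R$ as the intersection of the boundedness frame $w_1^* \cdots w_k^*$ (which exists by assumption) with two regular filters: a length filter $\{w \in \Sigma^* \mid |w| \in \{|u| \mid u \in K\}\}$ and a position filter $\{w \in \Sigma^* \mid w_{|w|-j+1} \in A_j \text{ for all } 1 \le j \le |w|\}$, where $A_j = \{a \in \Sigma \mid (a,j) \in \Psi(K)\}$. The length set $\{|u| \mid u \in K\}$ is semilinear, hence eventually periodic, since the Parikh image of a context-free language is semilinear and length is a linear projection of it; so the length filter is a regular unary predicate on the word length.

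The heart of the argument is to show that for each $a \in \Sigma$ the set $P_a = \{j \mid (a,j) \in \Psi(K)\}$ is eventually periodic. To do this I would introduce the rational transduction
\[
T_a = \{(uav, c^{|v|+1}) \mid u,v \in \Sigma^*\} \subseteq \Sigma^* \times c^*
\]
that nondeterministically marks one occurrence of $a$ in the input and outputs $c^{|v|+1}$ where $v$ is the suffix trailing that occurrence. Then $T_a K = \{c^j \mid j \in P_a\}$ is context-free, since the image of a context-free language under a rational transduction is context-free, as recalled in the preliminaries. Every unary context-free language is regular, so $P_a$ is eventually periodic. Consequently the sequence $(A_j)_{j \ge 1}$ of subsets of $\Sigma$ is eventually periodic, and a DFA reading its input from right to left while tracking $j$ modulo the common period of the $P_a$'s recognizes the position filter.

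Taking $R$ to be the intersection of these three regular languages yields a bounded regular superset of $K$: boundedness comes from the frame, regularity from closure under intersection, and $K \subseteq R$ holds by the definitions of the two filters. Equality of length sets follows from $K \subseteq R$ together with the length filter. For $\Psi$-equality, $\Psi(K) \subseteq \Psi(R)$ is immediate from $K \subseteq R$; conversely, if $(a,j) \in \Psi(w)$ for some $w \in R$ then $w_{|w|-j+1} = a$, and the position filter forces $a \in A_j$, i.e., $(a,j) \in \Psi(K)$. The main obstacle I anticipate is establishing eventual periodicity of each $P_a$; everything else is routine verification once unary-context-free-is-regular is invoked.
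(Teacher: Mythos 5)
Your proposal is correct and follows essentially the same approach as the paper: intersect the bounded frame $w_1^* \cdots w_k^*$ with a length filter and a $\Psi$-based position filter, then establish regularity of the position filter via the rational transduction $T_a$ projecting onto the (unary) position of a marked $a$-occurrence, applied to the context-free language $K$. The paper invokes Parikh's theorem where you invoke "unary context-free implies regular," but these are the same fact in slightly different clothing, and your $T_a$ (mapping $uav \mapsto c^{|v|+1}$) is the paper's $T_a$ (mapping $a_1\cdots a_n \mapsto \Box^{n-i+1}$ with $a_i=a$) up to notation.
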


\begin{proof}
	We use Parikh's theorem \cite{Parikh66}, which implies
	that for every context-free language $K \subseteq \Sigma^*$ the set $\{|w| \mid w \in K\}$ is {\em semilinear},
	i.e. a finite union of arithmetic progressions, and hence
	$\{v \in \Sigma^* \mid \exists w \in K : |v|=|w|\}$ is a regular language.
	Assume that $K \subseteq w_1^* \cdots w_k^*$ for some $w_1, \dots, w_k \in \Sigma^*$.
	We define
	\[
		R = (w_1^* \cdots w_k^*) \cap \{ v \in \Sigma^* \mid \exists w \in K: |v|=|w|  \}
		\cap \{ w \in \Sigma^* \mid \Psi(w) \subseteq \Psi(K) \}.
	\]
	Clearly, $K$ is contained in $R$ and it remains to verify that the third part is regular.
	It suffices to show that for each $a \in \Sigma$ the set $P_a = \{ i \mid (a,i) \in \Psi(K) \}$ is semilinear
	because then an automaton can verify the property $\Psi(w) \subseteq \Psi(K)$.
	Consider the transducer
	\[
		T_a = \{ (a_1 \cdots a_n,\Box^{n-i+1}) \mid a_1 \cdots a_n \in \Sigma^*, \, a_i = a \}.
	\]
	It is easy to see that $T_a$ is rational and $T_aK = \{ \Box^i \mid i \in P_a \}$.
	The claim follows again from Parikh's theorem.
\end{proof}

For each $\tau \in Q^Q$ let $R_\tau$ be a bounded overapproximation of $\tau^{-1} S_1$
and let $R_1 = \bigcup_{\tau \in Q^Q} (\tau R_\tau)$.
Let $R_0 = \bigcup_{\tau \in Q^Q} \Suf(R_\tau)$,
which is the set of all proper suffixes of words in $R_1$.
Both $R_0$ and $R_1$ are also bounded languages.
Finally, set $\RegFlat = R_0 (\Sigma_c \cup R_1)^*$, which is the same as $\Suf((\Sigma_c \cup R_1)^*)$
and is suffix-closed.
According to the definition of bounded overapproximations
we can approximate a word $v = \tau q_2 \cdots q_k \in R_1$ in two possible ways:
Firstly, define $\apx_\ell(v)$ to be any word of the form $\apx_\ell(v) = \tau p_2 \cdots p_k \in S_1$
with $|v| = |\apx_\ell(v)|$.
Secondly, for any position $2 \le i \le k$ define $\apx_i(v)$ to be any word
$\apx_i(v) = \tau s' q_i p_{i+1} \cdots p_k \in S_1$ where $s', p_{i+1} \cdots p_k \in Q^*$.
If $r = r_0 r_1 \cdots r_m \in \RegFlat$ then we can replace any internal factor $r_i \in R_1$
by $\apx_\ell(r_i)$ or any $\apx_j(r_i)$ without changing the value of $\nu_f(r)$.

\begin{proposition}
	\label{prop:flat-reg-eq}
	$\nu_f(\Flat) = \nu_f(\RegFlat) = \Rep$.
\end{proposition}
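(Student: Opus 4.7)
The plan is to observe that $\nu_f(\Flat) = \Rep$ is already supplied by Proposition~\ref{prop:nu-f}, so the remaining task is to sandwich $\nu_f(\RegFlat)$ between $\Rep$ from below (via $\Flat$) and $\Rep$ from above (via the codomain of $\rep$). In particular, I will prove the two inclusions $\Flat \subseteq \RegFlat$ and $\RegFlat \subseteq \AllFlat$, and then chain them with the obvious fact $\rep(\rConf) = \Rep$.

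First I would establish $\Flat \subseteq \RegFlat$, which immediately yields $\nu_f(\Flat) \subseteq \nu_f(\RegFlat)$. Since $\Flat = S_0(\Sigma_c \cup S_1)^*$ and $\RegFlat = R_0(\Sigma_c \cup R_1)^*$, it is enough to check $S_0 \subseteq R_0$ and $S_1 \subseteq R_1$. Any $v \in S_1$ begins with some $\tau \in Q^Q$, so $\tau^{-1}v \in \tau^{-1} S_1 \subseteq R_\tau$ and hence $v \in \tau R_\tau \subseteq R_1$. For $S_0$: by construction every $s \in S_0$ is a proper suffix of some $\tau u \in S_1$, so $u \in \tau^{-1} S_1 \subseteq R_\tau$ and $s$ is a suffix of $u$, placing $s$ in $\Suf(R_\tau) \subseteq R_0$.

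Second I would verify $\nu_f(\RegFlat) \subseteq \Rep$. Since $\nu_f = \rep \circ t_f$ and the image of $\rep$ is by definition $\Rep$, this reduces to showing that $t_f$ is well-defined on $\RegFlat$, i.e.\ $\RegFlat \subseteq \AllFlat$. Under the standing assumption after Lemma~\ref{lem:S-unbounded} that $S_1$ is bounded, each quotient $\tau^{-1} S_1$ is a bounded context-free language over the alphabet $Q$ (factors of a bounded language are bounded), so Lemma~\ref{lem:apx} produces an overapproximation $R_\tau \subseteq Q^*$. Therefore $R_1 \subseteq Q^Q Q^*$ and $R_0 = \bigcup_\tau \Suf(R_\tau) \subseteq Q^*$, giving $\RegFlat \subseteq Q^*(\Sigma_c \cup Q^Q Q^*)^* = \AllFlat$. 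Chaining the inclusions yields $\Rep = \nu_f(\Flat) \subseteq \nu_f(\RegFlat) \subseteq \Rep$.

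There is no deep obstacle here; the proposition is essentially a consistency check that the overapproximations $R_\tau$ were constructed so as to preserve the flattening structure while only enlarging the set of representable words. The only point requiring care is the alphabet bookkeeping used to conclude $R_\tau \subseteq Q^*$, which depends on viewing $\tau^{-1} S_1$ as a language over $Q$ rather than over the full flattening alphabet $\Sigma_f$, and on using the (previously recorded) closure of boundedness under taking factors.
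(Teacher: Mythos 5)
Your first inclusion is fine and matches what the paper dismisses as ``clear'': you verify $S_0\subseteq R_0$ and $S_1\subseteq R_1$ from the overapproximation property $\tau^{-1}S_1\subseteq R_\tau$, which gives $\Flat\subseteq\RegFlat$ and hence $\Rep=\nu_f(\Flat)\subseteq\nu_f(\RegFlat)$. The alphabet bookkeeping for $\RegFlat\subseteq\AllFlat$ is also correct.

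The gap is in the step that reduces $\nu_f(\RegFlat)\subseteq\Rep$ to $\RegFlat\subseteq\AllFlat$. You appeal to $\nu_f=\rep\circ t_f$ and the fact that $\rep$ has image $\Rep$, but $\rep$ is only defined on the set $\rConf$ of \emph{reachable} configurations, and $\RegFlat\subseteq\AllFlat$ does not by itself guarantee $t_f(\RegFlat)\subseteq\rConf$. The paper's type annotation $t_f\colon\AllFlat\to\rConf$ is not actually established (and can fail: if $q\in Q$ is a state with $\bot q\notin\rConf$, then $q\in\AllFlat$ but $t_f(q)=\bot q\notin\rConf$), so you cannot simply read the containment off the declared codomain. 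What you must show is that the specific flattenings in $\RegFlat$ land in $\rConf$ under $t_f$, and this is exactly where the $\Psi$-preservation of the bounded overapproximation enters: the first state $q_1$ of a word in $R_0$ also occurs at that position in a word of $S_0$, so $\bot q_1\in\Rep\subseteq\rConf$, and each $\tau$ occurring in $R_1$ is of the form $\varphi(w)$, so the subsequent updates stay inside $\rConf$. The paper packages precisely this reasoning into a replacement argument: every $r\in\RegFlat$ is transformed, using $\apx_\ell$ on internal $R_1$-factors and $\apx_i$ on the initial $R_0$-factor, into an $r'\in\Flat$ with $\nu_f(r)=\nu_f(r')$, which shows $\nu_f(\RegFlat)\subseteq\nu_f(\Flat)$ directly and at the same time establishes that $\nu_f$ is genuinely (totally) defined on $\RegFlat$ --- a fact that is needed later when Theorem~\ref{thm:rational-dichotomy} is applied with domain $\RegFlat$. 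Your version either implicitly assumes this totality or, read literally as a statement about images of a partial function, proves a weaker claim than what the rest of the argument uses.
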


\begin{proposition}
	\label{prop:transf-growth}
	If $\RegFlat$ contains a linear fooling set for $\nu_f$
	then also $\Flat$ contains a linear fooling set for $\nu_f$.
\end{proposition}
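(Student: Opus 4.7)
The plan is to convert the given linear fooling scheme $(u_2,v_2,u,v,Z) \subseteq \RegFlat$ into one in $\Flat$ by a length-preserving, blockwise $\apx_\ell$-replacement. The remark immediately before the proposition grants that replacing an internal $R_1$-factor by its $\apx_\ell$-image preserves $\nu_f$, and Lemma~\ref{lem:apx}---in particular the conditions $\Psi(R_\tau) = \Psi(\tau^{-1}S_1)$ together with matching length sets---guarantees that an $\apx_\ell$-replacement of any given length exists for every $R_1$-block, and analogously a length-preserving, first-symbol-preserving replacement exists for the initial $R_0$-prefix. Since $t_f$ and hence $\nu_f$ depend only on the $\Sigma_c$-symbols, the leading $\tau$'s of the $R_1$-blocks, and the first $Q$-symbol of $r_0$, all of which are preserved by these replacements, the resulting map is $\nu_f$-invariant.

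Concretely, every $r \in \RegFlat$ decomposes canonically as $r = r_0 r_1 \cdots r_m$ with $r_0 \in R_0$ a (possibly empty) $Q$-run and $r_i \in \Sigma_c \cup R_1$ for $i \ge 1$, the decomposition being determined by the positions of the $\Sigma_c$- and $Q^Q$-symbols. Let $\alpha \colon \RegFlat \to \Flat$ be the induced blockwise replacement, so that $|\alpha(r)| = |r|$ and $\nu_f(\alpha(r)) = \nu_f(r)$. My proposed fooling scheme in $\Flat$ is $(\alpha(u_2), \alpha(v_2), \alpha(u), \alpha(v), \alpha(Z))$; the length equality $|\alpha(u_2)| = |\alpha(v_2)|$, the suffix relations, and the distinguishing property all transfer directly from the original scheme through $\alpha$'s length preservation and $\nu_f$-invariance.

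The main obstacle is that $\alpha$ need not commute with concatenation: at a word boundary, if the left factor ends in an $R_1$-block and the right factor begins with a non-empty $Q$-prefix, the two pieces fuse in $\RegFlat$ into a single larger $R_1$-block of the combined word, so $\alpha$ applied globally may differ from the juxtaposition of $\alpha$-applications, and the naive $\alpha(u_2)\alpha(w)\alpha(z_n)$ can fail to lie in $\Flat$ on such a fused block. I plan to resolve this by first replacing $u,v$ with $u^N, v^N$ for a sufficiently large constant $N$ (which remains a fooling scheme, since $\{u^N,v^N\}^{\le n} \subseteq \{u,v\}^{\le Nn}$ and one may take $z_{Nn}$ for the distinguishing word) and then choosing the $\apx_\ell$-replacements consistently across the finitely many possible boundary types---each specified by the leading $\tau \in Q^Q$ of the straddling block and the split position within it. A pigeonhole argument over these finitely many types produces compatible choices under which $\alpha$ factors through concatenation on all words of the scheme, yielding the desired linear fooling set in $\Flat$.
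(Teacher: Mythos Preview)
Your second paragraph asserts that the suffix relations transfer through $\alpha$ ``directly from the original scheme through $\alpha$'s length preservation and $\nu_f$-invariance''. This is false. If $u_2$ begins in the middle of an $R_1$-block $r = \tau q_2 \cdots q_k$ of $u$, say at position $j$, then $\alpha(u)$ replaces $r$ by $\apx_\ell(r) = \tau p_2 \cdots p_k$, while $\alpha(u_2)$ replaces its $R_0$-prefix $q_j q_{j+1} \cdots q_k$ by a word beginning with $q_j$. Unless $p_j = q_j$, which $\apx_\ell$ does not guarantee, $\alpha(u_2)$ is not a suffix of $\alpha(u)$. Neither length preservation nor $\nu_f$-invariance gives you this; the missing tool is precisely the position-aware replacement $\apx_j$, which you invoke only for the $R_0$-prefix and not for the straddling block in $u$.

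Your third paragraph correctly notes that $\alpha$ does not commute with concatenation, but the fix is too vague to carry the argument. Passing to $u^N, v^N$ does not change the boundary type at a $u$--$u$ juncture, and the ``finitely many boundary types'' are never specified; moreover the $w$--$z_n$ boundary varies with $n$, so there is no fixed finite set to pigeonhole over. And none of this addresses the suffix problem above.

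The paper sidesteps both issues by a single preliminary normalisation: it first observes that $u$ cannot lie entirely in $Q^*$ (otherwise $\{u_2,v_2\}\{u,v\}^*$ would sit inside the bounded language $R_0$, contradicting its exponential growth), writes $u = u_3 u_4$ with $u_4$ beginning in $\Sigma_c \cup Q^Q$, and passes to the shifted scheme $(u_2 u_3,\, v_2 u_3,\, u_4 u u_3,\, u_4 v u_3,\, u_4 Z)$. After this shift every factor in $\{u,v\} \cup Z$ begins with a symbol from $\Sigma_c \cup Q^Q$, so every concatenation boundary in $\{u_2,v_2\}\{u,v\}^* Z$ is already a block boundary, and blockwise $\apx_\ell$-replacement commutes with concatenation for free---no pigeonhole needed. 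The only place a block is genuinely split is the internal cut $u = u_1 u_2$ (and $v = v_1 v_2$); there the paper replaces the single straddling $R_1$-block by $\apx_i$ rather than $\apx_\ell$, precisely to keep the state $q_i$ at the first position of $u_2$ and hence preserve both the suffix relation and the value $\nu_f(u_2 w z)$. That shift-then-$\apx_i$ combination is the idea your proposal is missing.
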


\begin{proof}[Proof of Theorem~\ref{thm:vpl-trichotomy}]

If $L = \emptyset$ or $L = \Sigma^*$ then $V_L(n) \in O(1)$.
Now assume $\emptyset \subsetneq L \subsetneq \Sigma^*$, in which case we have $V_L(n) = \Omega(\log n)$.
Furthermore we know that
$V_L(n) = \log |\cev{\nu}_f(\Flat \cap \Sigma_f^{\le n})|$
by Proposition~\ref{prop:nu-f}.
If the constructed language $S_0$ is not bounded then $V_L(n) \notin o(n)$ by Lemma~\ref{lem:S-unbounded}.
Now assume that $S_0$ is bounded, in which case we can construct the regular language $\RegFlat$.
By Theorem~\ref{thm:rational-dichotomy} the $\cev{\nu}_f$-growth of $\RegFlat$ is either polynomial or exponential
(formally, we have to restrict the domain of $\nu_f$ to the regular language $\RegFlat$).
If the $\cev{\nu}_f$-growth of $\RegFlat$ is polynomial then the same holds for its subset $\Flat$,
and hence $V_L(n) \in O(\log n)$.
If the $\cev{\nu}_f$-growth of $\RegFlat$ is exponential
then by Theorem~\ref{thm:rational-dichotomy} either
the image $\nu_f(\RegFlat)$ is not bounded or $\RegFlat$ contains a linear fooling set for $\nu_f$.
By Proposition~\ref{prop:flat-reg-eq} we have $\nu_f(\RegFlat) = \nu_f(\Flat) = \Rep$.
Hence, if $\Rep$ has exponential growth
then Proposition~\ref{prop:exp-image} implies that $\Flat$ has exponential $\cev{\nu}_f$-growth
and hence $V_L(n) \notin o(n)$.
If $\RegFlat$ contains a linear fooling set for $\nu_f$
then also $\Flat$ contains one by Proposition~\ref{prop:transf-growth}.
By Proposition~\ref{prop:fooling} the $\cev{\nu}_f$-growth of $\Flat$ is exponential
and hence $V_L(n) \notin o(n)$.
\end{proof}

\section{Dichotomy for rational functions}

\label{sec:rat-dichotomy}

In this section we will prove Theorem~\ref{thm:rational-dichotomy}.
Let $t \colon \Sigma^* \to \Omega^*$ be a rational function with
suffix-closed domain $X = \dom(t)$.
By Proposition~\ref{prop:exp-image} the interesting case is where the image $t(X)$ is polynomial growing,
i.e. a bounded language.
There are two further necessary properties in order to achieve polynomial $\cev{t}$-growth. 
Since we apply the rational function to all suffixes,
it is natural to consider right transducers, reading the input from right to left.
The first property states that $t$ has to resemble so called right-subsequential functions,
which are defined by deterministic finite right transducers.
Here we will make use of a representation of rational functions due to Reutenauer and Schützenberger,
which decomposes the rational function $t$ into a right congruence $\R_t$ and a right-subsequential transducer $B$
\cite{ReutenauerS91}.
Secondly, we demand that $B$ is well-behaved,
which means that, roughly speaking, the output produced during a run inside a strongly connected component
only depends on its entry state and the length of the run.
We will prove that in fact these properties are sufficient for the polynomial $\cev{t}$-growth
and in all other cases $X$ contains a linear fooling set.

\subparagraph{The case of finite-index right congruences.}

Let $\sim$ be a finite index right congruence on $\Sigma^*$ and $\approx$ its suffix expansion.
We will characterize those finite index right congruences $\sim$ where $\Sigma^{\le n}/{\approx}$ is polynomially bounded,
which can be viewed as a special case of Theorem~\ref{thm:rational-dichotomy}
since $\nu_\sim \colon \Sigma^* \to \Sigma^*/{\sim}$ is rational.
First assume that $\sim$ is the Myhill-Nerode right congruence $\sim_L$ of a regular language $L$.
Since $\log |\Sigma^{\le n}/{\approx}|$ is exactly the space complexity $V_L(n)$ by Theorem~\ref{thm:psiL},
this case was characterized in \cite{GHKLM18} using so called critical tuples in the minimal DFA for $L$.
We slightly adapt this definition for right congruences.
A {\em critical tuple} in a right congruence $\sim$
is a tuple of words $(u_2, v_2, u, v) \in (\Sigma^*)^4$ such that
$|u_2| = |v_2| \ge 1$,
there exist $u_1, v_1 \in \Sigma^*$ with $u = u_1u_2$, $v = v_1v_2$, and
$u_2 w \not \sim v_2 w$ for all $w \in \{u,v\}^*$.
\begin{proposition}
	\label{prop:ct}
	If $\sim$ has a critical tuple
	then $|\Sigma^{\le n}/{\approx}|$ grows exponentially
	and there exists a critical tuple $(u_2, v_2, u, v)$ in $\sim$ such that $u_2 u \sim u_2 w u$
	and $v_2 u \sim v_2 w u$ for all $w \in \{u,v\}^*$.
\end{proposition}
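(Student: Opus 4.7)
The plan is to handle the two conclusions separately. The first, exponential growth of $|\Sigma^{\le n}/{\approx}|$, follows directly from Proposition~\ref{prop:fooling} applied to $t = \nu_\sim$ with the linear fooling scheme $(u_2, v_2, u, v, \{\varepsilon\})$ supplied by the critical tuple. Every hypothesis of a linear fooling scheme is either immediate from the critical-tuple definition (length equality, the suffix relations, and the separation $\nu_\sim(u_2 w) \neq \nu_\sim(v_2 w)$ with $z_n = \varepsilon$) or automatic, since $\dom(\nu_\sim) = \Sigma^*$ is trivially suffix-closed. Because $\approx$ is the kernel of $\cev{\nu}_\sim$, exponential $\cev{\nu}_\sim$-growth of $\Sigma^*$ translates directly into exponential growth of $|\Sigma^{\le n}/{\approx}|$.

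For the strengthening, the plan is to pump inside the finite transition monoid $M$ of $\sim$. Let $\rho\colon \Sigma^* \to M$ be the canonical morphism and consider the diagonal right action of $\rho(\{u,v\}^*)$ on $(\Sigma^*/{\sim})^2$. The orbit $O = \{F_0 \cdot \rho(w) : w \in \{u,v\}^*\}$ of $F_0 = ([u_2], [v_2])$ is finite and, by the critical-tuple hypothesis, consists of off-diagonal pairs. My target is a new critical tuple whose orbit under $\{u',v'\}^*$ is a single pair, a common fixed point $F^*$ of $\rho(u')$ and $\rho(v')$, so that the two stabilization conditions become the tautology that each side equals the corresponding coordinate of $F^*$.

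To produce such a tuple I would first choose $\alpha \in \{u,v\}^*$ so that $F^* := F_0 \cdot \rho(\alpha)$ lies in a bottom strongly connected component $C$ of $O$, and further inside $e(C)$ for an idempotent $e$ in the minimum ideal of the subsemigroup $S = \langle \rho(u\alpha), \rho(v\alpha) \rangle$ of the transition monoid of $C$. Setting $u_2' := u_2 \alpha$ and $v_2' := v_2 \alpha$, the length and suffix requirements will hold for any $u', v' \in \Sigma^*$ ending in $u_2 \alpha$ and $v_2 \alpha$, respectively. The words $u', v' \in \{u,v\}^+$ are then built as return words at $F^*$ ending with $u\alpha$ and $v\alpha$: using strong connectivity of $C$, prepend to $u\alpha$ (resp.\ $v\alpha$) a word $\mu$ (resp.\ $\nu$) in $\{u,v\}^*$ that transports $F^*$ to a preimage of $F^*$ under $\rho(u\alpha)$ (resp.\ $\rho(v\alpha)$); because $F^* \in e(C)$ lies in the eventual image of these maps, such preimages exist inside $C$.

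With this setup, the distinguishability condition $u_2' w \not\sim v_2' w$ for $w \in \{u',v'\}^* \subseteq \{u,v\}^+$ is inherited from the original critical tuple via $\alpha w \in \{u,v\}^*$, and the stabilization $u_2' u' \sim u_2' w u'$, $v_2' u' \sim v_2' w u'$ holds because every $w \in \{u',v'\}^*$ fixes $F^*$ (induction on products of return words), so $\rho(wu')$ and $\rho(u')$ coincide at $F^*$. The main obstacle is arranging that $F^*$ lies in the eventual image of both $\rho(u\alpha)$ and $\rho(v\alpha)$ simultaneously. I would handle this by choosing the idempotent $e$ and its word $\alpha$ together: start from any $\alpha_0$ reaching $C$, alternately apply the idempotent powers of $\rho(u\alpha_0)$ and $\rho(v\alpha_0)$ to progressively narrow down the possible $F^*$, and use the fact that in a finite set this descending sequence of images stabilizes at a non-empty limit, which lies inside $e(C)$ for an idempotent $e \in S$ simultaneously reachable by words ending in $u\alpha_0$ and in $v\alpha_0$.
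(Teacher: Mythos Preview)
Your treatment of the first claim is fine: the linear fooling scheme $(u_2,v_2,u,v,\{\varepsilon\})$ for $\nu_\sim$ is exactly what Proposition~\ref{prop:fooling} needs, and this is the same counting argument the paper carries out directly.

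The second part has a genuine gap. Your scheme requires the \emph{same} word $\alpha$ to play two roles: it must satisfy $F^* = F_0 \cdot \rho(\alpha)$ (so that $u_2' = u_2\alpha$, $v_2' = v_2\alpha$ give $([u_2'],[v_2']) = F^*$), and it must appear as the suffix of $u' = \mu u\alpha$ and $v' = \nu v\alpha$ (so that $u_2',v_2'$ are suffixes of $u',v'$). But your iterative construction works with a fixed $\alpha_0$ and produces some $F^* \in L \subseteq \rho(u\alpha_0)(C) \cap \rho(v\alpha_0)(C)$; in general $F^* \neq F_0 \cdot \rho(\alpha_0)$. If you then set $\alpha = \alpha_0\beta$ with $F_0 \cdot \rho(\alpha_0\beta) = F^*$, the image condition you need becomes $F^* \in \rho(u\alpha_0\beta)(C) = \rho(\beta)\bigl(\rho(u\alpha_0)(C)\bigr)$, which no longer follows from $F^* \in \rho(u\alpha_0)(C)$. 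Your last paragraph acknowledges the obstacle but does not close it: the alternating-idempotent argument shows $L$ is non-empty and lies in both images for the generators built from $\alpha_0$, not for those built from the adjusted $\alpha$.

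The paper sidesteps the fixed-point search entirely. It passes to the \emph{syntactic} congruence $\equiv$ of $\sim$ (a two-sided congruence of finite index, since $x \equiv y$ iff $\ell x \sim \ell y$ for all $\ell$), takes the idempotent exponent $\omega$ of the finite monoid $\Sigma^*/{\equiv}$, and sets
\[
u' = (v^\omega u^\omega)^\omega, \qquad v' = (v^\omega u^\omega)^\omega v^\omega.
\]
A two-line computation gives $u'u' \equiv u'$ and $v'u' \equiv u'$, hence $wu' \equiv u'$ for every $w \in \{u',v'\}^*$. Since $\equiv$ is a \emph{left} congruence contained in $\sim$, this immediately yields $u_2 w u' \sim u_2 u'$ and $v_2 w u' \sim v_2 u'$, with the original $u_2,v_2$ unchanged and $u',v' \in \{u,v\}^*$ guaranteeing that $(u_2,v_2,u',v')$ is still critical. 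The point is that $u'$ is \emph{left-absorbing} for $\{u',v'\}$ in the syntactic monoid, so the stabilization holds at every point of $\Sigma^*/{\sim}$, not just at a carefully located common fixed point; this is what makes the argument short and avoids the circularity in your construction.
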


\begin{proof}
	If $(u_2, v_2, u, v)$ is critical tuple in a right congruence $\sim$
	then we claim that $|\Sigma^{\le n}/{\approx}|$ grows exponentially.
	Let $n \in \N$ and let $w \neq w' \in \{u,v\}^n$.
	There exists a word $z \in \{u,v\}^*$ such that $w$ and $w'$
	have the suffixes $u_2 z$ and $v_2 z$ of equal length.
	By the definition of critical tuples we have $u_2 z \not \sim v_2 z$,
	which implies $w \not \approx w'$.
	Therefore $|\Sigma^{\le cn}/{\approx}| \ge 2^n$ where $c = \max \{|u|,|v|\}$.
	
	The second part is based on the proof of \cite[Lemma~7.4]{GHKLM18a}.
	Let $\equiv$ be the syntactic congruence on $\Sigma^*$
	defined by $x \equiv y$ if and only if $\ell x \sim \ell y$ for all $\ell \in \Sigma^*$.
	Since $\sim$ is a right congruence $\equiv$ is a congruence on $\Sigma^*$ of finite index
	satisfying $\equiv \, \subseteq \, \sim$.
	Define the monoid $M = \Sigma^*/{\equiv}$.
	It is known that there exists a number $\omega \in \N$ such that $m^\omega$ is idempotent
	for all $m \in M$, i.e. $m^\omega \cdot m^\omega = m^\omega$.
	Now let $(u_2, v_2, u, v)$ be a critical tuple and define $u' = (v^\omega u^\omega)^\omega$
	and $v' = (v^\omega u^\omega)^\omega v^\omega$.
	Since $u_2$ is a suffix of $u'$, $v_2$ is a suffix of $u'$ and $u',v' \in \{u,v\}^*$
	the tuple $(u_2,v_2,u',v')$ is again a critical tuple in $\sim$.
	Furthermore we have
	$u'u' = (v^\omega u^\omega)^\omega (v^\omega u^\omega)^\omega \equiv (v^\omega u^\omega)^\omega = u'$
	and $v'u' = (v^\omega u^\omega)^\omega v^\omega (v^\omega u^\omega)^\omega \equiv (v^\omega u^\omega)^\omega = u'$,
	and therefore $u' \equiv wu'$ for all $w \in \{u',v'\}^*$.
	Since $\equiv$ is a congruence this implies
	$u_2 u' \equiv u_2 w u'$ and $v_2 u' \equiv v_2 w u'$ for all $w \in \{u',v'\}^*$,
	and thus also $u_2 u' \sim u_2 w u'$ and $v_2 u' \sim v_2 w u'$,
	which concludes the proof.
\end{proof}

\begin{theorem}
	\label{thm:log}
	Let $L \subseteq \Sigma^*$ be regular.
	Then $V_L(n) \in O(\log n)$ if and only if $|\Sigma^{\le n}/{\approx_L}|$ is polynomially bounded if and only if
	$\sim_L$ has no critical tuple.
\end{theorem}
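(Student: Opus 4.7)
The first equivalence is immediate from Theorem~\ref{thm:psiL}, which gives $V_L(n) = \log|\Sigma^{\le n}/{\approx_L}|$. The implication from the existence of a critical tuple in $\sim_L$ to exponential (hence non-polynomial) growth of $|\Sigma^{\le n}/{\approx_L}|$ is the first assertion of Proposition~\ref{prop:ct}. Furthermore, the discussion following Theorem~\ref{thm:psiL} already establishes the dichotomy that $|\Sigma^{\le n}/{\approx_L}|$ is either polynomially or exponentially bounded for every regular $L$. The only remaining content is therefore: if the growth is exponential, then $\sim_L$ admits a critical tuple.

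My plan is to work with the right-subsequential transducer $T$ computing $\cev{\nu}_L \colon \Sigma^* \to Q^*$, where $Q = \Sigma^*/{\sim_L}$ is finite. Reading its input from right to left, $T$ carries the transition monoid element $\varphi(a_i \cdots a_n)$ as its state, updated deterministically on each new letter, and emits $\varphi(a_i \cdots a_n)(q_0) \in Q$. Since $\approx_L$ is the kernel of $\cev{\nu}_L$ and $T$ is real-time, exponential growth of $|\Sigma^{\le n}/{\approx_L}|$ coincides with exponential growth of the image $R = \cev{\nu}_L(\Sigma^*) \subseteq Q^*$. The key technical step, and the main obstacle I anticipate, is to extract from the latter a pair of \emph{branching loops} in $T$: a reachable state $\tau$ and words $u, v \in \Sigma^*$ of the same length satisfying $\varphi(u)\tau = \varphi(v)\tau = \tau$ such that the output sequences emitted along these two loops differ. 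Intuitively, if every pair of equal-length loops at every state of $T$ produced the same output, then within each strongly connected component of $T$ the output would be a function only of the entry state and the sojourn length; since any input induces a macroscopic path through boundedly many SCCs with only polynomially many sojourn-length profiles, $R$ would grow polynomially, contradicting the assumption.

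Given $u = a_1 \cdots a_k$, $v = b_1 \cdots b_k$, and $\tau$ as above, let $i$ be a position at which the two output sequences differ and set $u_2 = a_i \cdots a_k$ and $v_2 = b_i \cdots b_k$, so $|u_2| = |v_2| \ge 1$. Fix $z \in \Sigma^*$ with $\varphi(z) = \tau$, which exists since $\tau$ is reachable. Then $[u_2 z]_{\sim_L} = (\varphi(u_2)\tau)(q_0)$ and $[v_2 z]_{\sim_L} = (\varphi(v_2)\tau)(q_0)$ are exactly the two distinct symbols at position $i$ of the respective output sequences, so $u_2 z \not\sim_L v_2 z$. For any $w \in \{u,v\}^*$, iterating the loop property yields $\varphi(w)\tau = \tau$, whence $u_2 w z \sim_L u_2 z$ and $v_2 w z \sim_L v_2 z$. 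If $u_2 w \sim_L v_2 w$ held, the right congruence property of $\sim_L$ would force $u_2 w z \sim_L v_2 w z$, contradicting $u_2 z \not\sim_L v_2 z$. Therefore $(u_2, v_2, u, v)$ is a critical tuple in $\sim_L$, completing the converse.
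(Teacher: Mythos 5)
Your reduction of the first two equivalences to Theorem~\ref{thm:psiL} and Proposition~\ref{prop:ct} matches the paper exactly, and your observation that the only remaining content is ``exponential growth implies a critical tuple'' is accurate. For this remaining implication the paper simply cites \cite[Lemma~7.2]{GHKLM18} (a result about the minimal DFA), precisely because the self-contained route through Theorem~\ref{thm:rational-dichotomy} would be circular: Theorem~\ref{thm:rational-dichotomy} depends on Proposition~\ref{prop:poly}, which invokes Theorem~\ref{thm:log-cong}, which in turn rests on Theorem~\ref{thm:log}. You instead sketch a direct proof by analysing the right-subsequential transducer $T$ for $\cev{\nu}_L$. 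This is a genuinely different route from the paper. It correctly avoids the circularity, since for $t = \nu_L$ the look-ahead congruence $\R_t$ is trivial (all values of $\nu_L$ have length one, so adjacency is automatic), so the only obstruction to polynomial growth is the ``well-behaved'' condition, and Theorem~\ref{thm:log-cong} is never needed. Your second half --- from the branching loops $(\tau,u,v)$ to the critical tuple $(u_2,v_2,u,v)$ --- is correct and complete: the position $i$ where outputs differ gives $u_2z\not\sim_L v_2z$ for any $z$ with $\varphi(z)=\tau$, the loop property propagates this to $u_2wz\not\sim_L v_2wz$ for every $w\in\{u,v\}^*$, and right-congruence of $\sim_L$ then forces $u_2w\not\sim_L v_2w$.

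The genuine gap is exactly where you flag it: the implication ``$R$ has exponential growth $\Rightarrow$ $T$ has branching loops.'' Your intuition states that, absent branching loops, the output inside an SCC is a function only of the entry state and the sojourn length. As stated, the hypothesis concerns only equal-length \emph{loops} at a state, but an SCC sojourn is typically a run $p\to q$ with $q\ne p$, and comparing two such runs $p\to q$ and $p\to q'$ of the same length requires completing both to loops at $p$ of a \emph{common} length. This does hold, but it is not immediate: one must use that, in a strongly connected component, the lengths of runs from $q$ back to $p$ (and from $q'$ back to $p$) form eventually-full arithmetic progressions modulo the gcd of the SCC's cycle lengths, and that the two residue classes agree because $p\to q$ and $p\to q'$ already have equal length. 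After that, one must still argue that the exit state of each sojourn (which need not be determined by entry state and sojourn length) contributes only a constant factor to the count, so that the number of output profiles $(p_1,\ell_1,q_1,p_2,\ell_2,q_2,\dots)$ with $\sum\ell_j\le n$ is polynomial in $n$. None of this is fatal, but it is a non-trivial argument --- roughly the content of Propositions~\ref{prop:wb} and~\ref{prop:poly} specialised to this transducer --- and it must be written out before your proof is complete. As given, your proposal replaces the paper's black-box citation with a plausible but unfinished self-contained argument.
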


\begin{proof}
	The first equivalence follows from Theorem~\ref{thm:psiL}.
	By Proposition~\ref{prop:ct} the existence of a critical tuple in $\sim$ implies
	exponential growth of $|\Sigma^{\le n}/{\approx}|$.
	
	Now assume that $V_L(n) \notin O(\log n)$.
	By \cite[Lemma 7.2]{GHKLM18} there exist words $u_2,v_2,u,v \in \Sigma^*$
	such that $u_2$ is a suffix of $u$, $v_2$ is a suffix of $v$, $|u_2| = |v_2|$
	and $u_2 w \not \sim_L v_2 w'$ for all $w,w' \in \{u,v\}^*$
	(one needs the fact that $x \sim_L y$ if and only if $x$ and $y$ reach the same state
	in the minimal DFA for $L$).
	Since in particular $u_2 w \not \sim_L v_2 w$ for all $w \in \{u,v\}^*$
	the tuple $(u_2, v_2, u, v)$ constitutes a critical tuple.
\end{proof}

We generalize this theorem to arbitrary finite index right congruences (Theorem~\ref{thm:log-cong}).
Given equivalence relations $\sim$ and $\sim'$ on a set $X$,
we say that $\sim'$ is {\em coarser} than $\sim$ if $\sim \, \subseteq \, \sim'$,
i.e. each $\sim'$-class is a union of $\sim$-classes.
The {\em intersection} $\sim \cap \sim'$ is again an equivalence relation on $X$.

	\begin{lemma}
	\label{lem:ct-coarse-intersect}
		Let $\sim$ and $\sim'$ be right congruences.
		\begin{enumerate}[(a)]
		\item If $\sim'$ is coarser than $\sim$ and $\sim$ has no critical tuple,
		then $\sim'$ also has no critical tuple.
		\item If $\sim$ and $\sim'$ have no critical tuple
	then $\sim \cap \sim'$ is also a right congruence which has no critical tuple
	\end{enumerate}
	\end{lemma}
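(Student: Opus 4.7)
The plan is as follows. For part (a) I would argue by contrapositive: suppose $(u_2, v_2, u, v)$ is a critical tuple for $\sim'$; since $\sim \,\subseteq\, \sim'$, the implication $x \sim y \Rightarrow x \sim' y$ contrapositively yields $x \not\sim' y \Rightarrow x \not\sim y$, so the same tuple witnesses $u_2 w \not\sim v_2 w$ for every $w \in \{u,v\}^*$, hence is critical for $\sim$ as well, contradicting the hypothesis.

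For part (b), that $\sim \cap \sim'$ is a right congruence of finite index is immediate: if $x \sim y$ and $x \sim' y$ then $xz \sim yz$ and $xz \sim' yz$, and the index of $\sim \cap \sim'$ is at most the product of the indices of $\sim$ and $\sim'$. The non-trivial claim is the absence of critical tuples, which I would prove by contradiction. Assume $\sim \cap \sim'$ has a critical tuple; I would invoke the strengthened form given by Proposition~\ref{prop:ct} (applied to $\sim \cap \sim'$) to obtain a critical tuple $(u_2, v_2, u, v)$ satisfying $u_2 u \,(\sim \cap \sim')\, u_2 w u$ and $v_2 u \,(\sim \cap \sim')\, v_2 w u$ for all $w \in \{u,v\}^*$. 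Setting $w = u$ in the criticality condition gives $u_2 u \not (\sim \cap \sim') v_2 u$, so one of the two relations --- say $\sim$, by symmetry --- already satisfies $u_2 u \not\sim v_2 u$. The absorption property then chains
\[
    u_2 w u \,\sim\, u_2 u \,\not\sim\, v_2 u \,\sim\, v_2 w u \qquad \text{for every } w \in \{u,v\}^*,
\]
so $u_2 w u \not\sim v_2 w u$ holds uniformly in $w$.

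From this I would assemble a critical tuple for $\sim$ by setting $(u_2', v_2', u', v') := (u_2 u,\, v_2 u,\, u u,\, v u)$. Since any critical tuple satisfies $|u_2| = |v_2| \ge 1$ with $u_2, v_2$ non-empty suffixes of $u, v$, the words $u$ and $v$ are non-empty, giving $|u_2'| = |v_2'| \ge 1$; and $u_2'$ is a suffix of $u'$ and $v_2'$ of $v'$ by inheritance from the original tuple. For any $w \in \{u', v'\}^* = \{uu, vu\}^*$ the product $uw$ lies in $\{u,v\}^*$ and either equals $u$ or ends in $u$, so we can write $uw = w'' u$ with $w'' \in \{u,v\}^*$; the chain above then gives $u_2' w = u_2 w'' u \not\sim v_2 w'' u = v_2' w$. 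Hence $(u_2', v_2', u', v')$ is critical for $\sim$, contradicting the hypothesis. The main obstacle, and the reason the strengthened Proposition~\ref{prop:ct} is essential, is that a raw critical tuple for $\sim \cap \sim'$ only guarantees that \emph{some} relation separates $u_2 w$ from $v_2 w$ for each $w$, with the choice possibly depending on $w$; the absorption property lets us collapse all these separations onto the single pair $(u_2 u, v_2 u)$, so that a single one of $\sim, \sim'$ must carry the burden uniformly.
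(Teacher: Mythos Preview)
Your proof is correct. Part (a) is essentially identical to the paper's argument, just phrased as a contrapositive rather than as preservation of a positive formula.

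For part (b), however, you take a genuinely different route from the paper. The paper gives a short direct argument: given any candidate $(u_2,v_2,u,v)$, it first uses the non-criticality of $\sim$ to find $w\in\{u,v\}^*$ with $u_2w\sim v_2w$, then applies the non-criticality of $\sim'$ to the new candidate $(u_2w,v_2w,uw,vw)$ to find $x\in\{uw,vw\}^*$ with $u_2wx\sim' v_2wx$; the right-congruence property of $\sim$ carries the first equivalence along, so $wx\in\{u,v\}^*$ witnesses $u_2wx \mathrel{(\sim\cap\sim')} v_2wx$. No contradiction, no appeal to Proposition~\ref{prop:ct}.

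Your argument instead goes by contradiction and relies on the absorption property provided by Proposition~\ref{prop:ct} to pin the failure of $\sim\cap\sim'$ onto a single pair $(u_2u,v_2u)$, then rebuilds a critical tuple for whichever of $\sim,\sim'$ separates that pair. This is perfectly valid, and your verification that $(u_2u,v_2u,uu,vu)$ satisfies the suffix and separation conditions is careful and correct. The trade-off is that Proposition~\ref{prop:ct} uses the finite-index hypothesis (via the syntactic monoid), so your proof of (b) needs $\sim$ and $\sim'$ to have finite index, whereas the paper's direct argument does not. In the paper's setting this is harmless, but the paper's proof is both more elementary and slightly more general.
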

	
	\begin{proof}
	Closure under coarsening is clear because the property ``$\sim$ has no critical tuple'' is {\em positive} in $\sim$:
		$\forall u = u_1u_2 \, \forall v = v_1v_2 (|u_2| = |v_2| \to \exists w \in \{u,v\}^* : u_2 w \sim v_2 w)$.
		
	Consider two right congruences $\sim$, $\sim'$ which have no critical tuples.
	One can verify that their intersection $\sim \cap \sim'$ is again a right congruence.
	Let $u = u_1u_2$ and $v = v_1v_2$ with $|u_2| = |v_2|$.
	Because $\sim$ has no critical tuple there exist a word $w \in \{u,v\}^*$ with $u_2 w \mathrel{\sim} v_2 w$.
	Now consider the condition for the words $u_1(u_2w)$ and $v_1(v_2w)$.
	Because $\sim'$ has no critical tuple there exists a word $x \in \{uw, vw\}^*$
	such that $u_2 wx \mathrel{\sim'} v_2 wx$.
	Since $\sim$ is a right congruence we also have $u_2 wx \mathrel{\sim} v_2 wx$
	and thus $u_2 wx \mathrel{(\sim \cap \sim')} v_2 wx$.
	This proves that $\sim \cap \sim'$ has no critical tuple.
\end{proof}

\begin{theorem}
	\label{thm:log-cong}
	$|\Sigma^{\le n}/{\approx}|$ is polynomially bounded if and only if $\sim$ has no critical tuple.
\end{theorem}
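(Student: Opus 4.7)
The plan is to reduce the general finite-index case to the Myhill--Nerode case already resolved in Theorem~\ref{thm:log}. One direction is immediate: if $\sim$ has a critical tuple then $|\Sigma^{\le n}/{\approx}|$ grows exponentially by Proposition~\ref{prop:ct}, so I focus on the converse.

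Suppose $\sim$ has no critical tuple. Because $\sim$ has finite index, each $\sim$-class $C \subseteq \Sigma^*$ is a regular language. I first claim
\[
\sim \;=\; \bigcap_{C \in \Sigma^*/{\sim}} \sim_C.
\]
For $\subseteq$: if $x \sim y$ then $xz \sim yz$ for every $z$ (right congruence), so $xz$ and $yz$ lie in the same class, hence $xz \in C \Leftrightarrow yz \in C$ for every class $C$, giving $x \sim_C y$. For $\supseteq$: if $x \sim_C y$ for every class $C$, taking $z = \varepsilon$ yields $x \in C \Leftrightarrow y \in C$ for every $C$, so $x$ and $y$ lie in the same class. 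In particular $\sim \;\subseteq\; \sim_C$, i.e.\ each $\sim_C$ is coarser than $\sim$, so by Lemma~\ref{lem:ct-coarse-intersect}(a) no $\sim_C$ has a critical tuple. Applying Theorem~\ref{thm:log} to the regular language $C$ gives that $|\Sigma^{\le n}/{\approx_C}|$ is polynomially bounded for each of the finitely many classes $C$.

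To conclude I observe that the suffix-expansion construction commutes with intersection: the identity $\sim = \bigcap_C \sim_C$ lifts to $\approx \;=\; \bigcap_C \approx_C$, because $a_1 \cdots a_n \approx b_1 \cdots b_n$ iff every suffix pair is $\sim$-equivalent iff every suffix pair is $\sim_C$-equivalent for every $C$ (the length-equality condition is already enforced by any single $\approx_C$). The map sending an $\approx$-class on $\Sigma^{\le n}$ to the tuple of $\approx_C$-classes containing it is therefore injective, giving
\[
|\Sigma^{\le n}/{\approx}| \;\le\; \prod_{C \in \Sigma^*/{\sim}} |\Sigma^{\le n}/{\approx_C}|,
\]
which is a product of finitely many polynomially bounded functions and hence polynomial. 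The only point I must verify carefully is that Lemma~\ref{lem:ct-coarse-intersect}(a) is applied in the right direction (coarser means $\sim \;\subseteq\; \sim_C$, not the reverse); once that is settled, the argument requires no subtle combinatorics beyond this direct product bound.
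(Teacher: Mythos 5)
Your proof is correct and uses essentially the same approach as the paper: the decomposition $\sim = \bigcap_{C} \sim_C$ over the $\sim$-classes, coarseness plus Lemma~\ref{lem:ct-coarse-intersect}(a), Theorem~\ref{thm:log} applied classwise, and the product bound from $\approx = \bigcap_C \approx_C$. The only (harmless) deviation is in the forward direction: you invoke Proposition~\ref{prop:ct} directly (critical tuple $\Rightarrow$ exponential growth of $|\Sigma^{\le n}/{\approx}|$), whereas the paper passes the critical tuple through the decomposition using the lower (max) bound and Lemma~\ref{lem:ct-coarse-intersect}(b); your route is a little shorter and avoids part~(b) of that lemma here.
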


\begin{proof}
	Let $u_1, \dots, u_m$ be representatives from each $\sim$-class.
	Observe that $\sim \, = \bigcap_{i=1}^m \sim_{[u_i]_\sim}$
	because $\sim$ saturates each class $[u_i]_\sim$ and
	$\bigcap_{i=1}^m \sim_{[u_i]_\sim}$ also saturates each class $[v]_\sim$.
	Let us write $\sim_i$ instead of $\sim_{[u_i]_\sim}$ and let $\approx_i$
	be its suffix expansion $\approx_{[u_i]_\sim}$.
	Then we have $\sim \, = \bigcap_{i=1}^m \sim_i$ and $\approx \, = \bigcap_{i=1}^m \approx_i$.
	This implies that
	\begin{equation}
		\label{eq:poly-related}
		\max_{1 \le i \le m} |\Sigma^{\le n}/{\approx_i}| \le |\Sigma^{\le n}/{\approx}| \le \prod_{i=1}^m |\Sigma^{\le n}/{\approx_i}|.
	\end{equation}

	\medskip
	\noindent
	($\Rightarrow$): If $|\Sigma^{\le n}/{\approx}|$ is polynomially bounded then the same holds
	for $|\Sigma^{\le n}/{\approx_i}|$ for all $1 \le i \le k$ by \eqref{eq:poly-related}.
	By Theorem~\ref{thm:log} $\sim_{[u_i]_{\sim}}$ has no critical tuple for all $1 \le i \le k$
	and therefore Lemma~\ref{lem:ct-coarse-intersect}(b) implies that $\sim \, = \bigcap_{i=1}^m \sim_{[u_i]_\sim}$
	has no critical tuple.

	\medskip
	\noindent
	($\Leftarrow$): If $\sim$ has no critical tuple then
	each congruence $\sim_i$ has no critical tuple by Lemma~\ref{lem:ct-coarse-intersect}(a)
	because $\sim_i$ is coarser than $\sim$.
	Theorem~\ref{thm:log} implies that $|\Sigma^{\le n}/{\approx_i}|$ is polynomially bounded for all $1 \le i \le k$.
	By \eqref{eq:poly-related} also $|\Sigma^{\le n}/{\approx}|$ is polynomially bounded.
\end{proof}

\subparagraph{Regular look-ahead.}

A result due to Reutenauer and Schützenberger states
that every rational function $f$ can be factorized as $f = r \circ \ell$
where $\ell$ and $r$ are {\em left- and right-subsequential}, respectively \cite{ReutenauerS91}.
A rational function is left- or right-subsequential if the input is read in a deterministic fashion
from left to right and right to left, respectively.
In the literature the order of the directions is usually reversed, i.e. one decomposes $t$ as $f = r \circ \ell$.
Often this is described by the statement that every rational function is (left-)subsequential with regular look-ahead.
Furthermore, this decomposition is canonical in a certain sense.

We follow the notation from the survey paper \cite{FiliotR16}.
A {\em right-subsequential transducer} $B = (Q,\Sigma,\Omega,F,\Delta,\{q_{\mathit{in}}\},o)$ is a real-time
right transducer which is {\em deterministic}, i.e. $q_{\mathit{in}}$ is the only initial state
and for every $p \in Q$ and $a \in \Sigma$ there exists at most
one transition $(p,a,y,q) \in \Delta$.
Clearly, right-subsequential transducers define rational functions, the so called {\em right-subsequential functions},
but not every rational function is right-subsequential.
Let $\R$ be a right congruence on $\Sigma^*$ with finite index.
The {\em look-ahead extension} is the injective function $e_\R \colon \Sigma^* \to (\Sigma \times \Sigma^*/\R)^*$
defined by
\[
	e_\R(a_1 \cdots a_n) = (a_1,[\eps]_\R) (a_2,[a_1]_\R) (a_3,[a_1a_2]_\R) \cdots 
	(a_n,[a_1 \cdots a_{n-1}]_\R).
\]
Let $f \colon \Sigma^* \to \Omega^*$ be a partial function.
The partial function $f[\R] \colon (\Sigma \times \Sigma^*/\R)^* \to \Omega^*$ with $\dom(f[\R]) = e_\R(\dom(f))$
is defined by $f[\R](e_\R(x)) = f(x)$.
Furthermore we define a right congruence $\R_f$ on $\Sigma^*$.
For this we need the distance function $\|x,y\| = |x| + |y| - 2|x \wedge y|$
where $x \wedge y$ is the longest common suffix of $x$ and $y$.
Equivalently, $\|x,y\|$ is the length of the reduced word of $xy^{-1}$ in the free group generated by $\Sigma$.
Notice that $\|\cdot,\cdot\|$ satisfies the triangle inequality.
We define $u \mathrel{\R_f} v$ if and only if
(i) $u \sim_{\dom(f)} v$ and
(ii) $\{ \|f(uw),f(vw)\| \mid uw,vw \in \dom(f) \}$ is finite.
One can verify that $\R_f$ is a right congruence on $\Sigma^*$.
As an example, recall the rational transduction $f$ from Example~\ref{ex:rat-trans}.
The induced right congruence $\R_f$ has two classes, which are $a^*$ and $a^*b\{a,b\}^*$.

\begin{theorem}[\cite{ReutenauerS91}]
	A partial function $f \colon \Sigma^* \to \Omega^*$ is rational if and only if
	$\R_f$ has finite index and $f[\R_f]$ is right-subsequential.
\end{theorem}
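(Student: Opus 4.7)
For the easy direction ($\Leftarrow$), suppose $\R_f$ has finite index and $f[\R_f]$ is right-subsequential. The look-ahead extension $e_{\R_f}$ is itself a rational function: a deterministic real-time (left) transducer can maintain the current class $[a_1 \cdots a_i]_{\R_f}$ in its finite-state control and, on reading $a_{i+1}$, emit the single symbol $(a_{i+1}, [a_1 \cdots a_i]_{\R_f})$ before updating the state to $[a_1 \cdots a_i a_{i+1}]_{\R_f}$. Since $f = e_{\R_f} \circ f[\R_f]$ by the very definition of $f[\R_f]$, and rational transductions are closed under composition, $f$ is rational.

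For the hard direction ($\Rightarrow$), my first step is to fix a trim unambiguous real-time transducer $A = (Q, \Sigma, \Omega, I, \Delta, F, o)$ defining $f$. For each $u \in \Sigma^*$ and each pair $(q_0, q) \in I \times Q$ for which there is a run $q_0 \xrightarrow{u} q$ in $A$, unambiguity gives a unique output $y_u(q_0, q) \in \Omega^*$ on that run. I would then introduce an auxiliary right congruence $\R'$ on $\Sigma^*$ by declaring $u \R' v$ iff (i) the same pairs $(q_0, q)$ admit runs from $u$ and from $v$, and (ii) for every such pair, writing $z = y_u(q_0, q) \wedge y_v(q_0, q)$ for the longest common suffix, both $y_u(q_0, q)$ and $y_v(q_0, q)$ extend $z$ on the left by strings of length at most a constant $N$ that depends only on $|Q|$ and $\iml(A)$. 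Since there are only finitely many bounded tails per pair $(q_0, q)$, this $\R'$ automatically has finite index, and one checks by a direct computation that it is a right congruence refining $\sim_{\dom(f)}$.

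The next step is to show $\R' = \R_f$, from which the finite index of $\R_f$ follows. For $\R' \subseteq \R_f$, I would observe that for any continuation $w$ with $uw, vw \in \dom(f)$, the unique accepting runs of $A$ factor through some pairs $(q_0, q)$ respectively $(q_0', q')$, giving $f(uw) = y_u(q_0, q) \cdot y_w(q, q_f) \cdot o(q_f)$ and analogously for $f(vw)$; condition (ii) then forces $\|f(uw), f(vw)\|$ to stay bounded by a constant independent of $w$. For the converse $\R_f \subseteq \R'$, the argument is a pumping calibration of $N$ on $A$: if some tail difference for the pair $(q_0, q)$ exceeds $N$, one can isolate a cycle of $A$ whose traversal drives the two outputs apart, and by iterating such a cycle combined with a continuation to an accepting state (which exists by trimness) one produces words $uw, vw$ with $\|f(uw), f(vw)\|$ unbounded, contradicting $u \R_f v$.

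It remains to realise $f[\R_f]$ by a right-subsequential transducer. The construction reads $e_{\R_f}(x)$ from right to left and keeps in its finite state, for each candidate pair $(q_0, q)$, the bounded left-tail of $y_{a_i \cdots a_n}(q_0, q)$ modulo the longest common suffix across all candidates; at each step it emits the next symbol of that common suffix while updating the tails, and the terminal output function flushes the remaining bounded prefix. The look-ahead symbol $[a_1 \cdots a_{i-1}]_{\R_f}$ pins down, via the definition of $\R_f$ together with unambiguity of $A$, a single consistent initial state and accepting state, making all choices deterministic. The main obstacle throughout is the pumping calibration in the direction $\R_f \subseteq \R'$, namely translating an unbounded internal divergence of $y_u(q_0, q)$ versus $y_v(q_0, q)$ into an unbounded divergence of $f$-values on common continuations; this is a twinning-property-style argument for functional transducers and is the one place where the unambiguity and trimness of $A$ are essential.
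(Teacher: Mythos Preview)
First, note that the paper does not prove this theorem at all: it is quoted from \cite{ReutenauerS91} and used as a black box, so there is no proof in the paper to compare your argument against.

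On the merits of your attempt: the easy direction is fine. In the hard direction there is a genuine gap in the definition of $\R'$. Condition~(ii) amounts to requiring $\|y_u(q_0,q),y_v(q_0,q)\|\le 2N$ for every admissible pair $(q_0,q)$, and a bounded-distance condition of this kind is not transitive. Concretely, take a trim unambiguous transducer with a single initial state $q_0$, a single loop $q_0\xrightarrow{a\mid b}q_0$, and $q_0$ final; then $y_{a^i}(q_0,q_0)=b^i$, so $a^i\,\R'\,a^j$ iff $|i-j|\le 2N$, which is never transitive, whereas $\R_f$ for $f(a^i)=b^i$ has a single class. Passing to the transitive closure of your $\R'$ destroys the finite-index argument, since the tails are then no longer uniformly bounded, and the same example shows that ``finitely many bounded tails per pair $(q_0,q)$'' does not translate into finitely many classes.

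The example also shows that the equality $\R'=\R_f$ you aim for cannot hold: $\R_f$ is intrinsic to $f$, whereas your $\R'$ depends on the particular transducer $A$. At best a suitably repaired $\R'$ would satisfy $\R'\subseteq\R_f$, which would yield finite index of $\R_f$; but then the right-subsequential realisation must be built for $f[\R_f]$ rather than for $f[\R']$, and your final paragraph implicitly uses the equality to select a unique run from the look-ahead class. The argument in \cite{ReutenauerS91} avoids this by working with $\R_f$ directly and constructing the canonical right-subsequential transducer from its classes rather than from a fixed transducer for $f$.
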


For the rest of the section
let $B = (Q,\Sigma \times \Sigma^*/{\R_t},\Omega,F,\Delta,\{q_{\mathit{in}}\},o)$
be a trim right-subsequential transducer for $t[\R_t]$.
One obtains an unambiguous real-time right transducer $A$
for $t$ by projection to the first component,
i.e. $A = (Q,\Sigma,\Omega,F,\Lambda,\{q_{\mathit{in}}\},o)$
where $\Lambda = \{ (q,a,y,p) \mid (q,(a,b),y,p) \in \Delta \}$.
Notice that every run $q \xleftarrow{x \mid y} p$ in $A$
induces a corresponding run $q \xleftarrow{(x,z) \mid y} p$ in $B$ for some $z \in (\Sigma^*/{\R_t})^*$
and that this correspondence is a bijection between the sets of all runs
in $A$ and $B$.
We need two auxiliary lemmas which concern the right congruence $\R_t$.

\begin{lemma}[Short distances]
	\label{lem:linear-distance}
	Let $u,v,w \in \Sigma^*$ with $uw,vw \in X$.
	If $u \mathrel{\R}_t v$
	then $\|t(uw),t(vw)\| \le O(|u|+|v|)$.
\end{lemma}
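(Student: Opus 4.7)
\medskip
\noindent
\textbf{Proof plan for Lemma~\ref{lem:linear-distance}.}
The plan is to run the right-subsequential transducer $B$ for $t[\R_t]$ on the look-ahead extensions of $uw$ and $vw$ and to observe that the two runs agree on a long common suffix, so that their outputs share a long common suffix.

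First, I would unpack the look-ahead extensions. Writing $u = u_1 \cdots u_k$, $v = v_1 \cdots v_l$, and $w = w_1 \cdots w_m$, the word $e_{\R_t}(uw)$ ends with the block
\[
(w_1,[u]_{\R_t})\,(w_2,[uw_1]_{\R_t})\,\cdots\,(w_m,[uw_1\cdots w_{m-1}]_{\R_t}),
\]
and $e_{\R_t}(vw)$ ends with the analogous block with $u$ replaced by $v$. Because $u \mathrel{\R_t} v$ and $\R_t$ is a right congruence, we have $uw_1 \cdots w_i \mathrel{\R_t} vw_1 \cdots w_i$ for every $0 \le i \le m-1$, so the two blocks are identical as words over $\Sigma \times \Sigma^*/{\R_t}$.

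Next, I would exploit determinism. Running $B$ from right to left, both runs start at the unique initial state $q_{\mathit{in}}$ and process the same $m$ symbols, so after the common suffix both runs are at the same state $q_w$ and have produced the same output suffix $y_w$. Denote by $q_{uw} \in F$ and $q_{vw} \in F$ the states reached after additionally reading the $u$- and $v$-parts, and by $y_u$, $y_v$ the outputs produced during those parts. Then
\[
t(uw) = o(q_{uw})\,y_u\,y_w \qquad \text{and} \qquad t(vw) = o(q_{vw})\,y_v\,y_w,
\]
so $y_w$ is a common suffix of $t(uw)$ and $t(vw)$ and hence $|t(uw)\wedge t(vw)| \ge |y_w|$.

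Finally, I would bound the edit distance using $\iml(B)$. Since $|o(q_{uw})|, |o(q_{vw})| \le \iml(B)$, $|y_u| \le \iml(B)\cdot |u|$ and $|y_v| \le \iml(B)\cdot |v|$, we obtain
\[
\|t(uw),t(vw)\| = |t(uw)| + |t(vw)| - 2|t(uw)\wedge t(vw)| \le \iml(B)\bigl(2 + |u| + |v|\bigr),
\]
which is $O(|u|+|v|)$. The only subtle step is the first one, namely recognizing that the right-congruence property promotes the prefix-wise equivalence $u \mathrel{\R_t} v$ into a letter-by-letter coincidence of the look-ahead extensions on the common suffix $w$; once that is in hand, determinism of $B$ and the $\iml(B)$-bound make the estimate immediate.
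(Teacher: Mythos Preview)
Your proof is correct and is essentially identical to the paper's own argument: both observe that the right-congruence property makes the length-$|w|$ suffixes of $e_{\R_t}(uw)$ and $e_{\R_t}(vw)$ coincide, use determinism of $B$ to get a common output suffix, and then bound the remaining prefixes by $\iml(B)\cdot(|u|+|v|+2)$. The only cosmetic difference is that the paper writes $\iml(A)$ rather than $\iml(B)$, which is the same quantity since $A$ is obtained from $B$ by projecting the input alphabet.
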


Two partial functions $t_1, t_2 \colon \Sigma^* \to \Omega^*$ are {\em adjacent} if
$\sup \{\|t_1(w),t_2(w)\| \mid w \in \dom(t_1) \cap \dom(t_2) \} < \infty$
where $\sup \emptyset = -\infty$.
We remark that two functions are adjacent in our definition if and only if
their reversals are adjacent according to the original definition \cite{ReutenauerS91}.
Notice that $u \mathrel{\R_t} v$ if and only if $u \sim_X v$ and
the functions $w \mapsto t(uw)$ and $w \mapsto t(vw)$ are adjacent.

\begin{lemma}[Short witnesses]
	\label{lem:short-witness}
	Let $t_1, t_2 \colon \Sigma^* \to \Omega^*$ be rational functions
	which are not adjacent.
	Then there are words $x,y,z \in \Sigma^*$ such that $xy^*z \subseteq \dom(t_1) \cap \dom(t_2)$
	and $\|t_1(x y^k z),t_2(x y^k z)\| = \Omega(k)$.
	In particular, for each $k \in \N$ there exists a word $x \in \dom(t_1) \cap \dom(t_2)$ of length $|x| \le O(k)$
	such that $\|t_1(x),t_2(x)\| \ge k$.
\end{lemma}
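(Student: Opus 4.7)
The approach is a pumping argument on a synchronous product transducer for $t_1, t_2$. First I take trim unambiguous real-time right transducers $A_1, A_2$ for $t_1, t_2$ and build the product $A$ with state space $Q_1 \times Q_2$; on every $w \in \dom(t_1) \cap \dom(t_2)$ its unique accepting run delivers the pair $(t_1(w), t_2(w))$. Since each transducer is real-time, $|t_i(w)| = O(|w|)$, so the failure of adjacency---producing a sequence $w_n \in \dom(t_1) \cap \dom(t_2)$ with $\|t_1(w_n), t_2(w_n)\| \to \infty$---automatically forces $|w_n| \to \infty$.

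Once $|w_n|$ exceeds $|Q_1|\cdot|Q_2|$, pigeonhole on the product state space yields a factorisation $w_n = xyz$ with $y \neq \eps$ and $xy^*z \subseteq \dom(t_1) \cap \dom(t_2)$, whose $k$-th pump produces outputs of the form $t_i(xy^kz) = p_i u_i^k q_i$ for fixed words $p_i, q_i, u_i$; the cycle output pair $(u_1, u_2)$ is determined by the loop in $A$ labelled by $y$. The task is to choose the factorisation so that the distance $\|p_1 u_1^k q_1, p_2 u_2^k q_2\|$ grows linearly in $k$.

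To find such a cycle, I argue by contradiction: suppose every cycle appearing in some accepting run of $A$ is \emph{inert} in the sense that its iteration keeps the distance of the resulting output pair bounded. Then, by iteratively unpumping cycles, every accepting run can be shrunk to length at most $|Q_1| \cdot |Q_2|$ while only affecting the distance by a bounded amount; this bounds all $d_n := \|t_1(w_n), t_2(w_n)\|$ by a constant, contradicting $d_n \to \infty$. Hence there exists a factorisation $w_n = xyz$ whose cycle is productive, i.e.\ contributes $\Omega(k)$ to the distance under iteration, which is the desired witness.

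The main obstacle is making ``inert cycle'' precise and quantitative. Using the free-group interpretation $\|p, q\| = |pq^{-1}|$, the distance equals $|p_1 u_1^k q_1 q_2^{-1} u_2^{-k} p_2^{-1}|$, and linear growth in $k$ fails precisely when $u_1$ is conjugated to $u_2$ by $q_1 q_2^{-1}$ in the free group on $\Omega$; this degenerate case is handled by passing to a sufficiently large power $y^m$ of the cycle, where the same conjugacy condition would force $u_1 = u_2$ and render the cycle trivially inert, contradicting the choice of $w_n$. Finally, for the ``in particular'' clause, setting $x := xy^{\lceil k/c \rceil}z$ for the constant $c$ hidden in $\Omega(k)$ gives a word of length $O(k)$ in the common domain with output distance at least $k$.
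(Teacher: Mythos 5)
The student's route is more self-contained than the paper's: rather than citing the Reutenauer--Sch\"utzenberger pumping result, you attempt to re-derive it via a pigeonhole argument on the product of two real-time transducers, then conclude linear growth through a free-group dichotomy. The high-level plan is sound, but there are two real gaps. The claim that ``iteratively unpumping cycles \dots only affect[s] the distance by a bounded amount'' is not justified: the ``inertness'' constant for a cycle depends on its prefix and suffix context $(p_i,q_i)$, of which there are infinitely many, so there is no uniform per-step bound; and even if there were one, the number of unpumping steps is $\Omega(|w_n|)$, so the accumulated change would still be unbounded. What actually rescues the argument is a sharper fact that you do not establish: an inert cycle satisfies the conjugacy $u_1 (q_1 q_2^{-1}) = (q_1 q_2^{-1}) u_2$ in the free group over $\Omega$, which implies that removing it leaves the distance \emph{exactly unchanged} and also preserves the conjugacy condition of the remaining cycles; only with this do the unpumpings telescope to a genuine contradiction.

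The ``degenerate case'' step is also incorrect as stated. Passing from $y$ to $y^m$ replaces $(u_1,u_2)$ by $(u_1^m,u_2^m)$ and leaves $r := q_1 q_2^{-1}$ unchanged, and by uniqueness of roots in free groups $r^{-1} u_1^m r = u_2^m$ holds iff $r^{-1} u_1 r = u_2$, so the conjugacy condition is invariant under this operation and nothing is gained. (You seem to conflate the general condition with the special case $r = 1$, where it degenerates to $u_1 = u_2$, but there is no reason for $r$ to be trivial.) Moreover, the inference that failure of conjugacy forces $|u_1^k r u_2^{-k}| = \Omega(k)$ in the free group --- rather than merely being unbounded --- is asserted without proof. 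The paper sidesteps all of this: it cites the pumping witness $t_i(xy^kz) = u_i v_i^k w_i$ with $\sup_k \|u_1v_1^kw_1,u_2v_2^kw_2\| = \infty$ from Reutenauer--Sch\"utzenberger, and then obtains the $\Omega(k)$ bound by a purely combinatorial case analysis on whether $|v_1| = |v_2|$ and on the longest common suffix $v_1^kw_1 \wedge v_2^kw_2$, with no free-group machinery at all.
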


\begin{proposition}
	\label{prop:crit-tup}
	If $\R_t$ has a critical tuple then $X$ contains a linear fooling set.
\end{proposition}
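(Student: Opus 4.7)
The plan is to construct a linear fooling set from a strengthened critical tuple of $\R_t$. Since $\R_t$ has finite index by the Reutenauer--Sch\"utzenberger decomposition, I apply Proposition~\ref{prop:ct}(b) to $\R_t$, obtaining a critical tuple $(u_2, v_2, u, v)$ satisfying $u_2 u \mathrel{\R_t} u_2 w u$ and $v_2 u \mathrel{\R_t} v_2 w u$ for all $w \in \{u, v\}^*$. I then pass to the shifted tuple $(u_2 u, v_2 u, uu, vu)$: every $w \in \{uu, vu\}^*$ is empty or ends in the original letter $u$, so the strengthened property implies that $u_2 u w$ and $u_2 u$ are $\R_t$-equivalent, and likewise for $v_2 u w$ and $v_2 u$, uniformly in $w$. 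Moreover the critical-tuple condition at $w = u$ in the original tuple says that $u_2 u$ and $v_2 u$ are not $\R_t$-equivalent.

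By the definition of $\R_t$ this non-equivalence splits into two cases: either \textbf{(A)} the rational functions $t_1 \colon y \mapsto t(u_2 u y)$ and $t_2 \colon y \mapsto t(v_2 u y)$ are non-adjacent, or \textbf{(B)} $u_2 u \not\sim_X v_2 u$ while $t_1, t_2$ remain adjacent on their common domain. Case (A) is the main case: I apply Lemma~\ref{lem:short-witness} to $t_1, t_2$ to obtain words $x_0, y_0, z_0$ with $x_0 y_0^k z_0 \in \dom(t_1) \cap \dom(t_2)$ for every $k$ and $\|t_1(x_0 y_0^k z_0), t_2(x_0 y_0^k z_0)\| = \Omega(k)$. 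Set $z_n := x_0 y_0^{Cn} z_0$ for a large constant $C$, so that $|z_n| = O(n)$. For every $w \in \{uu, vu\}^{\le n}$, the uniform $\R_t$-equivalence forces $u_2 u w z_n, v_2 u w z_n \in X$, and Lemma~\ref{lem:linear-distance} bounds $\|t(u_2 u w z_n), t_1(z_n)\|$ by $O(|w|) = O(n)$, symmetrically on the $v_2$-side. The triangle inequality then yields $\|t(u_2 u w z_n), t(v_2 u w z_n)\| \ge \Omega(Cn) - O(n) > 0$ for $C$ large enough, so in particular the two $t$-values differ, furnishing the linear fooling set $(u_2 u, v_2 u, uu, vu, \{z_n\}_{n \in \N})$ inside $X$.

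Case (B) is the main obstacle, since $t_1, t_2$ are adjacent on their common domain and Lemma~\ref{lem:short-witness} does not apply directly. My plan is to reduce to case (A): using a $\sim_X$-distinguisher $z^*$ for $u_2 u$ and $v_2 u$, I restrict attention to those extensions $y$ for which both $u_2 u y$ and $v_2 u y$ lie in $X$ simultaneously, and argue that on this regular subdomain the two rational functions cannot remain uniformly bounded---otherwise a Ramsey/idempotent argument in the finite-index congruence $\R_t$ would force $u_2 u$ and $v_2 u$ to be $\R_t$-equivalent, contradicting the critical-tuple condition. Once unbounded adjacency is exposed on this restriction, Lemma~\ref{lem:short-witness} applies there and the case (A) construction completes the proof.
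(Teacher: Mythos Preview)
Your handling of Case~(A) is essentially the paper's argument, just with the cosmetic shift to $(u_2u,v_2u,uu,vu)$; the use of Lemma~\ref{lem:linear-distance} for the $O(n)$ bound and Lemma~\ref{lem:short-witness} for the $\Omega(Cn)$ witness, combined via the triangle inequality, is exactly what the paper does.

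The genuine gap is Case~(B). Your proposed reduction cannot work as stated: in Case~(B) you are \emph{assuming} that $t_1$ and $t_2$ are adjacent, i.e.\ that $\|t_1(y),t_2(y)\|$ is uniformly bounded on $\dom(t_1)\cap\dom(t_2)$. Restricting to a regular subdomain of this common domain cannot make a bounded quantity unbounded, so ``exposing unbounded adjacency on this restriction'' is a contradiction in terms. Moreover, the vague Ramsey/idempotent step would at best reprove adjacency (condition~(ii) in the definition of $\R_t$), never condition~(i); so it cannot force $u_2u \mathrel{\R_t} v_2u$ when the obstruction is $u_2u \not\sim_X v_2u$.

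The fix, and what you missed, is that Case~(B) is empty: suffix-closedness of $X$ together with the strengthened $\R_t$-equivalences already forces $u_2u \sim_X v_2u$. Indeed, from $u_2 u \mathrel{\R_t} u_2 v u = u_2 v_1 v_2 u$ (take $w=v$) and $\R_t \subseteq {\sim_X}$, one gets $u_2u \sim_X u_2v_1v_2u$; hence $u_2uz\in X$ implies $u_2v_1v_2uz\in X$, and then $v_2uz\in X$ because $X$ is suffix-closed. The converse uses $v_2u \mathrel{\R_t} v_2uu = v_2u_1u_2u$ symmetrically. With $u_2u\sim_X v_2u$ established, non-$\R_t$-equivalence forces non-adjacency, and you are in Case~(A) from the start.
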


\begin{proof}
	Let $(u_2,v_2,u,v)$ be a critical tuple in $\R_t$ with $u = u_1u_2$ and $v = v_1v_2$.
	By Proposition~\ref{prop:ct} we can assume that $u_2 u \mathrel{\R_t} u_2 w u$ and $v_2 u \mathrel{\R_t} v_2 w u$
	for all $w \in \{u,v\}^*$.
	By assumption we know that $(u_2 u, v_2 u) \notin \R_t$.
	Furthermore, we claim that $u_2 u \sim_X v_2 u$:
	Let $z \in \Sigma^*$ and assume that $u_2 u z \in X$.
	Then $u_2 v_1 v_2 u z \in X$ because $u_2 u \sim_X u_2 v_1 v_2 u$,
	and thus $v_2 u z \in X$ because $X$ is suffix-closed.
	The other direction follows by a symmetric argument.
	
	Let $n \in \N$ and define 
	\[
		N = \max_{x \in \{u_2,v_2\}} \max_{w \in \{u,v\}^{\le n}} \sup \{ \|t(x u z),t(x w u z)\| \mid x u z,x w u z \in X \} < \infty.
	\]
	By Lemma~\ref{lem:linear-distance} we have $N \le O(n)$.
	Since $(u_2 u, v_2 u) \notin \R_t$ and $u_2 u \sim_X v_2 u$,
	the functions $z \mapsto t(u_2 u z)$ and $z \mapsto t(v_2 u z)$ are not adjacent.
	By Lemma~\ref{lem:short-witness} there exists a word $z_n \in (u_2u)^{-1}X$
	with $\|t(u_2 u z_n), t(v_2 u z_n) \| \ge 2N+1$
	and $|z_n| \le O(N) \le O(n)$.
	We claim that $t(u_2 w u z_n) \neq t(v_2 w u z_n)$ for all $w \in \{u,v\}^{\le n}$:
	By the triangle inequality we have
	\begin{align*}
		2N+1 &\le \| t(u_2 u z_n), t(v_2 u z_n) \| \\
		&\le \| t(u_2 u z_n), t(u_2 w u z_n) \| + \| t(u_2 w u z_n), t(v_2 w u z_n) \| + \| t(v_2 w u z_n) , t(v_2 u z_n) \| \\
		&\le 2N + \| t(u_2 w u z_n), t(v_2 w u z_n) \|
	\end{align*}
	which implies $\| t(u_2 w u z_n), t(v_2 w u z_n) \| \ge 1$ and in particular $t(u_2 w u z_n) \neq t(v_2 w u z_n)$.
	We have proved that for each $n \in \N$ there exists a word $z_n$ of length $O(n)$
	such that $t(u_2 w u z_n) \neq t(v_2 w u z_n)$ for all $w \in \{u,v\}^{\le n}$.
	If $Z$ is the set of all constructed $z_n$ for $n \in \N$
	then $\{u_2,v_2\} \{u,v\}^* u Z \subseteq X$ and
	$(u_2,v_2,u,v, uZ)$ is a linear fooling scheme.
\end{proof}

\subparagraph{Well-behaved transducers.}

Let $(Q,\preceq)$ be the quasi-order defined by $q \preceq p$ iff there exists a run from $p$ to $q$ in $A$
or equivalently in $B$.
Its equivalence classes are the strongly connected components (SCCs) of $A$ and $B$.
A word $w \in \Sigma^*$ is {\em guarded} by a state $p \in Q$
if there exists a run $q' \xleftarrow{w} p$ in $A$ such that $p \preceq q'$,
i.e. $p$ and $q'$ belong to the same SCC.
Notice that the set of all words which are guarded by a fixed state $p$ is suffix-closed.
A run $q \xleftarrow{w} p$ in $A$ is {\em guarded} if $w$ is guarded by $p$.
We say that $A$ is {\em well-behaved}
if for all $p \in Q$ and all guarded accepting runs $\pi, \pi'$ from $p$
with $|\pi| = |\pi'|$ we have $\out_F(\pi) = \out_F(\pi')$.
\begin{proposition}
	\label{prop:wb}
	If $A$ is not well-behaved then $X$ contains a linear fooling set.
\end{proposition}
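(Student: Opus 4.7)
The plan is to read off a linear fooling scheme directly from the data of a witness of non-well-behavedness. By assumption there exist a state $p \in Q$ and two guarded accepting runs $\pi : q \xleftarrow{w} p$ and $\pi' : q' \xleftarrow{w'} p$ in $A$ with $|w| = |w'|$ and $\out_F(\pi) \neq \out_F(\pi')$. Guardedness means that $p,q$ lie in a common SCC, and so do $p,q'$, so I fix words $\alpha, \beta \in \Sigma^*$ and runs $p \xleftarrow{\alpha} q$, $p \xleftarrow{\beta} q'$; then $\alpha w$ and $\beta w'$ are explicit loops at $p$. Trimness of $A$ additionally supplies an initial run $p \xleftarrow{z} q_{\mathit{in}}$ for some $z \in \Sigma^*$.

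I would first verify $w \neq w'$. If $w = w'$ then $\pi$ and $\pi'$ would be distinct (since their outputs differ) runs on the same input, and extending each of them by the fixed $z$-run on the right would produce two distinct initial accepting runs on the common input $wz$, contradicting unambiguity of $A$. So $w \neq w'$, and since $|w|=|w'|$ neither is a suffix of the other. I then set $u := \alpha w$, $v := \beta w'$, $u_2 := w$, $v_2 := w'$, and $Z := \{z\}$. The requirements that $u_2$ be a suffix of $u$, $v_2$ a suffix of $v$, and $|u_2|=|v_2|$ are immediate. For any $W \in \{u,v\}^*$, concatenating the pre-chosen loops along the factorization of $W$ gives a concrete run $p \xleftarrow{W} p$ with some output $y_W$; prepending $\pi$ (respectively $\pi'$) and appending the $z$-run produces an initial accepting run on $wWz$ (resp.\ $w'Wz$), which shows $\{u_2,v_2\}\{u,v\}^* Z \subseteq \dom(t)$.

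It remains to distinguish outputs. By unambiguity of $A$ each of the explicitly constructed initial accepting runs is the unique such run on its input, hence
\[
  t(u_2 W z) = \out_F(\pi)\, y_W\, y_z \quad\text{and}\quad t(v_2 W z) = \out_F(\pi')\, y_W\, y_z,
\]
where $y_z$ is the output produced along the fixed $z$-run and $y_W$ is literally the same word in both equations because I reused the same loop decomposition on both sides. Right cancellation in the free monoid $\Omega^*$ combined with $\out_F(\pi) \neq \out_F(\pi')$ then yields $t(u_2 W z) \neq t(v_2 W z)$ for every $W \in \{u,v\}^*$, so the constant choice $z_n := z$ discharges the distinguishing clause of the linear fooling scheme for every $n$. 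I expect the most delicate step to be exactly this output bookkeeping: one must fix the loop decomposition once and for all so that the $W$-contribution is identical on both sides, and then invoke unambiguity of $A$ to identify the constructed run with the unique accepting run used to compute $t$.
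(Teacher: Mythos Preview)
Your overall strategy matches the paper's proof exactly, and the bookkeeping about unambiguity and right cancellation is fine. There is, however, a genuine gap in how you use the word ``guarded.'' By the paper's definition, a run $q \xleftarrow{w} p$ is guarded when the \emph{word} $w$ is guarded by $p$, i.e.\ when there exists \emph{some} run $q'' \xleftarrow{w} p$ whose endpoint $q''$ lies in the SCC of $p$. It does \emph{not} say that the endpoint $q$ of your particular accepting run $\pi$ lies in that SCC; since $A$ is only unambiguous (not deterministic from $p$), there may be several runs on $w$ from $p$, and the accepting one need not be the one that stays in the component. Consequently the return runs $p \xleftarrow{\alpha} q$ and $p \xleftarrow{\beta} q'$ you invoke may fail to exist, and your loops $\alpha w$, $\beta w'$ are not justified.

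The fix is exactly what the paper does: use guardedness to pick auxiliary runs $q'' \xleftarrow{w} p$ and $q''' \xleftarrow{w'} p$ with $q'',q'''$ in the SCC of $p$, close those up to loops $u = u_1 w$ and $v = v_1 w'$ at $p$ via $p \xleftarrow{u_1} q''$ and $p \xleftarrow{v_1} q'''$, and keep the original accepting runs $\pi,\pi'$ only for the prefix part of the fooling scheme. With this correction your argument goes through verbatim (and your explicit appeal to unambiguity to identify the constructed run with the one computing $t$ is a nice clarification that the paper leaves implicit). Your separate verification that $w \neq w'$ is harmless but unnecessary: it is already forced by the distinguishing clause of the scheme.
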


\begin{proof}
	Assume there exist states $p,q,r,q',r' \in Q$,
	and accepting runs $q \xleftarrow{u_2} p$ and $r \xleftarrow{v_2} p$ with $|u_2| = |v_2|$
	and $\out_F(q \xleftarrow{u_2} p) \neq \out_F(r \xleftarrow{v_2} p)$.
	Furthermore let $p \xleftarrow{u_1} q' \xleftarrow{u_2} p$,
	$p \xleftarrow{v_1} r' \xleftarrow{v_2} p$ and
	$p \xleftarrow{s} q_{\mathit{in}}$ be runs.
	Let $u = u_1u_2$ and $v = v_1v_2$
	and consider any word $w \in \{u,v\}^*$.
	Since $t(u_2ws) = \out_F(q \xleftarrow{u_2} p) \, \out(p \xleftarrow{ws} q_{\mathit{in}})$
	and $t(v_2ws) = \out_F(r \xleftarrow{v_2} p) \, \out(p \xleftarrow{ws} q_{\mathit{in}})$,
	we have $t(u_2ws) \neq t(v_2ws)$.
	This shows that $(u_2,v_2,u,v,\{s\})$ is a linear fooling scheme.
\end{proof}
If $\pi$ is a non-empty run $p \xleftarrow{a_1 \cdots a_n} q$ in $A$
and $p \xleftarrow{(a_1,\rho_1) \cdots (a_n,\rho_n)} q$ is the corresponding run in $B$
then we call $\rho_1$ the {\em key} of $\pi$.
The following lemma justifies the name, stating that $\pi$
is determined by the state $q$, the word $a_1 \cdots a_n$ and the key $\rho_1$.

\begin{lemma}
	\label{lem:key}
	If $p \xleftarrow{w} q$ and $p' \xleftarrow{w} q$ are non-empty runs in $A$
	with the same key then the runs must be identical.
\end{lemma}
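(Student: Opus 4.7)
The plan is to lift both runs to the deterministic right-subsequential transducer $B$ and combine the determinism of $B$ with a coherence property of the look-aheads. Write $w = a_1 \cdots a_n$ and let $\tilde\pi = p \xleftarrow{(a_1,\rho_1)\cdots(a_n,\rho_n)} q$ and $\tilde\pi' = p' \xleftarrow{(a_1,\rho_1')\cdots(a_n,\rho_n')} q$ be the corresponding runs of $B$ under the stated bijection between runs of $A$ and $B$. The hypothesis on the keys gives $\rho_1 = \rho_1'$, and my target is $\tilde\pi = \tilde\pi'$; the bijection then yields the identity of the original runs in $A$, in particular $p = p'$.

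The crux is the look-ahead recurrence $\rho_{i+1} = \rho_i \cdot a_i$, where $[u]_{\R_t} \cdot a := [ua]_{\R_t}$ is the right action of $\Sigma$ on $\Sigma^*/\R_t$ (well-defined because $\R_t$ is a right congruence). I would verify it using trimness of $B$: extend $\tilde\pi$ on both sides to an initial accepting run of $B$, gluing at $q$ and $p$ by using that every state lies on some initial accepting run. Any initial accepting run of $B$ reads an input in $\dom(t[\R_t]) = e_{\R_t}(\dom(t))$, so its input has the form $e_{\R_t}(u)$ for some $u \in \dom(t)$; by definition of $e_{\R_t}$, the second component at position $j$ is the $\R_t$-class of the prefix of length $j-1$, so the recurrence holds throughout the extended run, and in particular on the sub-run $\tilde\pi$. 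The same reasoning applies to $\tilde\pi'$.

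A straightforward induction starting from $\rho_1 = \rho_1'$ and applying the recurrence with the common letters $a_1, \ldots, a_{n-1}$ yields $\rho_i = \rho_i'$ for every $i$. Hence $\tilde\pi$ and $\tilde\pi'$ are runs of $B$ starting at the same source state $q$ on exactly the same annotated input $(a_1,\rho_1)\cdots(a_n,\rho_n)$, and determinism of $B$ forces them to coincide. The delicate step is the look-ahead recurrence for an arbitrary, not necessarily initial or accepting, run of $B$: it is trimness that lets us embed every such run inside a canonically annotated initial accepting run, where the recurrence is immediate from the definition of $e_{\R_t}$.
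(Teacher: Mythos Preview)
Your argument is correct and follows the same line as the paper's proof: lift to $B$, use trimness to embed the run in an initial accepting run whose input lies in $e_{\R_t}(\dom(t))$, read off the look-ahead recurrence $\rho_{i+1}=[\,\cdot\,]_{\R_t}\cdot a_i$, and conclude by determinism of $B$. The only cosmetic difference is that the paper packages this as an induction on $|w|$ (showing $\rho_2=\rho_2'$ and recursing on the length-$(n-1)$ suffix), whereas you establish the recurrence once for the whole run and then equate all $\rho_i$ simultaneously; the underlying ideas are identical.
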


\begin{proof}
	Assume that $w = a_1 \cdots a_n$ and let $p \xleftarrow{(a_1,\rho_1) \cdots (a_n,\rho_n)} q$
	and $p' \xleftarrow{(a_1,\rho'_1) \cdots (a_n,\rho'_n)} q$ be the corresponding runs in $B$
	with $\rho_1 = \rho_1'$.
	We proceed by induction on $n$.
	If $n = 1$ then this statement is trivial because $B$ is deterministic.
	Now assume $n \ge 2$ and let
	$p \xleftarrow{a_1} r \xleftarrow{a_2 \cdots a_n} q$ and
	$p' \xleftarrow{a_1} r' \xleftarrow{a_2 \cdots a_n} q$.
	Since $B$ is trim there exist an accepting run on $e_{\R_t}(u)$ from $p$
	and an accepting run on $e_{\R_t}(u')$ from $p'$ for some words $u,u' \in \Sigma^*$.
	By definition of $t[\R_t]$ we have $[u]_{\R_t} = \rho_1 = \rho'_1 = [u']_{\R_t}$
	and therefore $\rho_2 = [ua_1]_{\R_t} = [u'a_1]_{\R_t} = \rho_2'$.
	By induction hypothesis we know that the runs
	$r \xleftarrow{a_2 \cdots a_n} q$ and $r' \xleftarrow{a_2 \cdots a_n} q$ are identical.
	Since $p \xleftarrow{(a_1,\rho_1)} r$ and $p' \xleftarrow{(a_1,\rho_1')} r'$
	and $B$ is deterministic we must have $p = p'$.
\end{proof}
Let $\pi$ be any run on a word $y \in \Sigma^*$.
If $\pi$ is not guarded, we can factorize $\pi = \pi' \pi''$
such that $\pi''$ is the shortest suffix of $\pi$ which is unguarded,
and then iterate this process on $\pi'$.
This yields unique factorizations $\pi = \pi_0 \pi_1 \cdots \pi_m$ and $y = y_0 y_1 \cdots y_m$
where $\pi_i$ is a run on $y_i$ from a state $q_i$ to a state $q_{i-1}$
such that $y_i$ is the shortest suffix of $y_0 \cdots y_i$ which is not guarded by $q_i$
for all $1 \le i \le m$ and $\pi_0$ is guarded.
The factorization $\pi = \pi_0 \pi_1 \cdots \pi_m$ is the {\em guarded factorization} of $\pi$.

\begin{proposition}
	\label{prop:poly}
	Assume that $t(X)$ is bounded, $A$ is well-behaved and $\R_t$ has no critical tuple.
	Then the $\cev{t}$-growth of $X$ is polynomially bounded.
\end{proposition}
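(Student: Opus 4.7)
My plan is to exploit well-behavedness of $A$ together with suffix-closedness of $X$ to associate with each $w \in X \cap \Sigma^{\le n}$ a ``signature'' lying in a polynomially bounded set, such that $\cev{t}(w)$ is determined by the signature.

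For each $w = a_1 \cdots a_n \in X \cap \Sigma^{\le n}$, I consider the unique state sequence $s_0, s_1, \ldots, s_n$ (with $s_n = q_{\mathit{in}}$) of the run of $A$ on $w$. Uniqueness follows because the run projects from the deterministic $B$-run on $e_{\R_t}(w)$, and suffix-closedness of $X$ forces $s_i \in F$ for every $i$. I partition the run along the SCCs of $A$: let $C_k \succ C_{k-1} \succ \cdots \succ C_1$ be the distinct SCCs visited (with $C_k \ni q_{\mathit{in}}$, $C_1 \ni s_0$) and let boundary positions $0 = L_1 \le R_1 < L_2 \le R_2 < \cdots < L_k \le R_k = n$ with $L_{j+1} = R_j + 1$ satisfy $s_i \in C_j$ for every $i \in [L_j, R_j]$. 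Each sub-run $s_{L_j} \xleftarrow{a_{L_j+1} \cdots a_{R_j}} s_{R_j}$ is guarded (all intermediate states lie in $C_j$) and accepting ($s_{L_j} \in F$), so by well-behavedness its $\out_F$-value depends only on $s_{R_j}$ and $\ell_j := R_j - L_j$, via a function $f_{s_{R_j}}(\ell_j)$ depending only on $A$. The same applies to every sub-sub-run $s_{i-1} \xleftarrow{a_i \cdots a_{R_j}} s_{R_j}$ with $i - 1 \in [L_j, R_j]$: its $\out_F$-value equals $f_{s_{R_j}}(R_j - i + 1)$.

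The crucial step is to show that $\cev{t}(w)$ is a function of the pair $(\sigma(w), \ell(w))$, where $\sigma(w) = (k, (C_j), (s_{L_j}), (s_{R_j}), (a_{L_{j+1}}), (y_{L_{j+1}}))$ takes constantly many values and $\ell(w) = (\ell_1, \ldots, \ell_k)$. Given $(\sigma, \ell, i)$ with $j^*$ such that $i - 1 \in [L_{j^*}, R_{j^*}]$, I compute $t(a_i \cdots a_n) = o(s_{i-1}) y_i \cdots y_n$ as follows: the initial block $o(s_{i-1}) y_i \cdots y_{R_{j^*}}$ equals $f_{s_{R_{j^*}}}(R_{j^*} - i + 1)$ by well-behavedness; each subsequent intra-SCC block $y_{L_j+1} \cdots y_{R_j}$ (for $j > j^*$) is the unique $\Omega^*$-suffix of $f_{s_{R_j}}(\ell_j)$ after the prefix $o(s_{L_j})$; and between them sit the constant-size SCC-transition outputs $y_{L_{j+1}}$ recorded in $\sigma$. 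Concatenating these pieces reconstructs $t(a_i \cdots a_n)$, hence $\cev{t}(w)$. Since the number of length vectors $(\ell_1, \ldots, \ell_k)$ with $\sum_j \ell_j \le n$ is $O(n^{|Q|})$ and $\sigma$ takes constantly many values, $|\cev{t}(X \cap \Sigma^{\le n})| = O(n^{|Q|})$, as desired.

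The step I expect to require the most care is the intra-SCC uniqueness argument: for two words $w, w'$ sharing the same $(\sigma, \ell)$ whose $C_j$-sub-runs pass through different intermediate state sequences, one must verify that the equation $o(s_{L_j}) \cdot y^{(j)} = f_{s_{R_j}}(\ell_j)$ --- with $o(s_{L_j})$ fixed by $\sigma$ and a prefix of $f_{s_{R_j}}(\ell_j)$ by the definition of $\out_F$ applied to any realizing run --- determines $y^{(j)}$ uniquely in the free monoid $\Omega^*$. The hypotheses that $t(X)$ is bounded and $\R_t$ has no critical tuple do not seem to enter this SCC-based compression directly; they arise naturally in the proof of Theorem~\ref{thm:rational-dichotomy} via Propositions~\ref{prop:wb} and~\ref{prop:crit-tup} as the structural content of ``no linear fooling set''.
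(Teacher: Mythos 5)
There is a genuine gap at the very first step. You claim that the state sequence $s_0,\ldots,s_n$ of the unique $A$-run on $w$ also governs the runs on \emph{all} suffixes of $w$: specifically, that $s_i\in F$ for every $i$ and that for each $i$ the $A$-run on $a_i\cdots a_n$ is the suffix $s_{i-1}\xleftarrow{a_i\cdots a_n}s_n$ of the run on $w$. This is what lets you write $t(a_i\cdots a_n)=o(s_{i-1})y_i\cdots y_n$ and then apply well-behavedness to each intra-SCC sub-run. But $A$ is only \emph{unambiguous}, not deterministic as a right transducer, and the claim is false in general. The $A$-run on $w$ is the projection of the deterministic $B$-run on $e_{\R_t}(w)=(a_1,[\eps])(a_2,[a_1])\cdots(a_n,[a_1\cdots a_{n-1}])$, whereas the $A$-run on the suffix $a_i\cdots a_n$ projects from the $B$-run on $e_{\R_t}(a_i\cdots a_n)=(a_i,[\eps])(a_{i+1},[a_i])\cdots(a_n,[a_i\cdots a_{n-1}])$. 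These input words to $B$ have different $\R_t$-annotations (a right congruence does not satisfy $[a_1\cdots a_{j-1}]_{\R_t}=[a_i\cdots a_{j-1}]_{\R_t}$ in general), so the two $B$-runs can diverge from the very first step and the two $A$-runs can pass through entirely different states; in particular $s_{i-1}$ need not be in $F$, and $o(s_{i-1})y_i\cdots y_n$ need not equal $t(a_i\cdots a_n)$. In effect you have silently assumed that $t$ is right-subsequential, which is precisely the special case the Reutenauer--Sch\"utzenberger look-ahead machinery in this section is designed to go beyond.

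This also explains why the two hypotheses you observed ``do not seem to enter'' really are needed, and your observation should have been a warning sign. The paper's proof copes with the nondeterminism by storing, at each level of its tree $T_{q,w}$, the data $\cev{\nu}_{\R_t}(u)$; together with the length of the suffix this determines the \emph{key} of the sub-run at each SCC boundary (Lemma~\ref{lem:key}), and hence \emph{which} of the several competing runs the accepting run on that suffix actually takes. That $\cev{\nu}_{\R_t}(u)$ can be stored in $O(\log|w|)$ bits is exactly where ``$\R_t$ has no critical tuple'' enters, via Theorem~\ref{thm:log-cong}. Similarly, the SCC-crossing run $\pi_m$ in the guarded factorization is a run $p\xleftarrow{v}q$ where $v$ can have length $\Theta(|w|)$, and its output is not pinned down by well-behavedness (its left endpoint need not be final), so the paper stores $\out(\pi_m)$ explicitly and uses ``$t(X)$ bounded'' to encode it in $O(\log|w|)$ bits as a factor of a bounded language. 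Your signature records only the single SCC-crossing \emph{transition} and relies on well-behavedness to fill in the rest, which again presupposes that every $s_{L_j}$ is final. To repair the argument you would need to reintroduce the $\R_t$-look-ahead information and the boundedness assumption, at which point you essentially recover the paper's tree encoding.
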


\begin{proof}

	We will describe an encoding of $\cev{t}(w)$ for $w \in X$ using $O(\log |w|)$ bits.
	For each word $w \in \Sigma^*$ and each state $q \in Q$ we define a tree $T_{q,w}$ recursively,
	which carries information at the nodes and edges.
	If $w$ is guarded by $q$ then $T_{q,w}$ consists of a single node labelled by the pair
	$(q,|w|)$.
	Otherwise let $w = uv$ such that $v$ is the shortest suffix of $w$ which is not guarded by $q$.
	Then $T_{q,w}$ has a root which is labelled by the tuple $(q,|w|,|v|,\cev{\nu}_{\R_t}(u))$.
	For each run $p \xleftarrow{v} q$ we attach $T_{p,u}$ to the root as a direct subtree.
	The edge is labelled by the pair $(\rho,\out(p \xleftarrow{v} q))$
	where $\rho$ is the key of $p \xleftarrow{v} q$.
	By Lemma~\ref{lem:key} distinct outgoing edges from the root are labelled by distinct keys.
	
	The tree $T_{q,w}$ can be encoded using $O(\log |w|)$ bits:
	Since we have $p \prec q$ for every unguarded run $p \xleftarrow{v} q$
	the tree $T_{q,w}$ has height at most $|Q|$ and size at most $|Q|^{|Q|}$.
	All occurring numbers have at most magnitude $|w|$, and the states and keys
	can be encoded by $O(1)$ bits.
	The output words $\out(p \xleftarrow{v} q)$
	are factors of words from the bounded language $t(X)$
	and have length at most $\iml(A) \cdot |v|$.
	Thus they can be encoded using $O(\log |w|)$ bits.
	The node label $\cev{\nu}_{\R_t}(u)$ can be encoded using $O(\log |w|)$ bits
	by Theorem~\ref{thm:log-cong} since $\R_t$ has no critical tuple.

	Let $w = xy \in \Sigma^*$, $q \in Q$
	and let $\pi$ be an accepting run on $y$ from $q$.
	We show that $T_{q,w}$ and $|y|$ determine $\out_F(\pi)$
	by induction on the length of the guarded factorization $\pi = \pi_0 \pi_1 \cdots \pi_m$.
	Since $T_{q_{\mathit{in}},w}$ determines the length $|w|$,
	the tuple $\cev{t}(w)$ is determined by $T_{q_{\mathit{in}},w}$
	for all $w \in X$.
	If $m = 0$ then $y$ is guarded by $q$.
	Since $A$ is well-behaved $\out_F(\pi)$ is determined by $q$
	(which is part of the label of the root of $T_{q,w}$) and $|y|$ only.
	Now assume $m \ge 1$ and suppose that $\pi_i$ is a run $q_{i-1} \xleftarrow{y_i} q_i$
	for all $1 \le i \le m$ with $q_m = q$.
	Then $y_m$ is the shortest suffix of $w$ which is not guarded by $q$.
	The root of $T_{q,w}$ is labelled by $(q,|y_m|,\cev{\nu}_{\R_t}(x y_0 \cdots y_{m-1}))$.
	Since $|y_m|$ and $|y|$ are known, we can also determine $|y_0 \cdots y_{m-1}|$.
	From $\cev{\nu}_{\R_t}(x y_0 \cdots y_{m-1})$ and $|y_0 \cdots y_{m-1}|$
	we can then determine $[y_0 \cdots y_{m-1}]_{\R_t}$, which is the key of $\pi_m$.
	By Lemma~\ref{lem:key} we can find the unique edge
	which is labelled by $([y_0 \cdots y_{m-1}]_{\R_t},\out(\pi_m))$.
	It leads to the direct subtree $T_{q_{m-1},x y_0 \cdots y_{m-1}}$ of $T_{q,w}$.
	By induction hypothesis $T_{q_{m-1},x y_0 \cdots y_{m-1}}$ and $|y_0 \cdots y_{m-1}|$
	determine $\out_F(\pi_0 \cdots \pi_{m-1})$.
	Finally, we can determine $\out_F(\pi_0 \cdots \pi_m) = \out_F(\pi_0 \cdots \pi_{m-1}) \, \out(\pi_m)$,
	concluding the proof.
\end{proof}

Now we can prove Theorem~\ref{thm:rational-dichotomy}:
If $X$ contains no linear fooling set for $t$
then $A$ must be well-behaved by Proposition~\ref{prop:wb}
and $\R_t$ has no critical tuple by Proposition~\ref{prop:crit-tup}.
If additionally $t(X)$ is bounded then the $\cev{t}$-growth of $X$
is polynomially bounded by Proposition~\ref{prop:poly}.
Otherwise, if either $X$ contains a linear fooling set
or $t(X)$ is not bounded then the $\cev{t}$-growth of $X$
is exponential by Proposition~\ref{prop:fooling} and by Proposition~\ref{prop:exp-image}.

\bibliography{refs}

\appendix

\section{Proof of Proposition~\ref{prop:exp-image}}

	Since $\cev{t}(x)$ determines $t(x)$
	we have $|\cev{t}(X \cap \Sigma^{\le n})| \ge |t(X \cap \Sigma^{\le n})|$ for all $n \in \N$.
	It suffices to show that every non-empty preimage $t^{-1}(\{y\})$
	contains at least one word of length $O(|y|)$ in $X$,
	i.e. there exists a number $c > 0$ such that
	$t(X) \cap \Omega^{\le n} \subseteq t(X \cap \Sigma^{\le cn})$ for sufficiently large $n \in \N$.
	Then, if by assumption $|t(X) \cap \Omega^{\le n}|$ grows exponentially,
	then so does $|t(X \cap \Sigma^{\le n})|$ and also $|\cev{t}(X \cap \Sigma^{\le n})|$.
	
	Let us now prove the claim, for which we need to define context-free grammars
	over arbitrary monoids.
	A context-free grammar over a monoid $M$
	has the form $G = (N,S,\to_G)$
	where $N$ is a finite set of nonterminals (which is disjoint from $M$),
	$S$ is the starting nonterminal,
	and $\to_G \, \subseteq N \times (M * N^*)$ is a finite set of productions
	where $M * N^*$ is the free product of the monoids $M$ and $N^*$.
	A derivation tree for $m \in M$
	is a node-labelled rooted ordered tree with the following properties:
	\begin{itemize}
		\item Inner nodes are labelled by nonterminals $A \in N$.
		\item Leaves are labelled by monoid elements $m \in M$.
		\item If a node $s$ has children $s_1, \dots, s_k$
	where $v$ is labelled by $A$ and $s_1, \dots, s_k$ are labelled by $\alpha_1, \dots, \alpha_k$
	then there exists a production $A \to_G \alpha_1 \cdots \alpha_k$.
		\item If $m_1, \dots, m_\ell$ are the labels of the leaves read from left to right
		then $m = m_1 \cdots m_\ell$.
		\end{itemize}
	The language $\L(A)$ generated by a nonterminal $A \in N$ is the set of all elements $m \in M$
	such that there exists a derivation tree for $m$ whose root is labelled by $A$.
	The language $\L(G)$ generated by $G$ is the language $\L(S)$.
	
	We first construct
	from a context-free grammar $G = (N,S,\to_G)$ for $X \subseteq \Sigma^*$
	a context-free grammar $H = (N',S',\to_H)$ for $t|_X = \{ (x,t(x)) \mid x \in X \}$
	over the product monoid $\Sigma^* \times \Omega^*$.
	We can assume that $\eps \notin X$ and that $G$ is in Chomsky normal form,
	i.e. each rule has the form $A \to a$ where $A \in N$ and $a \in \Sigma$,
	or $A \to BC$ where $A,B,C \in N$.
	Let $R = (Q,\Sigma,\Omega,I,\Delta,F,o)$ be a real-time transducer for $t$.
	We define $N' = \{ S' \} \cup \{ A_{p,q} \mid A \in N, \, p,q \in Q \}$
	and $\to_H$ contains the productions
	\begin{itemize}
		\item $A_{p,q} \to_H (a,y)$ for all productions $A \to_G a$
		and transitions $p \xrightarrow{a \mid y} q$ in $R$,
		\item $A_{p,q} \to_H B_{p,r} C_{r,q}$ for all productions $A \to_G BC$ and $p,q,r \in Q$,
		\item $S' \to_H S_{p,q} \, (\eps,o(q))$ for all $(p,q) \in I \times F$.

	\end{itemize}
	One can verify that for all $A \in N$ and $p,q \in Q$ the language $\L(A_{p,q})$ is the set of all pairs
	$(x,y) \in \L(A) \times \Omega^*$ such that $p \xrightarrow{x \mid y} q$ in $R$,
	and that $\L(H) = t|_X$.
	
	Now let $A \in N$, $p,q \in Q$ and $(x,y) \in \L(A_{p,q})$
	with the property that $|x| = \min \{|x'| \mid (x',y) \in \L(A_{p,q}) \}$.
	Consider a derivation tree $T$ for $(x,y)$ whose root is labelled by $A_{p,q}$.
	If $s$ is a node in $T$ which derives $(u,v)$
	then we define the {\em weight} of $s$ to be $|v|$.
	Clearly, the weight of an inner node is the sum of the weights of its children. 

	\begin{claim}
	If $(s_1, s_2, \dots, s_k)$ is a path in $T$ such that all nodes $s_i$ on the path
	have the same length then $k \le |N'|$.
	\end{claim}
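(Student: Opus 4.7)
The plan is a pigeonhole argument on the nonterminal labels in $N'$, combined with a pumping-down step that exploits the minimality of $|x|$. Each $s_i$ on the path carries a nonterminal label drawn from $N'$; if $k > |N'|$, then there exist indices $i < j$ such that $s_i$ and $s_j$ are labelled by the same nonterminal, say $A_{p,q}$. I will then substitute the subtree rooted at $s_j$ in place of the subtree rooted at $s_i$ and show this yields a derivation of $(x',y)$ with $|x'| < |x|$, contradicting the choice of $x$.

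Let $(u_i, v_i)$ and $(u_j, v_j)$ denote the pairs derived at $s_i$ and $s_j$. Since $s_j$ lies strictly below $s_i$, the word $u_j$ is a factor of $u_i$ and $v_j$ is a factor of $v_i$. The hypothesis of the claim (same weight along the path, i.e.\ $|v_i| = |v_j|$) forces $v_i = v_j$. Because both subtree roots bear the label $A_{p,q}$, the substitution produces a valid derivation tree $T'$ with the same root label as $T$. The tree $T'$ derives $(x', y')$ where $x'$ is obtained from $x$ by replacing the factor $u_i$ with $u_j$, while $y' = y$ since $v_i = v_j$.

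It remains to establish $|u_j| < |u_i|$, which gives $|x'| < |x|$ and thus the desired contradiction. This is where the Chomsky normal form of $H$ is essential: each internal production is binary $A_{p,q} \to_H B_{p,r} C_{r,q}$, so along the path from $s_i$ down to $s_j$ every step forks off a sibling subtree. Every such sibling subtree derives a non-empty word in $\Sigma^*$, because every leaf production of $H$ has the form $A_{p,q} \to_H (a,y)$ with $a \in \Sigma$, so any non-empty subtree contributes at least one input letter. Therefore each step strictly reduces the length of the input contribution, and $|u_j| < |u_i|$ follows.

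The main obstacle is the off-path non-emptiness check in the last paragraph; it hinges on the assumptions that $\varepsilon \notin X$ and that $G$ (hence $H$) is in Chomsky normal form, which ensure that no sibling subtree along the path can be vacuous. Once this is verified, the pigeonhole step and the substitution are routine, and the bound $k \le |N'|$ follows immediately.
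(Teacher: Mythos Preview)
Your proof is correct and follows the same pigeonhole-and-substitution strategy as the paper. You are in fact more careful than the paper, which simply asserts that $u'$ is a \emph{proper} factor of $u$; your appeal to Chomsky normal form to guarantee that every off-path sibling contributes at least one $\Sigma$-letter is exactly the justification that makes this assertion rigorous. One minor imprecision: $H$ also has the start productions $S' \to_H S_{p,q}\,(\eps,o(q))$, so not every leaf of $H$ carries a letter $a\in\Sigma$; but since $T$ is rooted at some $A_{p,q}$ and $S'$ never occurs on a right-hand side, these productions cannot appear in $T$, and your argument goes through unchanged.
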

	
	\begin{proof}
	Assume that $k > |N'|$.
	There exist two nodes $s_i \neq s_j$ with $i < j$
	which are labelled by the same nonterminal from $N'$.
	The subtrees rooted in $s_i$ and $s_j$ are derivation trees for pairs $(u,v)$ and $(u',v)$
	for some $u,u' \in \Sigma^*$ and $v \in \Omega^*$ where $u'$ is a proper factor of $u$.
	We can then replace the subtree rooted in $s_i$ by the subtree rooted in $s_j$
	and obtain a derivation tree for a pair $(x',y)$ with $|x'| < |x|$, contradiction.
	\end{proof}
	
	Set $c = |N'|$. By the claim above every subtree whose root has weight 0
	has depth at most $c$ and hence its size is at most $C = 2^c-1$.
	Define $D = c+(c-1)C$.

	\begin{claim}	
		The derivation tree $T$ has $O(|y|)$ nodes.
	\end{claim}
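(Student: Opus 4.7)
The plan is to decompose $T$ into a ``positive-weight subtree'' $T'$, consisting of those nodes whose weight is strictly positive, together with the maximal weight-zero subtrees that dangle off $T'$. Since the weight of an internal node equals the sum of the weights of its children, the set of positive-weight nodes is closed under taking ancestors, so $T'$ really is a subtree containing the root (assuming $y\neq\eps$, which is the nontrivial case). The leaves of $T'$ must be leaves of $T$ (an internal node of $T'$ with all its $T$-children at weight zero would itself have weight zero), and each such leaf has weight at least $1$; since the outputs of the leaves concatenate to $y$, the number of leaves of $T'$ is at most $|y|$.

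Next I would bound $|T'|$ using the first claim. Since $G$ is in Chomsky normal form, every internal node of $T$ has exactly two children, so every node of $T'$ has either one or two children in $T'$. Along a maximal chain of unary nodes in $T'$ the weight is preserved (the sibling that lies outside $T'$ has weight $0$), so such a chain consists of nodes having a common weight and by the first claim its length is at most $c$. Contracting every unary chain collapses $T'$ to a strictly binary tree with the same leaves, of size at most $2|y|-1$; reinserting the chains multiplies the size by at most $c+1$, giving $|T'| \le (c+1)(2|y|-1) = O(|y|)$.

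Finally, $T$ is obtained from $T'$ by attaching, at each node of $T'$, at most two maximal subtrees whose roots have weight $0$. The excerpt already records that each such subtree has size at most $C = 2^c-1$, so $|T| \le |T'| + 2C\cdot|T'| = O(|T'|) = O(|y|)$, as required.

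The only real subtlety is recognizing the correct potential function -- the weight $|v|$ rather than something like the depth -- and checking that unary nodes of $T'$ preserve it; once that is in place, the bound on leaves plus the length bound from the first claim do all the work, and the weight-zero subtrees contribute only a constant multiplicative overhead via $C$.
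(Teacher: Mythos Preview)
Your argument is correct and uses the same ingredients as the paper's: the bound $c$ on same-weight chains from the first claim, the bound $C$ on weight-zero subtrees, and the fact that the positive-weight leaves number at most $|y|$. The paper packages this as an explicit induction on $|y|$ proving $|T|\le(2|y|-1)D$ with $D=c+(c-1)C$, each step peeling off the maximal weight-$|y|$ path from the root and recursing on the two subtrees at its first true branching; your global decomposition (positive-weight skeleton $T'$ plus dangling weight-zero subtrees, then contracting unary chains of $T'$) is an equivalent, slightly more structural presentation of the same count.
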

	
	\begin{proof}
		We prove by induction on $|y|$ that,
		if $|y| \ge 1$ then $T$ has at most $(2|y|-1)D$ nodes.
		The root of $T$ has weight $|y|$.
		Let $(s_1, \dots, s_k)$ be the maximal path starting in the root whose nodes have weight $|y|$.
		We know that $k \le c$.
		If $s_i'$ is the sibling of $s_{i-1}$ for $2 \le i \le k$,
		then $s_i'$ has weight $0$ and the subtree rooted in $s_i'$ has at most $C$ nodes.
		
		\begin{enumerate}
			\item Assume that $s_k$ is a leaf.
			Then $T$ consists of at most $D = c + (c-1)C \le (2|y|-1)D$ nodes,
			namely $k \le c$ nodes on the path $(s_1, \dots, s_k)$
			and $c-1$ many subtrees with at most $C$ nodes.
			
			\item Assume that $s_k$ has two children $s_{k+1}$ and $s'_{k+1}$
			and let $w$ and $w'$ be the weights of $s_{k+1}$ and $s'_{k+1}$, respectively.
			We have $|y| = w + w'$ and $1 \le w,w' < |y|$.
			By induction hypothesis the subtrees rooted in $s_{k+1}$ and $s'_{k+1}$
			have at most $(2w-1)D$ and $(2w'-1)D$ nodes, respectively.
			Therefore $T$ has in total at most $D + (2w-1)D + (2w'-1)D \le (2|y|-1)D$ nodes.
		\end{enumerate}
		This concludes the proof of the claim.
	\end{proof}
	Now let $y \in t(X) \cap \Omega^{\le n}$ and $x \in X$ be any word with $t(x) = y$.
	There exists an initial accepting run $p \xrightarrow{x \mid y'} q$
	with $y = y' \, o(q)$.
	As shown above there exists a word $x'$ with $p \xrightarrow{x' \mid y'} q$
	and $x' \le O(|y'|) \le O(n)$, which concludes the proof.

\section{Proof of Lemma~\ref{lem:rep-rat}}

In \cite{BaranyLS06} it was observed that $\sim$ can be recognized by a synchronous 2-tape automaton.
The {\em convolution} of two words $u = a_1 \cdots a_m, v = b_1 \cdots b_n \in \Omega^*$
is the word $u \otimes v = c_1 \cdots c_\ell$ of length $\ell = \max(m,n)$ over the alphabet $(\Omega \cup \{\Box\})^2$
where $c_i = (a_i,b_i)$ if $1 \le i \le \min(m,n)$, $c_i = (a_i,\square)$ if $m < i \le n$
and $c_i = (\square,b_i)$ if $n < i \le m$.
Similarly, one can define an associative operation $\otimes$ on $k$-tuples of words.
A $k$-ary relation $R \subseteq (\Omega^*)^k$ is {\em synchronous rational} if $\otimes R = \{ \otimes(u_1, \dots, u_k) \mid (u_1, \dots, u_k) \in R\}$ is a regular language over $(\Omega \cup \{\Box\})^k$.
The set of synchronous rational relations is known to be closed under first-order operations
and, in particular, under Boolean operations, cf. \cite{KhoussainovN94}.
Clearly, every binary synchronous rational relation is a rational transduction.

\begin{lemma}[\cite{BaranyLS06}]
	The equivalence relation $\sim^\rev$ is synchronous rational.
\end{lemma}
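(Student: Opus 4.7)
The plan is to construct a deterministic finite right automaton $B$ over the alphabet $Q \cup \Gamma$ that reads a reversed configuration $c^\rev = q\gamma_k\cdots\gamma_1\bot$ from left to right and arrives at a state $\tau(c)$ encoding the $\sim$-class of $c$. Once $B$ exists, the synchronous rationality of $\sim^\rev$ follows at once: the required two-tape automaton is the deterministic product of two copies of $B$ reading the reversed configurations in lockstep over $(Q \cup \Gamma \cup \{\bot,\Box\})^2$, with $\Box$-self-loops for the tape that finishes first, and acceptance requiring both copies to reach states encoding the same $\tau$-value, plus a regular check that each input lies in $\rConf$ (which is regular by \cite{BouajjaniEM97,Buechi64}).

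To design $B$, I would exploit two pieces of finite information about the VPA. First, the behavior on any well-matched input is state-determined by $\varphi \colon W \to Q^Q$, which is finite. Second, the only stack-dependent information used at a return transition is the top symbol $\gamma$, contributing finitely many pop maps $\delta_r(\cdot,a,\gamma) \colon Q \to Q$. Together these imply that, reading $c^\rev$ from the top state $q$ downward, the needed summary at depth $i$ is a function recording, for each potential ``entry state'' at the current stack-level, the residual acceptance behavior below that level. Such summaries live in a finite set $T$ and are updated by a deterministic transition that combines elements of $Q^Q$ with the pop maps for the symbol just read; the $\bot$ marker triggers a distinguished boundary transition, reflecting that returns at $\bot$ leave the stack unchanged. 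Correctness, that is $\tau(c_1)=\tau(c_2)$ if and only if $c_1 \sim c_2$, is proved by induction on stack depth: the forward direction factors any input on $c$ into a well-matched prefix (summarized by $\varphi$), a return popping the top $\gamma$, and a tail processed from the shallower configuration, each step transparent to $\tau$; the backward direction synthesizes a separating input word by unwinding the differing summaries.

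The main obstacle is choosing the state space $T$ so that it is simultaneously finite, sufficient to characterize $\sim$, and updatable deterministically from left to right along $c^\rev$. A purely ``state-transformation'' summary, obtained by composing elements of $Q^Q$ with the pop maps, captures the VPA's state dynamics but must be enriched to record acceptance information, since $\sim$ is defined in terms of accepted languages rather than reachable states. The visibly pushdown discipline---that the stack height profile is determined by the input---is precisely what keeps this enrichment finite; for general pushdown systems the analogous analysis fails, as configuration language equivalence is undecidable there.
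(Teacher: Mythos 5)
Your plan hinges on constructing a deterministic finite right automaton $B$ over $Q \cup \Gamma$ whose final state $\tau(c)$ after reading $c^\rev$ \emph{encodes the $\sim$-class of $c$}, so that synchronous rationality of $\sim^\rev$ follows by running two copies of $B$ in parallel and comparing the two final states. This first step is impossible: the equivalence $\sim$ on $\rConf$ has infinitely many classes in general. Indeed, the rest of the paper treats the set $\Rep = \rep(\rConf)$ of canonical representatives as an infinite language whose growth may even be exponential; concretely, for a VPA recognizing a Dyck-like language the configurations $\bot\gamma^m q_0$ and $\bot\gamma^n q_0$ with $m \neq n$ accept different languages. Since $\tau$ takes values in the finite state set of $B$, the requirement $\tau(c_1) = \tau(c_2) \Leftrightarrow c_1 \sim c_2$ cannot hold. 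The situation is analogous to equality of unary numerals: the relation is synchronous rational, yet no finite deterministic automaton can map each input to a finite canonical tag whose equality decides the relation.

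The paper sidesteps this by never trying to summarize a single configuration. It builds a nondeterministic right automaton for the \emph{complement} of $\sim^\rev$ that reads the two configurations in lockstep, aligned at the state end, while maintaining only a \emph{pair} of states in $Q^2$. It nondeterministically guesses the monotonic factorization of a hypothetical separating word: at each step it either applies some transformation from $\varphi(W)$ to both components, or guesses a return letter and pops the top stack symbol from both tapes (with $\bot$ acting as a fixed point, which also handles the $\Box$-padding once one stack is exhausted). It accepts if the reached pair lies in the fixed finite set of state pairs that are separated by some ascending word applied to $\bot$-configurations; this check suffices because an ascending suffix can never inspect the remaining stack. Your instinct to build the transitions from $\varphi \colon W \to Q^Q$ and the pop maps is exactly right, and your treatment of the $\bot$/$\Box$ boundary is fine, but the architecture must be a single two-tape comparison automaton (naturally nondeterministic and recognizing the complement), not the product of two deterministic one-tape classifiers.
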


\begin{proof}
	We present a right automaton which recognizes the complement of $\sim^\rev$.
	It reads two configurations
	$\alpha p$ and $\beta q$ synchronously
	which are aligned to the right, from right to left.
	The automaton stores a pair of states of $A$, starting with the pair $(p,q)$.
	It then guesses a word $w$ by its monotonic factorization which witnesses
	that $w$ belongs to exactly one of the languages $\L(\alpha p)$ and $\L(\beta q)$.
	Notice that it suffices to read the maximal descending prefix of $w$ and test whether
	the reached state pair $(p',q')$ belongs to some fixed set of state pairs
	since the remaining ascending suffix cannot access the stack contents of the reached configurations.
	To simulate $A$ on a descending prefix in each step the automaton either guesses
	a return symbol and removes the top most stack symbol of both configurations (or leaves $\bot$ at the top),
	or guesses a state transformation $\tau \in \varphi(W)$ which only modifies the current state pair.
\end{proof}

It is well-known that $\le_{\mathit{llex}}$ is a synchronous rational relation.
By the closure properties of synchronous rational relations
the function $\rep$ is rational.

\section{Proof of Proposition~\ref{prop:nu-f}}

Let $w = w_0 w_1 \cdots w_m \in \Sigma^*$ be a monotonic factorization
and let $s = s_0 s_1 \cdots s_m \in \Flat$ be the associated flattening.
We prove $t_f(s) \sim \delta(\bot q_0, w)$ by induction on $m$.
\begin{itemize}
\item If $m = 0$ and $s_0 = \eps$ then $t_f(s) = \bot q_0 = \delta(\bot q_0,\eps)$.
\item If $m = 0$ and $s_0 = q_1 \cdots q_k \in Q^+$
then $t_f(s) = \bot q_1$ and $\nu_A(w) = \rep(\delta(\bot q_0,w)) = \bot q_1$.
\item If $m \ge 1$ and $s_m \in \Sigma_c$ then $s_m = w_m$.
By induction hypothesis we know that $t_f(s_0 \cdots s_{m-1}) \sim \delta(\bot q_0,w_0 \cdots w_{m-1})$.
Since $\delta(\bot q_0,w) = \delta(\delta(\bot q_0,w_0 \cdots w_{m-1}),w_m)$
and $t_f(s) = \delta(t_f(s_0 \cdots s_{m-1}),s_m)$ we obtain $\delta(\bot q_0,w) \sim t_f(s)$.
\item If $m \ge 1$ and $s_m = \tau q_2 \cdots q_k \in Q^Q Q^*$ then $w_m$ is well-matched and $\varphi(w_m) = \tau$.
Assume that $t_f(s_0 \cdots s_{m-1}) = \alpha p$
and $\delta(\bot q_0,w_0 \cdots w_{m-1}) = \beta q$.
By induction hypothesis we know that $\alpha p \sim \beta q$.
Since $t_f(s) = \alpha \tau(p) = \delta(\alpha p, w_m)$
and $\delta(\bot q_0,w) = \delta(\beta q,w_m)$ we obtain $t_f(s) \sim \delta(\bot q_0,w)$.
\end{itemize}
Since $\nu_f = \rep \circ t_f$ and $\nu_A(w) = \rep(\delta(\bot q_0,w))$
we have $\nu_f(s) = \nu_A(w)$.
	
\begin{lemma}
	\label{lem:state-transd}
	Let $w = w_0 w_1 \cdots w_m \in \Sigma^*$ be a monotonic factorization
	with empty initial factor $w_0 = \eps$ and let $s = s_0 s_1 \cdots s_m \in \Sigma_f^*$ be the associated flattening.
	If $\delta(\bot p,w) = \bot \alpha q$
	then $p \xrightarrow{s \mid \alpha} q$ in $A_1$
	and hence $t_p(s) = \alpha q$.
\end{lemma}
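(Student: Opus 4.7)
The plan is to proceed by induction on the number $m$ of internal factors. Since $w_0 = \eps$ we have $s_0 = \sigma_0(\eps) = \eps$, so when $m = 0$ the statement reduces to $\delta(\bot p, \eps) = \bot p$ being witnessed by the empty run $p \xrightarrow{\eps \mid \eps} p$ in $A_1$, which is immediate.

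For the inductive step $m \ge 1$, write $w' = w_0 \cdots w_{m-1}$ and $s' = s_0 \cdots s_{m-1}$. By the induction hypothesis applied to the monotonic factorization $w_0 w_1 \cdots w_{m-1}$ (still with empty initial factor) there exist $\alpha' \in (\Gamma \setminus \{\bot\})^*$ and $q' \in Q$ with $\delta(\bot p, w') = \bot \alpha' q'$ and a run $p \xrightarrow{s' \mid \alpha'} q'$ in $A_1$. I then split into the two cases allowed by the definition of a monotonic factorization.

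If $w_m \in \Sigma_c$, then $s_m = w_m$, and letting $\delta_c(q', w_m) = (\gamma, q)$ the VPA reaches $\bot \alpha' \gamma q$, while $A_1$ has a transition $q' \xrightarrow{w_m \mid \gamma} q$ by construction. Concatenating with the inductive run gives $p \xrightarrow{s \mid \alpha' \gamma} q$, so $\alpha = \alpha' \gamma$ is correct. If $w_m$ is a non-empty well-matched factor, then $s_m = \sigma_1(w_m) = \tau q_2 \cdots q_k$ with $\tau = \varphi(w_m)$. Here I invoke the key property of VPAs that on well-matched words the stack is preserved and the state evolves via $\varphi$: $\delta(\bot \alpha' q', w_m) = \bot \alpha' \tau(q')$. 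In $A_1$, the transition $q' \xrightarrow{\tau \mid \eps} \tau(q')$ handles the state-transformation letter $\tau$, and the self-loops $\tau(q') \xrightarrow{q_i \mid \eps} \tau(q')$ consume the remaining state symbols $q_2, \ldots, q_k$ producing no output. Thus $q' \xrightarrow{s_m \mid \eps} \tau(q')$ in $A_1$, matching $\alpha = \alpha'$ and $q = \tau(q')$.

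The only potentially subtle point is the well-matched case: one must justify that the VPA's behaviour on $w_m$ is captured entirely by applying $\tau = \varphi(w_m)$ to the current state without touching $\alpha'$, and that the ``residual'' state symbols $q_2, \ldots, q_k$ of $\sigma_1(w_m)$ contribute nothing to either the state or the output in $A_1$. Both facts are built into the preliminaries (the homomorphism $\varphi$ and the self-loop transitions of $A_1$), so the induction goes through cleanly.
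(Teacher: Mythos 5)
Your proof is correct and follows essentially the same route as the paper: induction on $m$, with the base case $w = s = \eps$ and the same two-case analysis on whether $w_m$ is a call letter or a non-empty well-matched factor, matching the VPA step against the corresponding transition(s) of $A_1$. The paper's version is a bit terser (it introduces $\alpha_1,\alpha_2$ and $q_1,q_2$ and argues they coincide) while you explicitly build the run, but the substance is identical.
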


\begin{proof}
Proof by induction on $m$. If $m = 0$ then $w = s = \eps$, $p = q$ and $\alpha = \eps$.
For the induction step assume
$\delta(\bot p, w_1 \cdots w_{m-1}) = \bot \alpha q$
and $\delta(\bot \alpha q, w_m) = \bot \alpha \alpha_1 q_1$.
By induction hypothesis the run of $A_1$ on $s$ has the form
$p \xrightarrow{s_1 \cdots s_{m-1} \mid \alpha} q \xrightarrow{s_m \mid \alpha_2} q_2$.
We do a case distinction.

If $w_m \in \Sigma_c$ then $\delta(q,w_m) = (\alpha_1,q_1)$.
Since $s_m = w_m$ and by definition of $A_1$ we have $\alpha_1 = \alpha_2$ and $q_1 = q_2$.
Otherwise $w_m \in W \setminus \{\eps\}$ and $\alpha_1 = \eps$.
The word $s_m = \sigma_1(w_m)$ starts with $\tau = \varphi(w_m)$
and we have $\tau(q) = q_1$.
By definition of $A_1$ we indeed have $q_2 = \tau(q)$ and $\alpha_2 = \eps$.
\end{proof}

We define the following total function $t_f \colon \Sigma_f^* \to (Q \cup \Gamma)^*$.
Let $s \in \Sigma_f^*$ be an input word and let $q_1 \cdots q_k \in Q^*$ be the maximal prefix of $s$
from $Q^*$, say $s = q_1 \cdots q_k s'$ for some $s' \in \Sigma_f^*$.
Then we define
\[
	t_f(s) = \begin{cases}
	t_{q_0}(s), & \text{if $k = 0$,} \\
	t_{q_1}(s'), & \text{if $k \ge 1$.} \\
	\end{cases}
\]
It is easy to see that $t_f$ is rational by providing a transducer for $t_f$.
It verifies whether $s$ starts with a state $q \in Q$.
If so, it memorizes $q$ and simulates $A_1$ on $s'$ from $q$,
and otherwise $A_1$ is directly simulated on $s$ from $q_0$.

Now let $w = w_0 w_1 \cdots w_m$ be a monotonic factorization and $s = s_0 s_1 \cdots s_m \in \Sigma_f^*$
be the associated flattening.
We claim that $\delta(\bot q_0, w) \sim \bot t_f(s)$.
If $w_0 = \eps$ then $s_0 = \eps$ and $s$ does not start with a state from $Q$.
In this case we have $\delta(\bot q_0, w) = \bot  t_{q_0}(s) = \bot t_f(s)$ by Lemma~\ref{lem:state-transd}.
If $w_0 \neq \eps$ then $s_0$ starts with some state $q_1 \in Q$.
By definition of $\sigma_0$ we have $\delta(\bot q_0, w_0) \sim \bot q_1$
and thus $\delta(\bot q_0, w) \sim \delta(\bot q_1, w_1 \cdots w_m)$.
By Lemma~\ref{lem:state-transd} we have $\delta(\bot q_1, w_1 \cdots w_m) = \bot t_{q_1}(s_1 \cdots s_m) = \bot t_f(s)$,
which proves the claim.
Finally, we can set $\nu_f(s) = \rep(\bot t_f(s))$ for all $s \in \Sigma_f^*$.

\section{Proof of Proposition~\ref{prop:flat-reg-eq}}

	By Proposition~\ref{prop:nu-f} we know $\nu_f(\Flat) = \Rep$.
	Clearly $\nu_f(\Flat) \subseteq \nu_f(\RegFlat)$ and it remains to show the other inclusion.
	Consider a word $r \in \RegFlat$ which does not have a non-empty prefix from $R_0$, say
	$r = u_1 v_1 u_2 v_2 \cdots v_m u_{m+1}$
	where $u_1, \dots, u_{m+1} \in \Sigma_c^*$
	and $v_1, \dots, v_m \in R_1$.
	Then $r' = u_1 \, \apx_\ell(v_1) \, u_2 \, \apx_\ell(v_2) \cdots \apx_\ell(v_m) \, u_{m+1}$
	belongs to $\Flat$ and $\nu_f(r) = \nu_f(r')$.
	
	Now assume that $r$ has a non-empty prefix $q_1 \cdots q_k \in R_0$.
	We do the replacements above and the following.
	By definition $q_1 \cdots q_k$ is a proper suffix of some word
	$x = \tau p_2 \cdots p_{i-1} q_1 \cdots q_k \in R_1$.
	Let $y = \apx_i(x) \in S_1$ which has a proper suffix of the form $q_1 q_2' \cdots q_k'$ belonging to $S_0$.
	We can replace $q_1 \cdots q_k$ by $q_1 q_2' \cdots q_k'$ in $r$
	and obtain a word $r' \in \Flat$ with  $\nu_f(r) = \nu_f(r')$.

\section{Proof of Proposition~\ref{prop:transf-growth}}

	Assume that $(u_2,v_2,u,v,Z)$ is a linear fooling scheme for $\nu_f$
	with $\{u_2,v_2\} \{u,v\}^*Z \subseteq \RegFlat$.
	We first ensure that $\{u,v\} \cup Z \subseteq (\Sigma_c \cup R_1)^*$.
	Assume that $u, v \in Q^*$ and hence $\{u_2,v_2\} \{u,v\}^* \subseteq Q^*$ is
	contained in the set of prefixes of words in $R_0$.
	Since $R_0$ is bounded by assumption also $\{u_2,v_2\} \{u,v\}^*$ must be bounded, which contradicts the fact
	that $\{u_2,v_2\} \{u,v\}^*$ has exponential growth.		
	
	Without loss of generality, assume that $u = u_3 u_4$ such that $u_4$ either starts with
	a call letter $a \in \Sigma_c$ or a transformation $\tau \in Q^Q$.
	We claim that $(u_2 u_3,v_2 u_3, u_4 u u_3, u_4 v u_3, u_4 Z)$ is a linear fooling scheme for $\nu_f$.
	It has the following properties:
	\begin{itemize}
	\item $\{ u_2 u_3,v_2 u_3\} \{u_4 u u_3, u_4 v u_3\}^* u_4 Z \subseteq \RegFlat$,
	\item $u_2u_3$ is a suffix of $u_4 u u_3$,
	\item $v_2 u_3$ is a suffix of $u_4 v u_3$,
	\item $|u_2 u_3|=|v_2 u_3|$.
	\end{itemize}
	Also, we know that for every $n \in \N$ there exists a word $z_n \in Z$ with $|z_n| \le O(n)$ and
	$\nu_f(u_2 w z_n) \neq \nu_f(v_2 w z_n)$ for all $w \in \{ uu,uv \}^{\le n}\{u\}$
	and thus, by factoring out the first $u_3$- and the last $u_4$-factor, we have
	$\nu_f(u_2 u_3 w u_4 z_n) \neq \nu_f(v_2 u_3 w u_4 z_n)$ for all $w \in \{ u_4 u u_3,u_4 v u_3 \}^{\le n}$.
	Hence we have verified the conditions of a linear fooling scheme.
	It has the desired properties that
	$\{u_4 u u_3, u_4 v u_3\} \cup u_4 Z \subseteq (\Sigma_c \cup R_1)^*$
	because $u_4$ starts with a call letter or a transformation.
	
	Now let $(u_2,v_2,u,v,Z)$ be a linear fooling scheme with $\{u,v\} \cup Z \subseteq (\Sigma_c \cup R_1)^*$.
	We replace occurring factors from $R_1$ by factors from $S_1$
	while maintaining the values $\nu_f(u_2wz)$ and $\nu_f(v_2wz)$ for $w \in \{u,v\}^*$ and $z \in Z$.
	
	\begin{enumerate}
	\item First, in each word $z \in Z \subseteq (\Sigma_c \cup R_1)^*$ we replace
	each $R_1$-factor $v$ by $\apx_\ell(v)$ which ensures that $Z \subseteq (\Sigma_c \cup S_1)^*$.

	\item Next consider $u$ and $v$, and
	assume that $u = u_1 u_2$ and $v = v_1v_2$ for some $u_1,v_1 \in \Sigma_f^*$.
	Let us consider $R_1$-factors which cross the factorization $u = u_1 u_2$ or $v = v_1 v_2$, respectively.
	If $u_2$ starts with some state we can factorize $u_1$ and $u_2$ as
	$u_1 = u_3 \tau q_2 \cdots q_{i-1}$ and $u_2 = q_i \cdots q_k u_4$ 
	where $u_3, u_4 \in (\Sigma_c \cup R_1)^*$ and $\tau q_2 \cdots q_k \in R_1$.
	Let $\apx_i(\tau q_2 \cdots q_k) = \tau s' q_i p_{i+1} \cdots p_k \in S_1$.
	We replace $u_1$ by $u_3 \tau s'$ and $u_2$ by $q_i p_{i+1} \cdots p_k u_4$.
	Notice that the length of $u_2$ has not changed (this maintains $|u_2| = |v_2|$)
	and the first state of $u_2$ has not changed either (this maintains the values $\nu_f(u_2 w z)$).
	If $v_2$ starts with some state we do the analogous replacements for $v_1$ and $v_2$.
	
	\item Finally, each $R_1$-factor $v$ in $u_1$, $u_2$, $v_1$ and $v_2$
	is replaced by $\apx_\ell(v)$.
	\end{enumerate}
	
	One can verify that the obtained tuple $(u_2,v_2,u,v,Z)$ is again a linear fooling scheme
	for $\nu_f$ satisfying $\{u_2,v_2\}\{u,v\}^* Z \subseteq \Flat$.

\section{Proof of Lemma~\ref{lem:linear-distance}}

	Suppose that $w = a_1 \cdots a_m$.
	Since $\R_t$ is a right congruence we know that $ua_1 \cdots a_i \mathrel{\R}_t v a_1 \cdots a_i$
	for all $0 \le i \le m$.
	By definition of the look-ahead extension
	the words $e_{\R_t}(uw)$ and $e_{\R_t}(vw)$ have the common suffix
	\[
		s = \binom{a_1}{[u]_{\R_t}} \binom{a_2}{[ua_1]_{\R_t}} \cdots \binom{a_m}{[u a_1 \cdots a_{m-1}]_{\R_t}}.
	\]
	The initial accepting runs of $B$ on $e_{\R_t}(uw)$ and $e_{\R_t}(vw)$ have the form
	\[
		q \xleftarrow{e_{\R_t}(u)} p \xleftarrow{s} q_{\mathit{in}}
		\quad \text{and} \quad
		r \xleftarrow{e_{\R_t}(v)} p \xleftarrow{s} q_{\mathit{in}}
	\]
	and thus $t(uw)$ and $t(vw)$ share the suffix $\out(p \xleftarrow{s} q_0)$.
	This implies
	\[
		\|t(uw),t(vw)\| \le |\out_F(q \xleftarrow{e_{\R_t}(u)} p)| + |\out_F(r \xleftarrow{e_{\R_t}(v)} p)| \le \iml(A) \cdot (|u|+|v|+2),
	\]
	proving the statement.

\section{Proof of Lemma~\ref{lem:short-witness}}

	Assume that $t_1$ and $t_2$ are not adjacent.
	By \cite[Proof of Proposition 1.]{ReutenauerS91}
	there exist words $x,y,z \in \Sigma^*$ and $u_1,u_2,v_1,v_2,w_1,w_2 \in \Omega^*$ such that
	$t_1(x y^k z) = u_1v_1^kw_1$, $t_2(x y^k z) = u_2v_2^kw_2$ for all $k \in \N$, and
	$\sup \{ \|u_1v_1^kw_1,u_2v_2^kw_2\| \mid k \in \N \}= \infty$.
	By the triangle inequality we have 
	\begin{align*}
		\|v_1^kw_1,v_2^kw_2\| &\le \|v_1^kw_1,u_1v_1^kw_1\| + \|u_1v_1^kw_1,u_2v_2^kw_2\| + \|u_2v_2^kw_2,v_2^kw_2\| \\
		&= |u_1| + \|u_1v_1^kw_1,u_2v_2^kw_2\| + |u_2| = \|t_1(x y^k z),t_2(x y^k z)\| + |u_1| + |u_2|.
	\end{align*}
	We prove that $\|v_1^kw_1,v_2^kw_2\| \ge \Omega(k)$
	which implies that $\|t_1(x y^k z),t_2(x y^k z)\| \ge \Omega(k)$.
	If both $v_1 = v_2 = \eps$ then
	\[
		\sup_{k \in \N} \|v_1^kw_1,v_2^kw_2\| = \|w_1,w_2\| < \infty,
	\]
	which contradicts $\sup \{ \|u_1v_1^kw_1,u_2v_2^kw_2\| \mid k \in \N \}= \infty$.
	If $|v_1| \neq |v_2|$ then
	\[
		\|v_1^kw_1,v_2^kw_2\| \ge \big| |v_1^kw_1| - |v_2^kw_2| \big| = \Omega(k).
	\]
	Now assume $|v_1| = |v_2| \ge 1$. Since
	\[
		\|v_1^kw_1,v_2^kw_2\| = |v_1^kw_1| + |v_2^kw_2| - 2|v_1^kw_1 \wedge v_2^kw_2| \ge \Omega(k)
		-2|v_1^kw_1 \wedge v_2^kw_2|
	\]
	it suffices to show that $\sup_k |v_1^kw_1 \wedge v_2^kw_2| < \infty$.
	Towards a contradiction assume that $\sup_k |v_1^kw_1 \wedge v_2^kw_2| = \infty$.
	Then, for every $k \in \N$ there exists $K \in \N$ such that
	$|v_1^Kw_1 \wedge v_2^Kw_2| \ge \max \{ |v_1^kw_1|,|v_2^kw_2| \}$.
	If $|v_1^kw_1| \ge |v_2^kw_2|$ then $v_1^k w_1$ is a suffix of $v_1^Kw_1 \wedge v_2^Kw_2$
	and otherwise $v_2^k w_2$ is a suffix of $v_1^Kw_1 \wedge v_2^Kw_2$.
	This shows that for all $k \in \N$ either $v_1^k w_1$ is a suffix of $v_2^kw_2$, or vice versa,
	and therefore $|v_1^kw_1 \wedge v_2^kw_2| = \min \{|v_1^kw_1|,|v_2^kw_2|\}$.
	Since $|v_1| = |v_2|$ we obtain
	\[
		\|v_1^kw_1,v_2^kw_2\| = |v_1^kw_1| + |v_2^kw_2| - 2\min \{|v_1^kw_1|,|v_2^kw_2|\}
		= |w_1| + |w_2| - 2 \min\{|w_1|,|w_2|\}
	\]
	contradicting $\sup_k \|v_1^kw_1,v_2^kw_2\| = \infty$.

\end{document}